 \numberwithin{equation}{section}
\newtheorem{Thm}{Theorem}[section]
\newtheorem{Cor}[Thm]{Corollary}
\newtheorem{Prop}[Thm]{Proposition}
\newtheorem{Lemma}[Thm]{Lemma}
\theoremstyle{definition}
\newtheorem{Remark}[Thm]{Remark}
\newcommand{\Tr}{\operatorname{Tr}}
\def\Re{\operatorname{Re}}
\def\Im{\operatorname{Im}}
\def\bigO{{\cal O}}
\begin{document}

\allowdisplaybreaks

\renewcommand{\thefootnote}{$\star$}

\newcommand{\arXivNumber}{1508.06734}

\renewcommand{\PaperNumber}{031}

\FirstPageHeading

\ShortArticleName{Random Matrices with Merging Singularities and the Painlev\'e V Equation}

\ArticleName{Random Matrices with Merging Singularities\\ and the Painlev\'e V Equation\footnote{This paper is a~contribution to the Special Issue
on Asymptotics and Universality in Random Matrices, Random Growth Processes, Integrable Systems and Statistical Physics in honor of Percy Deift and Craig Tracy.
The full collection is available at \href{http://www.emis.de/journals/SIGMA/Deift-Tracy.html}{http://www.emis.de/journals/SIGMA/Deift-Tracy.html}}}

\Author{Tom CLAEYS and Benjamin FAHS}
\AuthorNameForHeading{T.~Claeys and B.~Fahs}
\Address{Institut de Recherche en Math\'ematique et Physique, Universit\'e catholique de Louvain,\\
 Chemin du Cyclotron 2, B-1348 Louvain-La-Neuve, Belgium}
\Email{\href{mailto:tom.claeys@uclouvain.be}{tom.claeys@uclouvain.be}, \href{mailto:benjamin.fahs@uclouvain.be}{benjamin.fahs@uclouvain.be}}

\ArticleDates{Received September 08, 2015, in f\/inal form March 18, 2016; Published online March 23, 2016}

\Abstract{We study the asymptotic behavior of the partition function and the correlation kernel in random matrix ensembles of the form
$\frac{1}{Z_n} \big|\det \big( M^2-tI \big)\big|^{\alpha} e^{-n\Tr V(M)}dM$,
where~$M$ is an $n\times n$ Hermitian matrix, $\alpha>-1/2$ and $t\in\mathbb R$,
in double scaling limits where $n\to\infty$ and simultaneously~$t\to 0$. If $t$ is proportional to~$1/n^2$, a transition takes place which can be described in terms of a family of solutions to the Painlev\'e V equation.
These Painlev\'e solutions are in general transcendental functions, but for certain values of~$\alpha$, they are algebraic, which leads to explicit asymptotics of the partition function and the correlation kernel.}

\Keywords{random matrices; Painlev\'e equations; Riemann--Hilbert problems}

\Classification{60B20; 35Q15; 33E17}

\begin{flushright}
\begin{minipage}{70mm}
\it Dedicated to Percy Deift and Craig Tracy\\ on the occasion of their 70th birthdays.
\end{minipage}
\end{flushright}

\renewcommand{\thefootnote}{\arabic{footnote}}
\setcounter{footnote}{0}

 \section{Introduction}
We consider a random matrix model on $n \times n$ Hermitian matrices, def\/ined by a probability measure of the form
\begin{gather} \label{matrixmeasure}
\frac{1}{Z_n} \left|\det \left( M^2-tI \right)\right|^{\alpha} e^{-n\Tr V(M)}dM \qquad \textrm{for} \quad \alpha>-1/2, \quad t\in \mathbb{R}.
\end{gather}
Here $dM=\prod\limits_{1\leq i \leq n} dM_{ii}\prod\limits_{1\leq i<j\leq n} d\Re M_{ij} d\Im M_{ij}$
is the Lebesgue measure on $n\times n$ Hermitian matrices and $Z_n=Z_n(t,\alpha,V)$ is a normalizing constant. The conf\/ining potential $V\colon \mathbb{R} \to \mathbb{R}$ is a real analytic function which increases suf\/f\/iciently fast asymptotically, namely
\begin{gather} \label{Cond:V}
\lim_{x\to\pm\infty}\frac{V(x)}{\log (x^2+1)} = +\infty .
\end{gather}

The measure (\ref{matrixmeasure}) is invariant under unitary conjugation and induces a probability measure on the eigenvalues $x_i$, $i=1,2,\dots, n$, given by
 \begin{gather} \label{Def:eigs}
 \frac{1}{\widehat{Z}_n} \prod_{1\leq i<j \leq n} (x_j-x_i)^2 \prod_{j=1}^n w_n(x_j)dx_j,\qquad w_n(x)=e^{-nV(x)} \big|x^2-t\big|^{\alpha},
 \end{gather}
where
\begin{gather}
\widehat{Z}_n= \widehat{Z}_n(t,\alpha,V)=\int_{-\infty}^{\infty}\cdots\int_{-\infty}^{\infty}\prod_{1\leq i<j \leq n} (x_j-x_i)^2
\prod_{j=1}^n w_n(x_j)dx_j \label{Defn PartitionFunction}
\end{gather}
 is the partition function associated to the ensemble.
It can alternatively be written as the Hankel determinant
\begin{gather}\label{Hankel}
\widehat Z_n(t,\alpha,V)= n!\det\left(\int_{-\infty}^{+\infty}
x^{i+j}w_n(x)dx\right)_{i,j=0,\ldots, n-1}.
\end{gather}

While $\alpha$ is kept f\/ixed and does not depend on the matrix size~$n$, we will be particularly interested in double scaling limits where $t$ is $n$-dependent and approaches $0$ as $n\to\infty$.
In such a case, the weight function $w_n$ has the feature that it has two merging singularities approaching~$0$ as $n\to\infty$.
If $t>0$, the two singularities $\pm\sqrt{t}$ are real; if $t<0$, the singularities $\pm i\sqrt{-t}$ are purely imaginary.
Random matrix ensembles with such merging root-type singularities have applications in fractional Brownian motion with Hurst index $H=0$~\cite{Fyodorov}, and in quantum chromodynamics~\cite{Akemann1, DN}. The partition function~$\widehat Z_n$ appears in the study of impenetrable bosons in a harmonic well as the one body density matrix (up to a~constant)~\cite{Forresterbose}.

\looseness=-1
Although the singularities have no ef\/fect on the macroscopic limiting density of the eigen\-va\-lues as $n\to\infty$, they do have an ef\/fect on the partition function $\widehat Z_n$ and on the microscopic correlations between eigenvalues
near the origin. We will obtain asymptotics for the partition function and for the eigenvalue correlation kernel near the origin, in double scaling limits where $n\to\infty$ and $t\to 0$ simultaneously. The asymptotics for the partition function are described in terms of solutions to the f\/ifth Painlev\'e equation; the limit of the correlation kernel can also be expressed in terms of functions related to the Painlev\'e~V equation. As $t\to 0$ at a fast rate, the limiting kernel degenerates to a Bessel kernel, and as $t\to 0$ at a slow rate, we recover the sine kernel.

\subsection{Statement of results}

The macroscopic large $n$ behavior of the eigenvalues is described by the limit of the mean eigenvalue distribution as $n \to \infty$, which we denote by $\mu_V$, and which is independent of~$t$ and~$\alpha$, for $t$, $\alpha$ bounded.
It is characterized~\cite{Deift, DKM} as the unique equilibrium measure~$\mu_V$ which minimizes
\begin{gather}\label{energy}
\iint \log \frac{1}{|x-y|} d\mu (x) d\mu (y) +\int V(x) d\mu(x),
\end{gather}
over the space of Borel probability measures $\mu$ on $\mathbb R$. The equilibrium measure is alternatively characterized by the Euler--Lagrange variational conditions \cite{SaffTotik}
\begin{gather}
2\int \log|x-y|d\mu_V(y)=V(x)-\ell, \qquad \text{for} \quad x\in \operatorname{supp}\mu_V,\label{EulLag1}\\
2\int \log|x-y|d\mu_V(y)\leq V(x)-\ell, \qquad \text{for} \quad x\in \mathbb{R}{\setminus} \operatorname{supp} \mu_V, \label{EulLag2}
\end{gather}
for some constant $\ell$ depending on $V$.

Throughout the paper we will assume that $V$ is such that the equilibrium measure is supported on a single interval $[a,b]$.
Then, the equilibrium measure can be written in the form \cite{DKM}
\begin{gather}\label{eq density}
d\mu_V(x)=\psi_V(x)dx=\sqrt{(b-x)(x-a)} h(x) dx,\qquad x\in[a,b],
\end{gather}
with $h$ a real analytic function.
We require $V$ to be such that the following generic conditions hold: the density $\psi_V$ is strictly positive on $(a,b)$, it vanishes like a square root at the endpoints~$a$,~$b$ (in other words, $h(x)>0$ for~$x\in[a,b]$), and the variational inequality~(\ref{EulLag2}) is strict on $\mathbb R{\setminus}[a,b]$. If all those assumptions hold, we say that~$V$ is one-cut regular (see~\cite{KM} for a general classif\/ication of singularities). In addition, we require zero to be contained in the interior of the support of $\mu_V$, i.e., $a<0<b$.

If zero were outside of the support of $\mu_V$, no eigenvalues are expected near the origin for large~$n$, and the singularities of the weight will not play a signif\/icant role for large~$n$.
The case where~$0$ is an edge point of the support of $V$ is also interesting, but will not be studied here~-- we refer to~\cite{IKO} for results about the local eigenvalue correlations when a~singularity lies near a~soft edge, and to~\cite{XuZhao} when a singularity approaches a hard
edge in a modif\/ied Jacobi ensemble.

\subsubsection*{Painlev\'e V functions}

In order to describe the asymptotic behavior in the matrix model \eqref{matrixmeasure} for $n$ large and $t$ small, we need a special 1-parameter form of the more general 4-parameter Jimbo--Miwa--Okamoto $\sigma$-form of the f\/ifth Painlev\'e equation (see \cite{Jimbo, JimboMiwa}, and \cite{Forrestertau} for the explicit relation to the usual Painlev\'e~V equation), which is given by
\begin{gather} \label{PainleveV}
(s\sigma''(s))^2=(\sigma(s)-s\sigma'(s)+2(\sigma'(s))^2+2\alpha\sigma'(s))^2-4 (\sigma'(s))^2(\sigma'(s)+\alpha)^2.
\end{gather}
For $\alpha>-1/2$, we are interested in two particular solutions $\sigma_\alpha^+$ and $\sigma_\alpha^-$ to this equation.
The f\/irst one, $\sigma_\alpha^-(s)$, is analytic and real for $s\in (0,+\infty)$, and it has the asymptotics
\begin{gather}\label{asymsigma-}
\sigma_\alpha^-(s)=\begin{cases}
\alpha^2+o(1), & s\to 0,\\
s^{-1+2\alpha}e^{-s}\dfrac{1}{\Gamma(\alpha)^2}\big(1+\mathcal O \big(s^{-1}\big)\big), &s \to +\infty,
\end{cases}
\end{gather}
where $\Gamma$ is Euler's $\Gamma$-function.
The existence of such a solution was proved in \cite{CIK}.
The second solution $\sigma_\alpha^+(s)$ is analytic and such that $\sigma_\alpha^+(s)+\frac{\alpha s}{2}$ is real-valued for $s\in -i\mathbb{R}_{>0}$, satisfying in addition:
\begin{gather*} \nonumber
\sigma_\alpha^+(s)= \begin{cases}
\alpha^2+o(1),& s\to -i0^+, \\
\dfrac{\alpha^2}{2}-\dfrac{\alpha s}{2}+\mathcal O\big(|s|^{-1}\big), & s \to -i\infty.
\end{cases} \end{gather*}
Its existence was proved in \cite{CK}. It should be noted that uniqueness of the solutions $\sigma_\alpha^\pm$ satisfying the above asymptotic conditions was not proven in \cite{CIK, CK}.
Those Painlev\'e solutions $\sigma_\alpha^-$ and $\sigma_\alpha^+$ can be characterized in terms of Riemann--Hilbert (RH) problems; we will do this in Section \ref{section: PV} for $\sigma_\alpha^-$ and in Section \ref{sec: PV2} for $\sigma_\alpha^+$. For general values of $\alpha>-1/2$, $\sigma_\alpha^\pm$ are transcendental functions, but there are special values of $\alpha$ where they are algebraic.
For $\sigma_\alpha^-$, this is the case if $\alpha$ is an integer, while for~$\sigma_\alpha^+$ only if~$\alpha$ is an even integer. For those values of $\alpha$, $\sigma_\alpha^\pm$ can be constructed via a recursive procedure, involving Schlesinger type transformations~\cite{FIKN}, which we will explain in detail in Sections~\ref{sec: rec-} and~\ref{sec: rec+}.

For $\alpha=0,1,2$, the expressions are simple and we have
\begin{gather}
\sigma_0^\pm(s)=0,\label{sigma0-intro}\\
\sigma_1^-(s)=\frac{s}{e^s-1},\label{sigma1-intro}\\
\sigma_2^\pm(s)=s\frac{-2+e^s(2-2s+s^2)}{1-e^s(2+s^2)+e^{2s}}.\label{sigma2-intro}
\end{gather}

\subsubsection*{Asymptotics for the partition function}

\begin{Thm}\label{Thm PartitionFunction} Let $V$ be real analytic, satisfying \eqref{Cond:V} and one-cut regular with $0$ in the interior of the support $[a,b]$ of the equilibrium measure $\mu_V$. Denote $z_0=\sqrt{t}$ when $t$ is positive and $z_0=i\sqrt{-t}$ when $t$ is negative. Define
\begin{gather*}
s_{n,t}=-2\pi i n\int_{-z_0}^{z_0}h(s)\left((s-a)(b-s)\right)^{1/2}ds,
\end{gather*}
with $h$ defined in~\eqref{eq density} and
where the square root is taken to be positive at $0$ and analytic in a~neighbourhood of~$0$, in such a way that~$s_{n,t}$ is positive when $t<0$ and negative imaginary when~$t>0$. We assume that $\alpha>-1/2$ and $t<0$ or that $\alpha>0$ and $t>0$. Then, the partition function $\widehat{Z}_n(t)$ defined in~\eqref{Defn PartitionFunction} satisfies
\begin{gather}
\log \widehat Z_n(t,\alpha,V)= \log \widehat Z_n(0,\alpha,V)+\int_0^{s_{n,t}}\frac{\sigma_\alpha^\pm (s)-\alpha^2}{s}ds+\frac{\alpha s_{n,t}}{2}\nonumber\\
\hphantom{\log \widehat Z_n(t,\alpha,V)=}{}
+\frac{n\alpha}{2}(V(z_0)+V(-z_0)-2V(0))+\mathcal O\big(|t|^{1/2}\big) +{\mathcal O}\big(n^{-1}\big)
,\label{as Znthm}
\end{gather}
with uniform error terms as $n\to \infty$ and $t\to 0$, $\pm t>0$.
\end{Thm}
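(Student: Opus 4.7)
My plan is to combine a differential identity for $\log\widehat Z_n(t,\alpha,V)$ with a Deift--Zhou steepest descent analysis of the Riemann--Hilbert problem $Y$ for the orthogonal polynomials associated to the weight $w_n(x)=e^{-nV(x)}|x^2-t|^{\alpha}$. First, I would derive an identity expressing $\partial_t\log\widehat Z_n$ as an algebraic function of the entries of $Y(\pm z_0)$ and their derivatives. This follows by differentiating the Hankel determinant representation~\eqref{Hankel}: the result is an integral of $(x^2-t)^{-1}w_n(x)$ against the reproducing kernel, which the Christoffel--Darboux formula rewrites as a finite combination of entries of $Y$ at $\pm z_0$.

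The core analytic step is the steepest-descent analysis of the $Y$-problem. After the $g$-function transformation $Y\to T$ associated to $\mu_V$, the opening of lenses $T\to S$ along $[a,b]$, and the construction of the global outer parametrix $P^{(\infty)}$, one must build local parametrices. At the soft edges $a,b$ these are the standard Airy parametrices. At the origin, where the two singularities of $w_n$ merge, the local parametrix is constructed from a model Riemann--Hilbert problem whose solution encodes the Painlev\'e~V transcendent $\sigma_\alpha^\pm$: this is exactly the model problem solved in~\cite{CIK} for $t<0$ (yielding $\sigma_\alpha^-$) and in~\cite{CK} for $t>0$ (yielding $\sigma_\alpha^+$). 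A conformal change of variable from a neighbourhood of~$0$ in the $x$-plane to the model $s$-plane sends $\pm z_0$ to points whose separation is essentially $s_{n,t}$, so that the external parameter of the model is identified with $s=s_{n,t}$. The remainder $R$ solves a small-norm RH problem and contributes errors of order $\mathcal O\big(n^{-1}\big)$, together with an $\mathcal O\big(|t|^{1/2}\big)$ error coming from subleading terms in the matching on the boundary of the disk around~$0$.

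Substituting the asymptotic expansion of $Y$ into the differential identity yields, after cancellation of the outer-parametrix contribution and of the explicit $g$-function phase, an $s$-derivative identity of the schematic form
\begin{gather*}
s\,\partial_s\log\widehat Z_n=\sigma_\alpha^\pm(s)-\alpha^2+\frac{\alpha s}{2}+\text{(smooth local terms in $t$)}+\mathcal O\big(|t|^{1/2}\big)+\mathcal O\big(n^{-1}\big),
\end{gather*}
in which $\sigma_\alpha^\pm(s)$ arises directly from the isomonodromic characterization of the model problem of~\cite{CIK, CK} as the logarithmic $s$-derivative of its tau-determinant. Integration from $s=0$ (i.e., $t=0$) to $s=s_{n,t}$ then reproduces~\eqref{as Znthm}: the subtraction $\sigma_\alpha^\pm(s)-\alpha^2$ is exactly what renders the integrand integrable at the lower endpoint via $\sigma_\alpha^\pm(s)=\alpha^2+o(1)$ as $s\to 0$, while the smooth local contribution integrates (after Taylor-expanding $V$ around $0$ and using $z_0^2=t$) to $\frac{n\alpha}{2}(V(z_0)+V(-z_0)-2V(0))$ up to the stated $\mathcal O\big(|t|^{1/2}\big)$ error.

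The principal obstacle is to construct the Painlev\'e V parametrix so that it matches the outer parametrix \emph{uniformly} in $s_{n,t}$ across the full range $0\leq|s_{n,t}|<\infty$: the jump of $R$ on the boundary of the disk around~$0$ must be of size $\mathcal O\big(n^{-1}\big)$ uniformly, which requires fine control of $\sigma_\alpha^\pm$ and of the associated model matrix both as $s\to 0$ and as $|s|\to\infty$, where the Bessel and sine kernel regimes emerge respectively. A secondary technical point is the separate treatment of the two signs of $t$: the singularities $\pm z_0$ move from purely imaginary to real, which dictates the branch choice of the square root defining $s_{n,t}$, selects between $\sigma_\alpha^-$ and $\sigma_\alpha^+$, and requires appropriate deformation of the lens contours of the $S$-problem near~$\pm z_0$.
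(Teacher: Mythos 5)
Your proposal matches the paper's proof in all essential respects: a differential identity expressing $\partial_t\log\widehat Z_n$ through entries of $Y$ at $\pm z_0$, the Deift--Zhou analysis $Y\to T\to S$ with Airy parametrices at $a,b$ and a Painlev\'e~V parametrix at the origin built from the model problems of~\cite{CIK} (for $t<0$) and~\cite{CK} (for $t>0$), identification of the local model parameter with $s_{n,t}$ via a conformal map, uniform small-norm estimates for $R$, and integration of the resulting identity to recover the $V$-term and the $\sigma_\alpha^\pm$-integral. The paper derives the differential identity from $\widehat Z_n=\prod\kappa_j^{-2}$ rather than directly from the Hankel form, and integrates in $t$ rather than $s$, but these are purely presentational variants of the same computation.
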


\begin{Remark}
\label{remark improvement theorem}
If we set $\widehat s_{n,t}=-4\pi i nz_0 \psi_V(0)$, we have $\widehat s_{n,t}=s_{n,t}\left(1+\bigO(|t|)\right)$ as $n\to\infty$, $t\to 0$. If $t>0$, it follows from equation \eqref{int sigma} below that one can replace $s_{n,t}$ by $\widehat s_{n,t}$ in~\eqref{as Znthm} without modifying the error terms.
Doing the same when $t<0$ induces an error term of order $\mathcal O (n|t|^{3/2})$.
\end{Remark}
\begin{Remark}
We believe that the expansion \eqref{as Znthm}, with error term $o(1)$, holds for $t>0$ and $-1/2<\alpha<0$ as well, but because of technical obstacles, we do not prove the result in this case.
\end{Remark}
\begin{Remark}
The asymptotic behavior for $\widehat Z_n(t,\alpha,V)$ is described in terms of the Painlev\'e V solutions $\sigma_\alpha^\pm$ and in terms of the partition function $\widehat Z_n(0,\alpha,V)$, which corresponds to a weight function which has only one singularity at $0$. Asymptotics for $\widehat Z_n(0,\alpha,V)$ were obtained in \cite{Krasovsky} in the case where $V$ is quadratic, an important special case which we will return to later in this section.
For general $V$, we have not been able to f\/ind explicit asymptotic expansions in the literature for $\log\widehat Z_n(0,\alpha,V)$, up to decaying terms as $n\to\infty$.
\end{Remark}
\begin{Remark}
By \eqref{sigma0-intro}--\eqref{sigma2-intro}, the integral appearing in \eqref{as Znthm} can be computed explicitly for $\alpha=0,1,2$. We have
\begin{gather}
\int_0^{s_{n,t}}\frac{\sigma^\pm_0(s)}{s}ds=0,\\
\int_0^{s_{n,t}}\frac{\sigma^-_1(s)-1}{s}ds=\log \left(2\sinh \frac{s_{n,t}}{2}\right)-\frac{s_{n,t}}{2}-\log s_{n,t},\label{sigma alpha 1}\\
\int_0^{s_{n,t}}\frac{\sigma^\pm_2(s)-4}{s}ds=\log\left( 4\sinh^2 \frac{s_{n,t}}{2}-s_{n,t}^2\right)-s_{n,t}-4\log s_{n,t}+\log12.\label{sigma alpha 2}
\end{gather}
\end{Remark}

\begin{Remark}Our proofs of Theorems~\ref{Thm PartitionFunction} and~\ref{thm lim kernel} are based on an asymptotic analysis of the orthogonal polynomials on the real line def\/ined by~(\ref{orthogonalpolynomials}), using RH methods~\cite{Deift, Deiftetal, Deiftetal2}. The asymptotic analysis for those orthogonal polynomials shows similarities to the one performed in~\cite{CIK, CK} (following the general method from~\cite{DIK}) for orthogonal polynomials on the unit circle with respect to weights with merging or emerging singularities, where they were used to obtain asymptotics for Toeplitz determinants. Nevertheless, there is no known way to directly derive the asymptotics for the Hankel determinant~\eqref{Hankel}, with respect to the $n$-dependent weight~$w_n$ supported on the full real line, from the asymptotics for Toeplitz determinants.
\end{Remark}

\subsubsection*{Asymptotics for the correlation kernel near~0}

 Denote by $p_j^{(n)}$ the degree $j$ orthonormal polynomials with respect to the weight $w_n(x)$ (def\/ined in \eqref{Def:eigs}), characterized by
\begin{gather}\label{orthogonalpolynomials}
\int_{\mathbb{R}} p_j^{(n)}(x)p_k^{(n)}w_n(x)dx=\delta_{jk}= \bigg{\{}\begin{matrix} 0 \quad \textrm{for $j\neq k$,} \\ 1 \quad \textrm{for $j=k$,} \end{matrix}
\end{gather}
with positive leading coef\/f\/icient $\kappa_j^{(n)}$.
The eigenvalue density~(\ref{Def:eigs}) characterizes a determinantal point process with correlation kernel given by
\begin{gather*}
K_n(x,y)=\sqrt{w_n(x)w_n(y)}\sum_{j=0}^{n-1} p_j^{(n)}(x)p_j^{(n)}(y),
\end{gather*}
or alternatively
\begin{gather}\label{correlationkernelCD}
K_n(x,y)=\sqrt{w_n(x)w_n(y)}\frac{\kappa^{(n)}_{n-1}}{\kappa^{(n)}_n} \frac{p_n^{(n)} (x)p_{n-1}^{(n)}(y)-p_{n-1}^{(n)}(x)p_n^{(n)}(y)}{x-y},
\end{gather}
by the Christof\/fel--Darboux formula. Note that the orthogonal polynomials and the correlation kernel $K_n$ depend on the parameter~$t$ in the weight~$w_n$.
This paper is concerned with the microscopic large $n$ behavior of the eigenvalues near $0$. We will study the scaled correlation kernel $\frac{1}{cn}K_n\left(\frac{u}{cn},\frac{v}{cn}\right)$ for a suitable choice of $c>0$, and obtain asymptotics for it as $n\to\infty$ and $t\to 0$.
We will see that a transition in the asymptotics for the kernel takes place if $t$ is of the order $n^{-2}$, or equivalently if the distance between the two singularities in the weight is of the order~$n^{-1}$, which is of the same order as the typical distance between consecutive eigenvalues.

Local scaling limits near the origin for the correlation kernel $K_n$ are well understood for $t\neq 0$ f\/ixed and for $t=0$, as $n\to\infty$.
If $t\neq 0$ is independent of $n$, then, as a slight generalization of results obtained in \cite{Deiftetal, Deiftetal2}, we have for $u,v\in\mathbb R$ that
\begin{gather}
 \lim_{n \rightarrow \infty} \frac{1}{\psi_V(0)n}K_n\left(\frac{u}{ \psi_V(0)n},\frac{v}{\psi_V(0)n}\right)=\mathbb K^{\sin}(u,v)= \frac{\sin \pi(u-v)}{\pi(u-v)}\label{sinekernel}.
\end{gather}
The convergence is uniform for~$u$,~$v$ in compact subsets of~$\mathbb R$.
If $t=0$, it was proved in~\cite{2} that for~$u$,~$v$ in bounded subsets of $(0,\infty)$ that
\begin{gather}
 \lim_{n \rightarrow \infty} \frac{1}{\psi_V(0)n}K_n\left(\frac{u}{\psi_V(0)n},\frac{v}{\psi_V(0)n}\right)=\mathbb K_{\alpha}^{\textrm{Bessel}}(u,v)\nonumber\\
 \qquad{} = \pi\sqrt{u v} \frac{J_{\alpha+\frac{1}{2}}(\pi u)J_{\alpha-\frac{1}{2}}(\pi v)-J_{\alpha-\frac{1}{2}}(\pi u)J_{\alpha+\frac{1}{2}}(\pi v)}{2(u-v)}, \label{besselkernel}
\end{gather}
where $J_{\nu}$ is the Bessel function of order~$\nu$ (see \cite{NIST} for a reference on Bessel functions and other special functions which appear throughout the paper). The convergence is uniform for~$u$,~$v$ in compact subsets of $(0,\infty)$.

We obtain large $n$ asymptotics for the correlation kernel near the origin as $n\to\infty$ for $t$ small. Of particular interest will be the double scaling limit where $n\to\infty$ and simultaneously~$t\to 0$ in such a way that~$n^2t$ tends to a non-zero constant. This will lead us to a new family of limiting kernels which are built out of functions associated to the Painlev\'e V equation.

\begin{Thm}\label{thm lim kernel} Let $V$ be real analytic, satisfying~\eqref{Cond:V} and one-cut regular with $0$ in the interior of the support $[a,b]$ of the equilibrium measure $\mu_V$. We denote $\psi_V$ for the density of $\mu_V$, given by~\eqref{eq density}, and define
\begin{gather}\label{taunt}
\tau_{n,t}=16\pi^2\psi_V(0)^2n^2t,
\end{gather}
for $t$ real.
\begin{enumerate}\itemsep=0pt
\item[$1.$] As $n\to \infty$ and simultaneously $t\to 0$ in such a way that $\tau_{n,t}\to \pm\infty$, we have
\begin{gather}\label{limit kernelinf}
\lim\frac{1}{\psi_V(0)n}K_{n}\left(\frac{u}{\psi_V(0)n}, \frac{v}{\psi_V(0)n}\right)= \mathbb K^{{\sin}}(u,v),
\end{gather}
for any $u,v\in\mathbb R$, with $\mathbb K^{{\sin}}$ as in \eqref{sinekernel}.
\item[$2.$] As $n\to \infty$ and simultaneously $t\to 0$ in such a way that $\tau_{n,t}\to 0$, we have
\begin{gather}\label{limit kernel0}
\lim\frac{1}{\psi_V(0)n}K_{n}\left(\frac{u}{\psi_V(0)n}, \frac{v}{\psi_V(0)n}\right)= \mathbb K_\alpha^{{\rm Bessel}}(u,v),
\end{gather}
for any $u,v\in\mathbb R{\setminus}\{0\}$, with $\mathbb K_\alpha^{{\rm Bessel}}$ as in~\eqref{besselkernel}.
\item[$3.$] As $n\to \infty$ and simultaneously $t\to 0$ in such a way that $\tau_{n,t}\to \tau \neq 0$, there exists a~limiting kernel $\mathbb K_{\alpha}^{{\rm PV}}$ such that
\begin{gather}\label{limit kernel}
\lim\frac{1}{\psi_V(0)n}K_{n}\left(\frac{u}{\psi_V(0)n}, \frac{v}{\psi_V(0)n}\right)= \mathbb K_{\alpha}^{{\rm PV}}(u,v;\tau),
\end{gather}
for any $u,v\in\mathbb R$ if $\tau<0$, and for any $u,v\in\mathbb R{\setminus}\{\pm\sqrt{\tau}/4\}$ if $\tau>0$.
\end{enumerate}
The limits are uniform for $u$, $v$ in compact subsets of~$\mathbb R$.
\end{Thm}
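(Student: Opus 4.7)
The approach I would take is a Deift--Zhou Riemann--Hilbert (RH) steepest descent analysis of the orthogonal polynomials $p_n^{(n)}$, $p_{n-1}^{(n)}$ associated to the $n$-dependent weight $w_n$. By the Christoffel--Darboux formula \eqref{correlationkernelCD}, pointwise asymptotics for these polynomials near the origin translate directly into the three claimed kernel limits. The plan is to apply the standard chain of transformations $Y\mapsto T\mapsto S\mapsto R$: normalize at infinity using the $g$-function built from $\mu_V$, open lenses across $(a,b)$ to turn the oscillatory jump into an exponentially decaying one off the support, and finally remove a global parametrix $P^{(\infty)}$ together with local parametrices $P^{(a)}$, $P^{(b)}$, $P^{(0)}$ on small disks around $a$, $b$, and $0$.

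At the soft edges $a$, $b$ the parametrices are the usual Airy constructions. The central ingredient is the parametrix $P^{(0)}$ near the origin, built from the model RH problem $\Psi(\zeta;s)$ associated with $\sigma_\alpha^{\pm}$ (Sections \ref{section: PV} and \ref{sec: PV2}) evaluated at $s=s_{n,t}$. Since $s_{n,t}=-4\pi i n z_0\psi_V(0)(1+\mathcal{O}(t))$, one checks that $s_{n,t}^2=-\tau_{n,t}(1+\mathcal{O}(t))$, so $s_{n,t}$ is negative imaginary for $t>0$ (matching the domain of $\sigma_\alpha^+$) and positive real for $t<0$ (matching that of $\sigma_\alpha^-$). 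A conformal change of variable $\zeta=\zeta_n(z)$, linear to leading order with slope proportional to $n\psi_V(0)$, sends the singularities $\pm z_0$ to points of order $\sqrt{|\tau_{n,t}|}$, which is the natural scale of the $\Psi$-model. With this choice $P^{(0)}$ has the same jumps as $S$ on $D_0$ and matches $P^{(\infty)}$ on $\partial D_0$ up to an error of order $n^{-1}$ uniformly in the double scaling regime.

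After the standard small-norm argument gives $R=I+\mathcal{O}(n^{-1})$, unwinding the transformations expresses $\frac{1}{\psi_V(0)n}K_n\bigl(\frac{u}{\psi_V(0)n},\frac{v}{\psi_V(0)n}\bigr)$ in terms of rows of $\Psi(\,\cdot\,;s_{n,t})$ evaluated at $\zeta_n(u/(\psi_V(0)n))$ and $\zeta_n(v/(\psi_V(0)n))$. Letting $\tau_{n,t}\to\tau\neq 0$, the model parameter $s_{n,t}$ tends to a definite limit and the rescaled arguments converge, producing a well-defined limiting kernel that depends only on $u$, $v$, $\tau$ and $\alpha$; this is $\mathbb{K}_\alpha^{\rm PV}$, establishing part 3. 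For parts 1 and 2 one invokes the known degenerations of $\Psi(\,\cdot\,;s)$: as $|s|\to\infty$ along the prescribed rays, the asymptotics of $\sigma_\alpha^{\pm}$ imply that $\Psi$ collapses to a trivial jump structure, whence one recovers $\mathbb{K}^{\sin}$ by the same argument that gives the bulk sine kernel without a singularity; as $s\to 0$, the model degenerates to the Bessel parametrix with parameter $\alpha$, yielding $\mathbb{K}_\alpha^{\rm Bessel}$.

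The main technical difficulty will be to push the steepest descent error to $o(1)$ \emph{uniformly} across the full transition. In part 1 with $\tau_{n,t}\to\pm\infty$ slowly, $P^{(0)}$ is not close to any fixed limiting object, so one must combine the decay of $\sigma_\alpha^{\pm}(s)-\alpha^2$ and of the corresponding $\Psi$ towards their asymptotic form (from \cite{CIK, CK}) with the matching on $\partial D_0$; in part 2, the $s\to 0$ convergence of $\Psi$ to the Bessel parametrix must be quantitative enough that it combines with the global $\mathcal{O}(n^{-1})$ error to give $o(1)$. A secondary subtlety is the excluded points $\pm\sqrt{\tau}/4$ in part 3 for $\tau>0$: these correspond precisely to the images of $\pm z_0$ under the rescaling, where the weight $w_n$ vanishes or blows up depending on the sign of $\alpha$, so the pointwise limit requires the argument to stay away from them.
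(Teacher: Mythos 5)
Your proposal follows essentially the same route as the paper: the Deift--Zhou chain $Y\mapsto T\mapsto S\mapsto R$ with Airy parametrices at the soft edges and a Painlev\'e~V local parametrix at the origin built from $\Psi^{\pm}(\cdot;s_{n,t})$, a small-norm estimate $R=I+\mathcal{O}(n^{-1})$ uniform over the full range of $s_{n,t}$ (which the paper secures via Proposition~3.1 of \cite{CIK} and its analogue in \cite{CK}), and the derivation of the intermediate identity $\frac{1}{n\psi_V(0)}K_n(x_n,y_n)=\mathbb{K}_\alpha^{\rm PV}(\lambda(x_n),\lambda(y_n);-s_{n,t}^2)+\mathcal{O}(n^{-1})$, from which parts~1 and~2 follow by the degenerations of $\mathbb{K}_\alpha^{\rm PV}$ as $\tau\to\pm\infty$ and $\tau\to 0$. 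You also correctly identify the principal technical points the paper handles: the uniformity of the parametrix matching across the transition, and the need for the $\tau\to\pm\infty$ sine-kernel degeneration to hold with only $u-v$ converging, since $\lambda(x_n)-\lambda(y_n)=\pi(u-v)+\mathcal{O}(n^{-1})$.
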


\begin{Remark}
The limiting kernel $\mathbb K_{\alpha}^{{\rm PV}}(u,v;\tau)$ has the form
\begin{gather} \label{limker}
\mathbb K_{\alpha}^{{\rm PV}}(u,v;\tau)=\frac{\Phi_1(\pi v;\tau)\Phi_2(\pi u;\tau)-\Phi_1(\pi u;\tau)\Phi_2(\pi v;\tau)}{2\pi i (u-v)},
 \end{gather}
 and the functions $\Phi_j$, $j=1,2$, will be def\/ined in Section~\ref{section: PV} for $\tau<0$ and in Section \ref{sec: PV2} for $\tau>0$.
 The easiest way to def\/ine $\Phi_1$ and $\Phi_2$ is via a RH problem which appeared in~\cite{CIK, CK}. Alternatively, they can be characterized as special solutions to a Lax pair which is related to the f\/ifth Painlev\'e equation.
For $\tau<0$, $\Phi_1(u;\tau)$ and $\Phi_2(u;\tau)$ are analytic as functions of $u$ in a neighborhood of~$\mathbb R$; for~$\tau>0$, they are analytic functions of $u$ in a neighborhood of~$\mathbb R{\setminus} \{\pm \frac{\sqrt{\tau}}{4}\}$; they are analytic as functions of~$\tau$ for~$\tau$ in a~neighborhood of $\mathbb R{\setminus}\{0\}$ and continuous in~$\tau$ at~$0$. For $\tau=0$, the limiting kernel $\mathbb K_\alpha^{\rm PV}$ is a special case of a large family of limiting
kernels which appeared in~\cite{Atkin}.
\end{Remark}

\begin{Remark}
In general, the kernel $\mathbb K_{\alpha}^{{\rm PV}}$ is a transcendental function of~$u$,~$v$,~$\tau$. However,
 for $\tau<0$ and $\alpha\in\mathbb N$, and for $\tau>0$ and $\alpha$ even, the kernel $\mathbb K_{\alpha}^{{\rm PV}}$ is algebraic and can be constructed explicitly by a recursive procedure. We will present this procedure in Section~\ref{sec: rec-} for $\tau<0$ and in Section~\ref{sec: rec+} for $\tau>0$.
\end{Remark}

\begin{Remark} \label{limthm}
The limiting kernel $\mathbb K_\alpha^{{\rm PV}}(u,v;\tau)$ degenerates to the sine and Bessel kernels for large and small values of $\tau$. Indeed, we will show that
\begin{gather}\label{lim sine}
\lim_{\tau\to\pm\infty} \mathbb K_{\alpha}^{{\rm PV}}(u,v;\tau)=\mathbb K^{\sin}(u,v),\qquad u,v\in\mathbb R,
\end{gather} and that
\begin{gather}\label{lim Bessel}
\lim_{\tau\to 0} \mathbb K_{\alpha}^{{\rm PV}}(u,v;\tau)=\mathbb K_{\alpha}^{{\rm Bessel}}(u,v),\qquad u,v\in\mathbb R{\setminus}\{0\}.
\end{gather}
This implies that~\eqref{limit kernelinf}, \eqref{limit kernel0}, and \eqref{limit kernel} are consistent: letting $\tau\to \pm\infty$ in \eqref{limit kernel}, we recover~\eqref{limit kernelinf}, and letting $\tau\to 0$, we get~\eqref{limit kernel0}.
\end{Remark}

\subsection{Consequences and applications}

\subsubsection*{Asymptotics for Toeplitz determinants}

In \cite{CIK}, large $n$ asymptotics were obtained for Toeplitz determinants
\begin{gather*}
D_n(f)=\det (f_{j-k})_{j,k=0}^{n-1}, \qquad f_j=\frac{1}{2\pi} \int_0^{2\pi} f\big(e^{i\theta}\big)e^{-ij\theta}d\theta
\end{gather*}
 with an emerging Fisher--Hartwig singularity, depending on parameters~$\alpha$,~$\beta$. In the special case $\beta=0$, the weight has the form
\begin{gather*}
f_t(z)=\big(z-e^t\big)^\alpha\big(z-e^{-t}\big)^{\alpha}z^{-\alpha}e^{-\pi i \alpha}e^{V(z)}, \qquad z=e^{i\theta}, \quad \theta \in [0,2\pi)
\end{gather*}
for $\Re\alpha>-\frac{1}{2}$, $t>0$, and where $V(z)=\sum\limits_{k=-\infty}^{+\infty} V_kz^k$ is analytic on an annulus containing the unit circle. The singularities $e^{\pm t}$ approach the unit circle as $t\to 0$, and form a Fisher--Hartwig type singularity if $t=0$. Asymptotics for Toeplitz determinants with weight functions of this form were also obtained in the context of the 2d Ising model~\cite{WMTB}, see~\cite{DIK2} for a review of Toeplitz determinants and the Ising model.
 Theorem~1.1 in~\cite{CIK} states that
\begin{gather*}
\log D_n(f_t)=nV_0+\alpha n t+\sum_{k=1}^{\infty}k\left(V_k-\alpha\frac{e^{-tk}}{k}\right)\left(V_{-k}-\alpha\frac{e^{-tk}}{k}\right)\nonumber\\
\hphantom{\log D_n(f_t)=}{} +\log\frac{G(1+\alpha)^2}{G(1+2\alpha)}+\alpha^2\log 2nt+\int_0^{2nt}\frac{\sigma_\alpha^-(x)-\alpha^2}{x}dx+o(1),
\end{gather*}
as $n\to\infty$ and simultaneously $t\to 0$ where $G$ is Barnes's G-function.
If $\beta=0$ and $\alpha=1,2$, we can substitute~(\ref{sigma alpha 1}) and~(\ref{sigma alpha 2}), here with $s_{n,t}=2nt$, and obtain the more explicit asymptotic expansions
\begin{gather*}
\log D_n(f_t)=nV_0+\sum_{k=1}^{\infty} kV_kV_{-k}
-\sum_{k=1}^\infty (V_k+V_{-k})e^{-tk}
+\log \frac{\sinh nt}{\sinh t} +t +o(1),
\end{gather*}
for $\alpha=1$, and
\begin{gather*}
\log D_n(f_t)= nV_0+\sum_{k=1}^{\infty} kV_kV_{-k}
-2\sum_{k=1}^\infty (V_k+V_{-k})e^{-tk} + 4t\\
\hphantom{\log D_n(f_t)=}{}
-2\log \big(2\sinh^2 t\big) +\log \big(\sinh ^2nt-n^2t^2\big) +o(1),
\end{gather*}
for $\alpha=2$,
as $n\to\infty$ and $t\to 0$. With more ef\/fort, we can obtain explicit asymptotic expansions for any integer $\alpha$.

In \cite{CK}, asymptotics were obtained for Toeplitz determinants $D_n(f_t)$ with merging Fisher--Hartwig singularities, depending on parameters $\alpha_1$, $\alpha_2$, $\beta_1$, $\beta_2$. Setting $\alpha_1=\alpha_2=\alpha/2$ and $\beta_1=\beta_2=0$, the weight has the form \begin{gather*}
f_t(z)=e^{V(z)}\big|z-e^{it}\big|^{\alpha}\big|z-e^{i(2\pi-t)}\big|^{\alpha},
\end{gather*}
where $\Re \alpha>-\frac{1}{2}$, $t\in (0,\pi)$, and with $V$ again analytic on an annulus containing the unit circle. Then, the weight function has two Fisher--Hartwig type singularities if $t>0$ which merge to a~single one as $t\to 0$. In~\cite[Theorem~1.5]{CK}, the following result is stated:
\begin{gather*}
\log D_n(f_t)= \log D_n(f_0)+\int_0^{-2int}\frac{1}{s}\left(\sigma_{\rm CK}(s)-\frac{\alpha^2}{2}\right) ds\\
\hphantom{\log D_n(f_t)=}{} -\frac{\alpha^2}{2}\log \frac{\sin t}{t}-\frac{\alpha}{2}(V(e^{it})+V(e^{-it})-2V(1))+o(1)
\end{gather*}
as $n\to \infty$ and $t\to 0$. The function $\sigma_{\rm CK}$ is related to $\sigma_\alpha^+$ by the formula
\begin{gather*}
\sigma_{\rm CK}(s)=\sigma_\alpha^+(s)-\frac{\alpha^2}{2}+\frac{\alpha s}{2}.
\end{gather*}
 For $\alpha=2$, we can substitute \eqref{sigma alpha 2} (with $s_{n,t}=-2int$) and obtain
\begin{gather*}
\log D_n(f_t)= nV_0+\sum_{k=1}^{\infty} kV_kV_{-k}- 2\log 2t\sin t +\log \left(n^2t^2-\sin^2nt\right) \\
\hphantom{\log D_n(f_t)=}{} -\left(V\left(e^{it}\right)+V\left(e^{-it}\right)-2V_0\right) +o(1),
\end{gather*}
as $n\to\infty$ and $t\to 0$. With more ef\/fort, we can obtain explicit asymptotic expansions for any even $\alpha$.

\subsubsection*{Extreme values of GUE characteristic polynomials}
Let $H$ be a random $n\times n$ GUE matrix, normalized such that
the joint probability distribution of the eigenvalues is given by
\begin{gather*}
\frac{1}{\widehat Z_n^{\rm GUE}}\prod_{1\leq i<j \leq n} (x_j-x_i)^2 \prod_{j=1}^n e^{-2nx_j^2}dx_j,
\end{gather*}
where $\widehat Z_n^{\rm GUE}=Z_n(\alpha=0,V(x)=2x^2)$ is the normalizing constant.
The limiting mean eigenvalue density is then supported on $[-1,1]$ as $n \to \infty$. Def\/ine the (random) characteristic polynomial
\begin{gather*}
P_n(x)=\det (xI-H).
\end{gather*}
The average of products of powers of characteristic polynomials of the form $\mathbb E\left(\prod\limits_{j=1}^2 |P_n(u_j)|^{\alpha}\right)$, given $u_1,u_2\in\mathbb R$, can be expressed as
\begin{gather} \label{pnZn-0}
\mathbb E\left(\prod _{j=1}^2 |P_n(u_j)|^{\alpha}\right)= \frac{1}{\widehat Z_n^{\rm GUE}}\int_{\mathbb R^n}\!\prod_{1\leq i<j \leq n} \! (x_j-x_i)^2 \prod_{j=1}^n |x_j-u_1|^\alpha|x_j-u_2|^\alpha e^{-2nx_j^2}dx_j.\!\!\!
\end{gather}

The following asymptotic results as $n\to\infty$ were obtained by
Krasovsky \cite{Krasovsky} (see also \cite{BH, FF, Garoni} for $\alpha_j$'s integers):
\begin{gather} \label{formula Krasovsky}
\mathbb E \left( \prod_{j=1}^k|P_n(u_j)|^{2\alpha_j}\right)=F\big(n, (u_i,\alpha_i)_{i=1}^k\big) \left( 1+\mathcal O \left( \frac{\log n}{n}\right)\right),
\end{gather}
where $k=1,2$ and
\begin{gather*} F\big(n,(u_i,\alpha_i)_{i=1}^k\big)=\prod _{j=1}^k C(\alpha_j) \big(1-u_j^2\big)^{\alpha_j^2/2}(n/2)^{\alpha_j^2} e^{(2u_j^2-1-2\log 2)\alpha_jn}\!\!\! \prod_{1\leq i<j\leq k}\!\! (2|u_i-u_j|)^{-2\alpha_i \alpha_j},
\\
 C(\alpha)= 2^{2\alpha^2} \frac{G(\alpha+1)^2}{G(2\alpha+1)},
 \end{gather*}
and $G$ is the Barnes G-function.
If $u_1$ and $u_2$ approach each other, we can use Theorem~\ref{Thm PartitionFunction} to obtain asymptotics for~$\mathbb E \left( \prod\limits_{j=1}^2|P_n(u_j)|^{2\alpha_j}\right)$.
Indeed, by~\eqref{pnZn-0} and~\eqref{Defn PartitionFunction}, it follows that
\begin{gather} \label{pnZn}
\mathbb E\left(\prod _{j=1}^2 |P_n(u_j)|^{\alpha}\right)
= \frac{\widehat Z_n\big(t=\frac{(u_1-u_2)^2}{4},\alpha,V(x)=2(x+(u_1+u_2)/2)^2\big)}{\widehat Z_n^{\rm GUE}},
\end{gather}
which is seen to hold after the simple change of variables $y_n= x_n-\frac{u_1+u_2}{2}$.
 Substituting~\eqref{as Znthm} in the numerator, we obtain asymptotics for~$\mathbb E\left(\prod\limits_{j=1}^2 |P_n(u_j)|^{\alpha}\right)$.

This observation can be used to obtain information about extreme values of $|P_n(x)|$ for large $n$. This problem was investigated in \cite{FyodorovSimm}, and in this context the authors needed large~$n$ asymptotics for integrals of the form (see in particular \cite[Section~2]{FyodorovSimm})
\begin{gather}\label{def In}
I_n(\theta,\rho)= \int_{-\theta}^\theta\int_{-\theta}^\theta \mathbb E\left(\prod _{j=1}^2 |P_n(u_j)|^{\alpha}\right)\prod _{j=1}^2e^{-\alpha n \lim\limits_{k\to\infty}\left(\frac{1}{k}\mathbb E \log |P_k(u_j)|\right) } \rho(u_j) du_j,
\end{gather}
with $\theta\in [0,1]$, $\rho$ strictly positive and continuous on $(-\theta,\theta)$, and $\alpha>0$.
The authors note that one can dif\/ferentiate (\ref{formula Krasovsky}) for $k=1$ with respect to $\alpha$ and evaluate at $\alpha=0$ to obtain
\begin{gather}\label{logPn}
\lim_{n\to\infty}\frac{1}{n}\mathbb E( \log |P_n(u)|)=u^2-1/2-\log2,
\end{gather}
since $C(0)=1$, $C'(0)=0$.
In order to obtain asymptotics for $I_n$ as $n\to\infty$, one needs asymptotics for
$\mathbb E\left(\prod\limits_{j=1}^2 |P_n(u_j)|^{\alpha}\right)$, both in the region where~$u_1$ and~$u_2$ are bounded away from each other (in this region, we can use Krasovsky's result~\eqref{formula Krasovsky}) and in the region where~$u_1$ and~$u_2$ are close to each other (in this region, we need to use our expansion~\eqref{as Znthm}).

As a corollary to Theorem~\ref{Thm PartitionFunction} we have
\begin{Cor}\label{corollary}Let $0<\theta<1$, $\alpha>0$, and let $\rho$ be a strictly positive continuous function on $(-\theta,\theta)$.
Then,
\begin{gather*}
I_n(\theta,\rho)=
\begin{cases}
C_1(\alpha) n^{\alpha^2/2} (1+o(1)) &\textrm{for $\alpha^2<2$,}\\
 C_2 n\log n (1+o(1)) & \textrm{for $\alpha^2=2$,}\\
C_3(\alpha) n^{\alpha^2-1} (1+o(1)) & \textrm{for $\alpha^2>2$,}
\end{cases}
\end{gather*}
as $n \to \infty$, where
\begin{gather*}
C_1(\alpha)=\frac{G(1+\frac{\alpha}{2})^4}{G(1+\alpha)^2} \int_{-\theta}^{\theta}\int_{-\theta}^{\theta}\frac{((1-u_1^2)(1-u_2^2))^{\alpha^{2}/8}}{|u_1-u_2|^{\alpha^2/2}}\rho(u_1)\rho(u_2)du_1 du_2,\\
C_2=2 \frac{G(1+\frac{1}{\sqrt{2}})^4}{G(1+\sqrt{2})^2} \int_{-\theta}^\theta \big(1-u^2\big)^{1/2}\rho(u)^{2}du,\\
 C_3(\alpha)=2^{\alpha^2}\frac{G(\alpha+1)^2}{G(2\alpha+1)}\int_{-\theta}^\theta\! \big(1-u^2\big)^{\frac{\alpha^2-1}{2}}\rho(u)^{2}du
 \int_0^{\infty}\!\exp\left(\int_0^{-2 iv}\!\frac{\sigma_\alpha^+(s)-\alpha^2}{s}ds-i\alpha v\right)dv.
\end{gather*}
\end{Cor}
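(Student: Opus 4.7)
The plan is to split the integration region in $I_n(\theta,\rho)$ at an intermediate scale $|u_1-u_2|=\delta_n$ with $n^{-1}\ll \delta_n\ll 1$ (say $\delta_n=n^{-1/2}$), treating the bulk region $|u_1-u_2|>\delta_n$ by Krasovsky's formula~\eqref{formula Krasovsky} and the merging region $|u_1-u_2|\leq\delta_n$ by Theorem~\ref{Thm PartitionFunction}. The crucial preliminary observation is that, by~\eqref{logPn}, the damping factor $\prod_j e^{-\alpha n(u_j^2-1/2-\log 2)}$ in~\eqref{def In} exactly cancels the exponential $n$-growth of $\mathbb E(\prod_j|P_n(u_j)|^\alpha)$ extracted from either asymptotic formula, which is what makes all three cases of the statement finite.

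In the bulk region, applying~\eqref{formula Krasovsky} with $k=2$, $\alpha_1=\alpha_2=\alpha/2$ and performing the exponential cancellation, the integrand of~\eqref{def In} equals
\[
\frac{G(1+\alpha/2)^4}{G(1+\alpha)^2}\,n^{\alpha^2/2}\,\frac{((1-u_1^2)(1-u_2^2))^{\alpha^2/8}}{|u_1-u_2|^{\alpha^2/2}}\rho(u_1)\rho(u_2)(1+o(1))
\]
uniformly on $|u_1-u_2|>\delta_n$. For $\alpha^2<2$ this is integrable at the diagonal and the bulk piece converges to $C_1(\alpha)n^{\alpha^2/2}$ as $\delta_n\to 0$; for $\alpha^2=2$ it diverges as $2\log(1/\delta_n)\int_{-\theta}^\theta(1-c^2)^{1/2}\rho(c)^2\,dc$ up to $O(1)$; for $\alpha^2>2$ it contributes $O(n^{\alpha^2/2}\delta_n^{1-\alpha^2/2})=o(n^{\alpha^2-1})$ as long as $n\delta_n\to\infty$.

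In the merging region, write $c=(u_1+u_2)/2$, $z=(u_1-u_2)/2$ so that $du_1du_2=2\,dc\,dz$, and apply Theorem~\ref{Thm PartitionFunction} with the shifted potential $V_c(x)=2(x+c)^2$; for this potential $\psi_{V_c}(0)=\tfrac{2}{\pi}\sqrt{1-c^2}$ and $V_c(z)+V_c(-z)-2V_c(0)=4z^2$, while $\widehat Z_n(0,\alpha,V_c)$ is furnished by the $k=1$ case of~\eqref{formula Krasovsky} via~\eqref{pnZn}. After the same exponential cancellation the integrand becomes
\[
C(\alpha)(1-c^2)^{\alpha^2/2}(n/2)^{\alpha^2}\exp\!\left[\int_0^{-2iv}\tfrac{\sigma_\alpha^+(s)-\alpha^2}{s}\,ds-i\alpha v\right]\rho(c)^2(1+o(1))
\]
with $v=4n|z|\sqrt{1-c^2}$, where we have used $\widehat s_{n,t}=-4i\pi n z\psi_{V_c}(0)=-2iv$ (see the remark after Theorem~\ref{Thm PartitionFunction}). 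Substituting $dv=4n\sqrt{1-c^2}\,d|z|$ and $C(\alpha)=2^{2\alpha^2}G(\alpha+1)^2/G(2\alpha+1)$, the merging contribution becomes
\[
2^{\alpha^2}\frac{G(\alpha+1)^2}{G(2\alpha+1)}\,n^{\alpha^2-1}\int_{-\theta}^\theta(1-c^2)^{(\alpha^2-1)/2}\rho(c)^2\,dc\;\cdot\int_0^{v_{\max}}\exp[\,\cdots\,]\,dv,
\]
where $v_{\max}\sim n\delta_n$. For $\alpha^2>2$ the $v$-integral converges as $v_{\max}\to\infty$ to the one in $C_3(\alpha)$; for $\alpha^2<2$ it is $O(v_{\max}^{1-\alpha^2/2})$, yielding an overall $o(n^{\alpha^2/2})$.

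Adding the two contributions immediately produces the claimed asymptotic when $\alpha^2\neq 2$. In the critical case $\alpha^2=2$ both pieces contribute at order $n\log n$, and one must verify that their constants agree: from the asymptotic $\sigma_\alpha^+(s)=\alpha^2/2-\alpha s/2+O(|s|^{-1})$ as $s\to -i\infty$ one obtains $\exp[\int_0^{-2iv}(\sigma_\alpha^+(s)-\alpha^2)/s\,ds-i\alpha v]\sim C_\infty v^{-\alpha^2/2}$ for an explicit constant $C_\infty$ which, by a direct computation, matches the Krasovsky prefactor; the merging contribution then becomes $2\cdot\tfrac{G(1+1/\sqrt 2)^4}{G(1+\sqrt 2)^2}n\log(n\delta_n)\int(1-c^2)^{1/2}\rho(c)^2\,dc+O(n)$, and the $\delta_n$-dependences combine as $\log(1/\delta_n)+\log(n\delta_n)=\log n$, yielding $C_2\,n\log n$. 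The principal obstacle is precisely this matching at the critical exponent: extracting $C_\infty$ from the integrated large-$s$ expansion of $\sigma_\alpha^+$ and checking consistency with the bulk logarithmic residue. A secondary point is that the uniformity of Theorem~\ref{Thm PartitionFunction} and of~\eqref{formula Krasovsky} as $c$ ranges over the compact set $|c|\leq\theta<1$ must be invoked so that the multiplicative $o(1)$ errors integrate to $o(1)$ relative to the leading term.
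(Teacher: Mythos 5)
Your proposal is correct and follows essentially the same strategy as the paper: split at an intermediate scale, use Krasovsky's formula away from the diagonal, use Theorem~\ref{Thm PartitionFunction} near the diagonal, note the cancellation coming from~\eqref{logPn}, and assemble. The calculations in both regions (including the identification $\widehat s_{n,t}=-2iv$, $\psi_{V_c}(0)=\tfrac{2}{\pi}\sqrt{1-c^2}$, $V_c(z)+V_c(-z)-2V_c(0)=4z^2$, and the change of variables $dv = 4n\sqrt{1-c^2}\,d|z|$) agree with the paper's equation~\eqref{In2}.

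There is one organizational difference worth noting. You fix the cutoff at $\delta_n=n^{-1/2}$, so that at $\alpha^2=2$ \emph{both} regions contribute a term of order $n\log n$ (each with a factor $\tfrac12\log n$), forcing you to verify that the two leading constants agree, which you correctly attribute to the integrated large-$s$ asymptotics of $\sigma_\alpha^+$ — precisely identity~\eqref{int sigma}. The paper instead leaves the cutoff $t_n\to 0$ ``sufficiently slowly'' (slower than any $n^{-\epsilon}$), so that at $\alpha^2=2$ the bulk contributes only $\mathcal O(n\log(1/\sqrt{t_n}))=o(n\log n)$ and the entire $\log n$ factor comes from the merging region, as in~\eqref{OrderInt}; the constant matching is then never needed explicitly (though identity~\eqref{int sigma} still enters, via the extension~\eqref{formula Krasovsky extension} of Krasovsky's formula). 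Your bookkeeping is more symmetric; the paper's is slightly simpler at the critical exponent. One small point to tighten: you assert that~\eqref{formula Krasovsky} applies ``uniformly on $|u_1-u_2|>\delta_n$'', but as stated~\eqref{formula Krasovsky} is for fixed $u_1,u_2$; the uniformity down to $|u_1-u_2|\gg n^{-1}$ is the content of the paper's~\eqref{formula Krasovsky extension}, which is itself derived from~\eqref{as Znthm} and~\eqref{int sigma}. Since you already use both ingredients, this is a presentational gap rather than a logical one.
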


\begin{Remark}
For $\alpha^2<2$, the main contribution in the integral $I_n(\theta,\rho)$ comes from the asymptotics~\eqref{formula Krasovsky} in the region of integration where $u_1$ and $u_2$ are bounded away from each other. For $\alpha^2>2$, the main contribution comes from the asymptotics in Theorem~\ref{Thm PartitionFunction} in the region of integration where $|u_1-u_2|=\mathcal O(n^{-1})$.
\end{Remark}

\begin{proof}
First, we note that \eqref{formula Krasovsky} can be extended: as $n\to\infty$ and $|u_1-u_2|\to 0$ suf\/f\/iciently slowly such that $n|u_1-u_2|\to\infty$,
\begin{gather} \label{formula Krasovsky extension}
\mathbb E \left( \prod_{j=1}^2|P_n(u_j)|^{\alpha}\right)=F\left(n,u_1,u_2,\frac{\alpha}{2},\frac{\alpha}{2}\right) \left( 1+o(1)\right),
\end{gather}
with the error term uniform for $t_0>|u_1-u_2|>2t_n$ for some $t_0>0$ and $t_n\to 0$ such that $nt_n \to \infty$.
This follows by recalling the identity
\begin{gather}\label{int sigma}
\lim_{s\to -i\infty} \left(\int_0^s \frac{\sigma_\alpha^+ (\widetilde{s})-\alpha^2}{\widetilde s}d\widetilde s+\frac{\alpha s}{2} +\frac{\alpha^2}{2} \log |s|\right)=\log\frac{G\left(1+\frac{\alpha}{2}\right)^4G(1+2\alpha)}{G(1+\alpha)^4},
\end{gather}
from \cite[formula~(1.26)]{CK} and applying it to~\eqref{as Znthm}, which one substitutes into~(\ref{pnZn}).

We now split the integral in \eqref{def In} in two parts: the integral over
\begin{gather}\nonumber
\mathcal A_{\theta, t_n}=\big\{(u_1,u_2)\colon |u_1|,|u_2|<\theta, |u_1-u_2|>2\sqrt{t_n}\big\},
\end{gather}
and the integral over $[-\theta,\theta]^2{\setminus} \mathcal A_{\theta,t_n}$, for some $t_n$ which converges suf\/f\/iciently slowly to~$0$ as \mbox{$n\to\infty$}, slower than~$n^{-2}$.
To compute the contribution of the integral over $\mathcal A_{\theta,t_n}$, we can use~\eqref{logPn},~\eqref{formula Krasovsky}, and~\eqref{formula Krasovsky extension}.
For the contribution of the complement of $\mathcal A_{\theta, t_n}$, we substitute~\eqref{as Znthm} in~\eqref{pnZn} and then, together with~\eqref{logPn}, in \eqref{def In}. Using~\eqref{formula Krasovsky} to compute the ratio $\widehat Z_n(0,\alpha,V)/\widehat Z_n^{\rm GUE}$ and changing variables $t=(u_1-u_2)^2/4$, $u=(u_1+u_2)/2$, we f\/inally obtain
\begin{gather}
I_n(\theta,\rho)=\Bigg(\left( \frac{n}{2}\right)^{\alpha^2/2} C(\alpha/2)^2 \iint_{\mathcal A_{\theta, t_n}}\frac{((1-u_1^2)(1-u_2^2))^{\alpha^{2}/8}}{(2|u_1-u_2|)^{\alpha^{2}/2}}\rho(u_1)\rho(u_2)
du_1 du_2
\nonumber\\
\hphantom{I_n(\theta,\rho)=}{}
+2\left(\frac{n}{2}\right)^{\alpha^2}C(\alpha) \int_{-\theta}^\theta (1-u^2)^{\alpha^2/2}\rho(u)^{2} \nonumber\\
\hphantom{I_n(\theta,\rho)=}{}\times
\left(\int_0^{t_n}\exp\left(\int_0^{\widehat s_{n,t}}\frac{\sigma_\alpha^+(s)-\alpha^2}{s}ds+\frac{\alpha \widehat s_{n,t}}{2}\right)\frac{dt}{\sqrt t}\right)du \Bigg) \left(1+o(1)\right),\label{In2}
\end{gather}
as $n\to \infty$, where
\begin{gather*}
\widehat s_{n,t}=-4\pi in\sqrt{t}\psi_V(0)=-8 in\sqrt{t}\sqrt{1-u^2}
\end{gather*} in terms of the integration variables $t, u$. Here we need that the error terms found in Theorem~\ref{Thm PartitionFunction} are uniform in $-\theta<u<\theta$ for the potential $V(x)=2(x+u)^2$. This is not mentioned elsewhere in this paper but can readily be seen to hold true by inspection of the proof of Theorem~\ref{Thm PartitionFunction}, if $|\theta|<1$. If $\theta=1$, contributions from the region where the singularities approach an edge point of the support of the equilibrium measure will have to be taken into account as well, but this is outside the scope of this paper.

From (\ref{int sigma}) and the small $s$ asymptotics of $\sigma^+_\alpha$ it follows that with $v=4n\sqrt{1-u^2}\sqrt{t}$,
\begin{gather}
\int_0^{t_n}\exp\left(\int_0^{\widehat s_{n,t}}\frac{\sigma_\alpha^+(s)-\alpha^2}{s}ds+\frac{\alpha \widehat s_{n,t}}{2}\right)\frac{dt}{\sqrt t}\nonumber\\
 \qquad{} =\frac{1}{2n\sqrt{1-u^2}}\int_0^{4n\sqrt{1-u^2}\sqrt{t_n}}\exp\left(\int_0^{-2iv}
 \frac{\sigma_\alpha^+(s)-\alpha^2}{s}ds- i\alpha v\right)dv\nonumber\\
 \qquad=\begin{cases}
o\big(n^{-\alpha^2/2}\big)&\textrm{for } \alpha^2<2, \\
\dfrac{1}{4\sqrt{1-u^2}}\dfrac{G(1+\sqrt{2}/2)^4G(1+2\sqrt{2})}{G(1+\sqrt{2})^4}n^{-1}\log n+o\big(n^{-1}\log n\big)&\textrm{for } \alpha^2=2,\\
cn^{-1}+o\big(n^{-1}\big)&\textrm{for } \alpha^2>2,\end{cases}\label{OrderInt}
\end{gather}
as $n\to \infty$, with
\begin{gather*}
c=\frac{1}{2\sqrt{1-u^2}}\int_0^{\infty}\exp\left(\int_0^{-2iv}
 \frac{\sigma_\alpha^+(s)-\alpha^2}{s}ds- i\alpha v\right)dv.
 \end{gather*}
Formulas (\ref{In2}) and (\ref{OrderInt}) yield the corollary, where it is readily seen from (\ref{int sigma}) that the integral appearing in the constant $C_3$ is well def\/ined.
\end{proof}

\subsection{Outline}

In Section~\ref{section: PV}, we recall a model RH problem, introduced in \cite{CIK}, associated to the f\/ifth Painlev\'e equation. We def\/ine the functions $\Phi_1$ and $\Phi_2$ in terms of the solution to the RH problem for $\tau<0$, and describe their relation with the solution $\sigma_\alpha^-$ to the f\/ifth Painlev\'e equation. For $\alpha$ integer, we construct the RH solution recursively. In Section~\ref{sec: PV2}, we describe a second model RH problem which was introduced in~\cite{CK}, we def\/ine $\Phi_1$ and $\Phi_2$ in terms of its solution for $\tau>0$, and we relate them to the Painlev\'e solution $\sigma_\alpha^+$. For $\alpha$ even, we construct the RH solution recursively.
In Section~\ref{section: 4}, we recall the standard RH problem which characterizes the orthogonal polynomials $p_j^{(n)}$, and we def\/ine the $g$-function needed to analyze this RH problem asymptotically. In Section \ref{section: RH-}, we perform a Deift/Zhou steepest descent analysis to obtain large~$n$ asymptotics for the orthogonal polynomials in the case where $t<0$ (i.e., the case where the singularities are complex and approach~$0$). The construction of a local parametrix near~$0$, in terms of the model problem def\/ined in Section~\ref{section: PV}, is crucial here. At the end of Section~\ref{section: RH-}, we will prove Theorems~\ref{thm lim kernel} and~\ref{Thm PartitionFunction} in the case where $t<0$.
In Section~\ref{section: RH+}, we perform a similar asymptotic analysis in the case $t>0$ (i.e., the case where the singularities are real and approach~$0$). Here the local parametrix is constructed in terms of the model problem from Section~\ref{sec: PV2}. This will allow us to prove Theorems~\ref{thm lim kernel} and~\ref{Thm PartitionFunction} for $t>0$.

\section[Model RH problem for $\tau<0$]{Model RH problem for $\boldsymbol{\tau<0}$}\label{section: PV}

The functions $\Phi_1$ and $\Phi_2$ that appeared in the limiting kernel $\mathbb K_\alpha^{{\rm PV}}$ can be def\/ined in terms of a RH problem. We need to distinguish between the case where $\tau>0$ and $\tau<0$.

The f\/irst model RH problem is a special case of the one studied in~\cite[Section 1.3]{CIK}. For our purposes, the parameter~$\beta$ from~\cite{CIK} is set to zero.
The RH problem depends on parameters $\alpha>-1/2$ and $s\in\mathbb C$. The relevant case for us will be $s>0$.

\subsubsection*{RH problem for $\boldsymbol{\Psi^-}$}

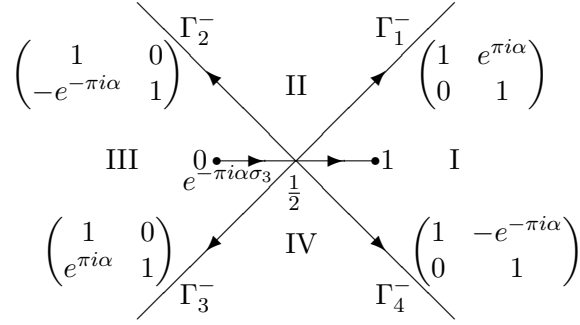
\begin{figure}[t]
\centering
\begin{picture}(120,120)(-5,-65)
 \put(38,-20){$\frac{1}{2}$}
\put(3,-8){$0$}
\put(74,-8){$1$}

\put(72,42){$\Gamma^-_{1}$}
\put(-2,42){$\Gamma^-_{2}$}
\put(-2,-59){$\Gamma^-_{3}$}
\put(72,-59){$\Gamma^-_{4}$}

\put(100,-8){I}
\put(38,20){II}
\put(-30,-8){III}
\put(38,-40){IV}

\put(87,25){$\begin{pmatrix} 1 & e^{\pi i\alpha}\\ 0 &1\end{pmatrix}$}
\put(-66,25){$\begin{pmatrix} 1 &0\\ -e^{-\pi i\alpha} &1\end{pmatrix}$}
\put(-54,-42){$\begin{pmatrix} 1 &0\\ e^{\pi i\alpha} &1\end{pmatrix}$}
\put(85,-42){$\begin{pmatrix} 1 & -e^{-\pi i\alpha}\\0 &1\end{pmatrix}$}
\put(-1,-16){$e^{-\pi i \alpha \sigma_3}$}

\put(-18,-65){\line(1,1){120}}
\put(-18,55){\line(1,-1){120}}
\put(12,-5){\line(1,0){60}}

\put(12,-5){\thicklines\circle*{2.5}}
\put(72,-5){\thicklines\circle*{2.5}}

\put(8,-39){\thicklines\vector(-1,-1){.0001}}
\put(76,29){\thicklines\vector(1,1){.0001}}
\put(8,29){\thicklines\vector(-1,1){.0001}}
\put(76,-39){\thicklines\vector(1,-1){.0001}}
\put(30,-5){\thicklines\vector(1,0){.0001}}
\put(60,-5){\thicklines\vector(1,0){.0001}}
\end{picture}
\caption{The jump contour $\Gamma^-$ and the jump matrices for $\Psi^-$.}\label{figure: contour Psi-}
\end{figure}

\begin{itemize}\itemsep=0pt
\item[(a)] $\Psi^-\colon \mathbb{C}{\setminus} \Gamma^- \rightarrow \mathbb{C}^{2\times 2}$ is analytic, with $\Gamma^-=\Gamma_1^-\cup\Gamma_2^-\cup\Gamma_3^-\cup\Gamma_4^-\cup[0,1]$, and
\begin{gather*}
\Gamma_1^-=\frac{1}{2}+e^{\frac{\pi i}{4}}\mathbb R^+,\ \ \Gamma_2^-=\frac{1}{2}+e^{\frac{3\pi i}{4}}\mathbb R^+,\ \ \Gamma_3^-=\frac{1}{2}+e^{-\frac{3\pi i}{4}}\mathbb R^+,\ \ \Gamma_4^-=\frac{1}{2}+e^{-\frac{\pi i}{4}}\mathbb R^+,
\end{gather*}
oriented as in Fig.~\ref{figure: contour Psi-}.
\item[(b)] $\Psi^-$ has continuous boundary values on $\Gamma^- {\setminus} \{0,\frac{1}{2},1\}$, which we denote by $\Psi_+^-(z)$ if the limit is taken from the left when oriented along the contour, and $\Psi_-^-(z)$ if the limit is taken from the right. We have the jump relations
\begin{alignat*}{3}
&\Psi_+^-(z)=\Psi_-^-(z)\begin{pmatrix} 1 & e^{\pi i\alpha}\\ 0 &1\end{pmatrix} \qquad &&\textrm{for $z \in \Gamma_1^-$,}&\\
&\Psi_+^-(z)=\Psi_-^-(z)\begin{pmatrix} 1 &0\\ -e^{-\pi i\alpha} &1\end{pmatrix} \qquad &&\textrm{for $z \in \Gamma_2^-$,}&\\
&\Psi_+^-(z)=\Psi_-^-(z)\begin{pmatrix} 1 &0\\ e^{\pi i\alpha} &1\end{pmatrix} \qquad &&\textrm{for $z \in \Gamma_3^-$,}&\\
&\Psi_+^-(z)=\Psi_-^-(z)\begin{pmatrix} 1 & -e^{-\pi i\alpha}\\0 &1\end{pmatrix}\qquad &&\textrm{for $z \in \Gamma_4^-$,}&\\
&\Psi_+^-(z)=\Psi_-^-(z)e^{-\pi i\alpha\sigma_3} \qquad &&\textrm{for $z \in (0,1)$,}&
\end{alignat*}
where $\sigma_3=\begin{pmatrix}1&0\\0&-1\end{pmatrix}$ is the third Pauli matrix.
\item[(c)]There exist $q$, $p$, $r$ independent of $z$ (but depending on $s,\alpha$) such that $\Psi^-(z)$ has the following behavior as $z\to \infty$:
\begin{gather}\label{Psi as}
\Psi^-(z)=\left(I+\frac{1}{z}\begin{pmatrix}q&r\\p&-q\end{pmatrix}+\mathcal{O}\big(z^{-2}\big)\right)\exp\left(-\frac{s}{2}z\sigma_3\right).
\end{gather}
\item[(d)] The functions $G(z):=\Psi^-(z)z^{-\frac{\alpha}{2}\sigma_3}$ and $H(z):=\Psi^-(z)(z-1)^{\frac{\alpha}{2}\sigma_3}$ are bounded for~$z$ near~$0$ and~$1$ respectively, and~$\Psi^-$ is bounded near~$1/2$. The branches in the def\/initions of~$G$,~$H$ are such that $z^{ \frac{\alpha}{2}}, (z-1)^{\frac{\alpha}{2}}>0$ on the~$+$ side of $(1,\infty)$.
\end{itemize}

For $\alpha>-1/2$,
it was shown in \cite[Theorem 1.8(i)]{CIK} that the RH problem for $\Psi^-$ has a unique solution for all $s>0$.

It will be useful to note that the following symmetry relation holds:
\begin{gather} e^{\frac{s}{2}\sigma_3}\Psi^-\left(z+\frac{1}{2}\right)=\sigma_1\Psi^-\left(-z+\frac{1}{2}\right)\sigma_1,\qquad \sigma_1=\begin{pmatrix}0&1\\1&0\end{pmatrix}.\label{SymmetryPsi-}\end{gather} It can indeed be verif\/ied that
the left and right hand side satisfy the same, uniquely solvable, RH problem.

For $\tau<0$ and $u\in\mathbb R$, we def\/ine
\begin{gather}\label{def Phi -}
\begin{pmatrix}\Phi_1(u;\tau)\\ \Phi_2(u;\tau)\end{pmatrix}= \Psi^-\left( z=-\frac{2 i u}{s}+\frac{1}{2};s\right)e^{\mp \pi i \alpha /2 \sigma_3}\begin{pmatrix} 1\\-1 \end{pmatrix} \qquad \mbox{for $\pm u >0$},\quad s=\sqrt{-\tau}.
\end{gather}
One verif\/ies using the jump conditions for $\Psi^-$ that $\Phi_1$ and $\Phi_2$ are analytic in a neighborhood of the real line. In fact, the singularities $z=0$ and $z=1$ of $\Psi^-$ are transformed to complex conjugate singularities $u=\pm is/4=\pm i\sqrt{-\tau}/4$ for $\Phi_1$ and $\Phi_2$. We have the asymptotic behavior
\begin{gather}
\begin{pmatrix}
\Phi_1(u;\tau)\\
\Phi_2(u;\tau)
\end{pmatrix}=\left( I +\mathcal{O}\big(u^{-1}\big)\right) \exp\big(\big(iu-\sqrt{-\tau}/4\big)\sigma_3\big)e^{\mp \pi i \alpha/2 \sigma_3}\begin{pmatrix}1\\-1\end{pmatrix}\label{as Phi -}
\end{gather}
as $u\to \pm\infty$.
The kernel $\mathbb K_\alpha^{\rm PV}(u,v;\tau)$ is then given by~\eqref{limker} for~$\tau<0$.

\subsection{Lax pair and Painlev\'e V}

There is a connecton between the RH problem for $\Psi^-$ and the Painlev\'e V equation, which relies on the theory of isomonodromy preserving deformations~\cite{FIKN}. We recall from~\cite{CIK} (in particular Theorem~1.8(ii) and Section~4.3 in that paper) that the RH solution $\Psi=\Psi^-$ solves a~system of linear dif\/ferential equations{\samepage
\begin{gather} \label{Lax}
\Psi_{z}(z; s)= A(z;s) \Psi(z;s),\\
\Psi_s(z;s) = B(z;s)\Psi(z;s), \label{Lax2}
\end{gather}}

\noindent
where the matrices $A$ and $B$ have the form
\begin{gather}
\label{Eqn:A} A(z;s)= A_{\infty}(s)+\frac{A_0(s)}{z} +\frac{A_1(s)}{z-1},\\
 B(z;s)= B_0(s)+B_1(s)z.\label{Eqn:B}
\end{gather}
The $2\times 2$ matrices $A_0$, $A_1$, $A_{\infty}$, $B_0$, $B_1$, which are independent of $z$ but can depend on~$s$ and~$\alpha$, can be parametrized as follows,
\begin{gather}
A_0= \begin{pmatrix} -v+\alpha/2 & uy(v-\alpha) \\ -\frac{v}{uy}&v-\alpha/2\end{pmatrix},\\
A_1= \begin{pmatrix} v-\alpha/2& -y(v-\alpha) \\ \frac{v}{y} &-v+\alpha/2 \end{pmatrix},\\
A_{\infty}=-\frac{s}{2} \sigma_3, \qquad
B_1= -\frac{1}{2}\sigma_3, \qquad
B_0=\frac{A_0+A_1}{s}.\label{B2}
\end{gather}
Here $u$, $v$, and $y$ are functions of $s$ and $\alpha$.
We have the following relation between $u$, $v$ and the functions $q$, $r$, $p$ appearing in the asymptotic expansion~(\ref{Psi as}) for $\Psi^-$ as $z\to\infty$,
\begin{gather}
 \label{v}v=\frac{\alpha}{2}-q-srp,\\
 \label{u}u=1+\frac{sp}{(1-s)p+sp'}.
\end{gather}
The compatibility condition $\Psi_{sz}= \Psi_{z s}$ implies that $A_s-B_{z}+AB-BA=0$, and this implies the following system of equations for $u$,~$v$,~$y$:
\begin{gather}
su_s= su+(\alpha-2v)(u-1)^2,\label{ODE u}\\
sv_s= v\left(u-\frac{1}{u}\right)(v-\alpha),\label{ODE v}\\
sy_s= y\left(-2v+\alpha+uv-u\alpha+\frac{v}{u}-s\right).
\end{gather}
By eliminating $v$ from the top two equations, we obtain a special case of the f\/ifth Painlev\'{e} equation,
\begin{gather} \label{Eqn:PV}
u_{ss}=\left(\frac{1}{2u}+\frac{1}{u-1}\right)u_s^2-\frac{1}{s}u_s
+\frac{(1-u)^2}{s^2}\left(\frac{\alpha^2}{2}\left(u-\frac{1}{u}\right)\right)+\frac{u}{s}-\frac{u(u+1)}{2(u-1)}.
\end{gather}
If we def\/ine $\sigma(s)$ as
\begin{gather}\label{sigma}
\sigma(s)=\int_s^{+\infty}v(\xi)d\xi
\end{gather}
(it was shown in \cite{CIK} that~$v$ decays rapidly as $s\to +\infty$ so that this integral converges),
then~$\sigma$ solves~(\ref{PainleveV}).

\subsection[Recursive construction of $\Psi^-$ if $\alpha\in\mathbb N$]{Recursive construction of $\boldsymbol{\Psi^-}$ if $\boldsymbol{\alpha\in\mathbb N}$}\label{sec: rec-}

For positive integer values of $\alpha$, the RH solution $\Psi^-$ can be constructed explicitly. We will denote~$\Psi^-_\alpha$ instead of~$\Psi^-$ in this section to emphasize the dependence on~$\alpha$. The crucial observation here is that the jump matrices for $\Psi_\alpha^-$ are periodic in~$\alpha$: they remain the same if we replace~$\alpha$ by~$\alpha+2$. This fact indicates that there is a Schlesinger type transformation~\cite{MF} relating~$\Psi_\alpha$ to~$\Psi_{\alpha+2}$. Even more, if we replace $\alpha$ by $\alpha+1$, the jump matrices are modif\/ied in a simple way:
the jumps on the diagonals $\Gamma_1^-$, $\Gamma_2^-$, $\Gamma_3^-$, $\Gamma_4^-$ are the same for the functions $\Psi_{\alpha}^{-}$ and $\sigma_3 \Psi_{\alpha+1}^- \sigma_3$,
whereas they are opposite on $(0,1)$. Based on this observation, and on the fact that the solution for $\alpha=0$ is simple and explicit, we can recursively construct a solution for any integer $\alpha$.

For positive integer $\alpha$ and $s \in (0,+\infty)$, def\/ine $X_\alpha$ as follows,
\begin{gather} \label{DefX}
X_\alpha(z;s)= \begin{cases} \Psi^-_{\alpha}(z;s) & \textrm{for $z \in {\rm II},\, {\rm IV}$,} \\
\Psi^-_{\alpha}(z;s) \begin{pmatrix} 1&(-1)^{\alpha}\\0&1 \end{pmatrix} & \textrm{for $z \in {\rm I}$,} \vspace{1mm}\\
\Psi^-_{\alpha}(z;s) \begin{pmatrix} 1&0\\ (-1)^{\alpha}&1 \end{pmatrix} & \textrm{for $z \in {\rm III}$,}
\end{cases}
\end{gather}
with regions I, II, III, IV as in Fig.~\ref{figure: contour Psi-}. Then, by the RH conditions for $\Psi_\alpha^-$, $X_\alpha$ satisf\/ies the following RH problem.

\subsubsection*{RH problem for $\boldsymbol{X_\alpha}$}
\begin{itemize}\itemsep=0pt
\item[(a)] $X_\alpha\colon \mathbb C{\setminus}[0,1] \to \mathbb C^{2\times 2}$ is analytic.
\item[(b)] For $z\in(0,1)$, we have the jump relation $X_{\alpha,+}(z)=(-1)^\alpha X_{\alpha,-}(z)$.
\item[(c)] As $z \to \infty$,
\begin{gather*} X_\alpha(z)=\big(I+\bigO \big(z^{-1}\big)\big)\exp\left(-\frac{sz}{2}\sigma_3\right).
\end{gather*}
\item[(d)] The functions $G_\alpha$ and $H_\alpha$ def\/ined by
\begin{gather}\label{def Galpha}G_\alpha(z) =X_\alpha(z)\begin{pmatrix}1&0\\-(-1)^{\alpha} &1 \end{pmatrix}z^{-\frac{\alpha}{2}\sigma_3},\\
H_\alpha(z) =X_\alpha(z) \begin{pmatrix}1&-(-1)^{\alpha}\\0&1 \end{pmatrix}(z-1)^{\frac{\alpha}{2}\sigma_3},
\end{gather}
are analytic at $z=0$ and at $z=1$ respectively. The branches in the def\/initions of~$G_{\alpha}$,~$H_{\alpha}$ are such that $z^{ \frac{\alpha}{2}},(z-1)^{\frac{\alpha}{2}}>0$ on the $+$ side of $(1,\infty)$.
\end{itemize}

For $\alpha=0$, this RH problem is easy to solve: we have
\begin{gather*} X_0(z)=\exp\left(-\frac{sz}{2}\sigma_3\right).\end{gather*}
Def\/ine
\begin{gather}\label{defnW}
W_\alpha(z)=\left( \frac{z}{z-1} \right)^{\frac{1}{2} \sigma_3} X_\alpha(z) \sigma_3 X_{\alpha+1}(z)^{-1} \sigma_3,
\end{gather}
where the square root is analytic on $\mathbb C {\setminus} [0,1]$ and positive for large positive~$z$. We will explicitly construct $W_\alpha$ in terms of $G_\alpha$ and $H_\alpha$, and thus we will have the recursive relation
\begin{gather}\label{recrelnX}
X_{\alpha+1}(z)=\sigma_3W_\alpha(z)^{-1}\left( \frac{z}{z-1} \right)^{\frac{1}{2} \sigma_3} X_\alpha(z)\sigma_3 .
\end{gather}
From the jump relations for $X_\alpha$, one verif\/ies that $W_\alpha$ is meromorphic in $z$, with singularities at~$0$ and~$1$, and as $z\to \infty$, we have
\begin{gather} \label{Winfty}
W_\alpha(z)=I+ \mathcal{O} \left(\frac{1}{z} \right).
\end{gather}
By condition (d) of the RH problem for $X_{\alpha}$, we obtain
\begin{alignat}{3}\label{Wat0}
& W_\alpha(z) =\left( \frac{z}{z-1} \right)^{\frac{1}{2} \sigma_3} G_\alpha(z) z^{-\frac{1}{2}\sigma_3}\sigma_3 G_{\alpha+1}(z)^{-1} \sigma_3 \qquad && \textrm{for $z$ near $0$,}& \\
& W_\alpha(z) =\left( \frac{z}{z-1} \right)^{\frac{1}{2} \sigma_3} H_\alpha(z)(z-1)^{\frac{1}{2}\sigma_3}\sigma_3 H_{\alpha+1}(z)^{-1} \sigma_3 \qquad && \textrm{for $z$ near $1$.}& \label{Wat1}
\end{alignat} It follows from (\ref{Winfty})--(\ref{Wat1}) that $W_\alpha$ takes the form
\begin{gather} \label{form for W}
W_\alpha(z)=I+\frac{P_\alpha}{z}+\frac{Q_\alpha}{z-1},
\end{gather}
for matrices $P_\alpha$, $Q_\alpha$ which are independent of~$z$.
Moreover, it is easily verif\/ied that~$X_\alpha$ exhibits the following symmetry,
\begin{gather}\label{Symm X}
X_\alpha\left(z+\frac{1}{2}\right)=e^{-\frac{s}{2}\sigma_3}\sigma_1 X_\alpha\left(-z+\frac{1}{2}\right)\sigma_1,
\end{gather}
which yields the relation
\begin{gather} \label{link PQ}
-e^{-\frac{s}{2}\sigma_3}\sigma_1 P_\alpha\sigma_1 e^{\frac{s}{2}\sigma_3}= Q_\alpha.
\end{gather}
Condition (\ref{Wat0}) gives us that
\begin{gather}
 z^{\frac{1}{2}\sigma_3} G_\alpha(z)^{-1}z^{-\frac{1}{2}\sigma_3} \sigma_3 W_\alpha(z) \quad \textrm{is analytic in a neighborhood of $0$.} \label{Wanalcond00}
\end{gather}
By substituting (\ref{form for W}) into (\ref{Wanalcond00}) it follows that the f\/irst row of $P_\alpha$ is $0$, and consequently, by~(\ref{Wanalcond00}),
\begin{gather}
 z^{\frac{1}{2}\sigma_3} G_\alpha(0)^{-1}z^{-\frac{1}{2}\sigma_3} \sigma_3 W_\alpha(z) \quad \textrm{is analytic in a neighborhood of $0$.} \label{Wanalcond0}
 \end{gather}
This implies
\begin{gather}
G_{\alpha,21}(0)(1+P_{\alpha,22})+G_{\alpha,11}(0)P_{\alpha,21} =0, \label{integereqns1}\\
e^{-s}G_{\alpha,21}(0)P_{\alpha,21}+G_{\alpha,11}(0)P_{\alpha,22} =0. \label{integereqns3}
\end{gather}
By the unique solvability of the RH problem for $X_\alpha$, it follows that this linear system has a~unique solution for $P_{\alpha,22}$ and $P_{\alpha,21}$ for any positive $s$.

In conclusion, since we know $X_0(z)$, we can compute $G_0(0)$ by \eqref{def Galpha}, and from that we can compute $P_0$ by \eqref{integereqns1}, \eqref{integereqns3}. Substituting those in~\eqref{link PQ}, \eqref{form for W}, and \eqref{recrelnX}, we obtain $X_1(z)$. Similarly, once we know $X_1(z)$, we can compute $G_1(0)$, $P_1$, and~$X_2$, and so on.

For $\alpha=1$ and $\alpha=2$, in this way, we obtain the expressions
\begin{gather}
X_1(z) =\frac{1}{(e^s-1)\sqrt{(z-1)z}}\begin{pmatrix} 1+(e^s-1)z & -1 \\ e^s & -e^s+z(e^s-1)\end{pmatrix} e^{-\frac{sz}{2}\sigma_3},\\
X_2(z) =\frac{1}{(1-e^s (2+s^2)+e^{2 s}) (z-1) z}\label{formula X2}\\
\hphantom{X_2(z) =}{} \times \bigg{(}\begin{matrix} 1+ e^s( s-1)
 - 2 z +e^s z(2-2s+s^2)+z^2
-e^s z^2(2+s^2-e^s)
\\e^s(-1 - s +e^s+ z(2+s+e^s(-2+s)))
 \end{matrix}\nonumber \\
\hphantom{X_2(z) =}{} \quad \begin{matrix}1 - e^s + e^s s -z(2+s+e^s(s-2))\\
e^s (-1-s+e^s)+ze^s(2+2s+s^2-2e^s) +z^2 +z^2e^s(-2-s^2+e^s)
 \end{matrix}\bigg{)} e^{-\frac{sz}{2}\sigma_3} .\nonumber
\end{gather}

Expanding $X_\alpha$ as $z\to\infty$, we can obtain expressions for the Painlev\'e~V functions $q(s)$, $r(s)$, $p(s)$ by~(\ref{Psi as}). In particular, for $\alpha=0,1,2$, we obtain
\begin{gather}\label{formulas q}
q_0(s)=0,\qquad q_1(s)=\frac{1}{e^s-1}+\frac{1}{2},\qquad
q_2(s)=\frac{-1+e^s(-2s+e^s)}{1-e^s(2+s^2)+e^{2s}}.
\end{gather}
Using the relation (see \cite[equation~(4.109)]{CIK}) \begin{gather}\sigma_\alpha^-(s)=sq(s)-\frac{\alpha s}{2},\label{sigma q}\end{gather}
we can compute $\sigma_0^-$, $\sigma_1^-$, and $\sigma_2^-$: this gives~\eqref{sigma0-intro}--\eqref{sigma2-intro}.

\subsection[The kernel $\mathbb K_{\alpha}^{{\rm PV}}$ as $\tau \to -\infty$]{The kernel $\boldsymbol{\mathbb K_{\alpha}^{{\rm PV}}}$ as $\boldsymbol{\tau \to -\infty}$} \label{section tau infty}

We follow the steps of \cite[Section~4.1]{CIK} in the analysis as $\tau \to -\infty$. Def\/ine
\begin{gather} \label{tildePsi}
\widetilde \Psi (z)= \Psi^{-}(z) \left(\frac{z-1}{z}\right)^{\frac{\alpha}{2}\sigma_3} ,
\end{gather}
where the power $\alpha/2$ is taken analytic except on $[0,1]$ and tending to $1$ as $z\to +\infty$.
The fraction cancels out the jump of $\Psi^-$ on $(0,1)$ and the singularities at $0$ and $1$, and $\widetilde\Psi$ solves the following RH problem.

\subsubsection*{RH problem for $\boldsymbol{\widetilde\Psi}$}
\begin{itemize}\itemsep=0pt
\item[(a)] $\widetilde\Psi\colon \mathbb{C}{\setminus}(\Gamma_1^-\cup\Gamma_2^-
\cup\Gamma_3^-\cup\Gamma_4^-)$.
\item[(b)] $\widetilde\Psi$ has the jump relations
\begin{alignat*}{3}
&\widetilde\Psi_+(z)=\widetilde\Psi_-(z)\begin{pmatrix} 1 & e^{\pi i\alpha}\left(\frac{z-1}{z}\right)^{-\alpha}\\ 0 &1\end{pmatrix} \qquad &&\textrm{for $z \in \Gamma_1^-$,}&\\
&\widetilde\Psi_+(z)=\widetilde\Psi_-(z)\begin{pmatrix} 1 &0\\ -e^{-\pi i\alpha}\left(\frac{z-1}{z}\right)^\alpha &1\end{pmatrix} \qquad &&\textrm{for $z \in \Gamma_2^-$,}&\\
&\widetilde\Psi_+(z)=\widetilde\Psi_-(z)\begin{pmatrix} 1 &0\\ e^{\pi i\alpha}\left(\frac{z-1}{z}\right)^\alpha &1\end{pmatrix} \qquad &&\textrm{for $z \in \Gamma_3^-$,}&\\
&\widetilde\Psi_+(z)=\widetilde\Psi_-(z)\begin{pmatrix} 1 & -e^{-\pi i\alpha}\left(\frac{z-1}{z}\right)^{-\alpha}\\0 &1\end{pmatrix} \qquad &&\textrm{for $z \in \Gamma_4^-$.}&
\end{alignat*}
\item[(c)]As $z\to \infty$,
\begin{gather}\label{Psi as tilde}
\widetilde\Psi(z)=\left(I+\mathcal{O}\big(z^{-1}\big)\right)\exp\left(-\frac{s}{2}z\sigma_3\right).
\end{gather}
\end{itemize}

\begin{figure}\centering
\begin{tikzpicture}
\draw [decoration={markings, mark=at position 0.5 with {\arrow[thick]{>}}},
 postaction={decorate},] (-2,-2) -- (-0.75,-0.75) ;
\draw (-0.75,-0.75) -- (-0.75,0.75);
\draw [decoration={markings, mark=at position 0.5 with {\arrow[thick]{>}}},
 postaction={decorate},] (-0.75,0.75) -- (-2,2);

\draw [decoration={markings, mark=at position 0.5 with {\arrow[thick]{>}}},
 postaction={decorate},] (2,-2) -- (0.75,-0.75) ;
\draw (0.75,-0.75) -- (0.75,0.75);
\draw [decoration={markings, mark=at position 0.5 with {\arrow[thick]{>}}},
 postaction={decorate},] (0.75,0.75) -- (2,2);

\draw [dotted] (-0.75,-0.75)--(0.75,0.75);
\draw [dotted] (-0.75,0.75)--(0.75,-0.75);

\draw[fill] (-1.5,0) circle [radius=0.025];
\draw[fill] (1.5,0) circle [radius=0.025];
\node [right] at (1.5,0) {$1$};
\node [left] at (-1.5,0) {$0$};
\node at (0.43,0) {$\mathcal A_2$};
\node at (-0.43,0) {$\mathcal A_1$};
\node at (1.3,+1.7) {$\widehat\Gamma_1$};
\node at (-1.3,+1.7) {$\widehat\Gamma_2$};

\end{tikzpicture}
\caption{Contour $\widehat \Gamma$.}\label{Gammahat}
\end{figure}

Now we will deform the jump contour in such a way that it stays away from $0$, $1$, and $1/2$, to a contour as shown in Fig.~\ref{Gammahat}.
Def\/ine
\begin{gather} \label{hatPsi}
\widehat \Psi (z)=\begin{cases}e^{\frac{s}{4}\sigma_3}\widetilde \Psi(z) \exp \left(\dfrac{s}{2}\left(z-\dfrac{1}{2}\right)\sigma_3\right)&\textrm{for $z \notin \mathcal A_1 \cup \mathcal A_2$,}\vspace{1mm}\\
e^{\frac{s}{4}\sigma_3}\widetilde \Psi(z) \begin{pmatrix}1&0\\\left(\frac{1-z}{z}\right)^\alpha&1\end{pmatrix}\exp \left(\dfrac{s}{2}\left(z-\dfrac{1}{2}\right)\sigma_3\right)&\textrm{for $z \in \mathcal A_1 $,}\vspace{1mm}\\
e^{\frac{s}{4}\sigma_3}\widetilde \Psi(z) \begin{pmatrix}1&\left(\frac{1-z}{z}\right)^{-\alpha}\\0&1\end{pmatrix}\exp \left(\dfrac{s}{2}\left(z-\dfrac{1}{2}\right)\sigma_3\right)&\textrm{for $z \in \mathcal A_2$,}
\end{cases} \end{gather}
with $\mathcal A_1$ and $\mathcal A_2$ given in Fig.~\ref{Gammahat}. Here, $\left(\frac{1-z}{z}\right)^{\pm\alpha}$ is analytic except on $(-\infty,0]\cup[1,+\infty)$.
Then~$\widehat \Psi$ satisf\/ies a small norm RH problem as $s\to +\infty$.

\subsubsection*{RH problem for $\boldsymbol{\widehat\Psi}$}
\begin{itemize}\itemsep=0pt
\item[(a)] $\widehat\Psi\colon \mathbb C {\setminus} \widehat \Gamma \to \mathbb C^{2\times2}$ is analytic, with $\widehat \Gamma=\widehat\Gamma_1\cup\widehat\Gamma_2$ given in Fig.~\ref{Gammahat}.
\item[(b)] $\widehat \Psi$ has the following jumps on $\widehat \Gamma$:
\begin{alignat*}{3}
&\widehat\Psi_+(z)=\widehat\Psi_-(z)\begin{pmatrix} 1 & \left(\frac{1-z}{z}\right)^{-\alpha}e^{-s(z-\frac{1}{2})}\\ 0 &1\end{pmatrix} \qquad &&\textrm{for $z \in \widehat\Gamma_1$,}&\\
&\widehat\Psi_+(z)=\widehat\Psi_-(z)\begin{pmatrix} 1 &0\\ -\left(\frac{1-z}{z}\right)^\alpha e^{s(z-\frac{1}{2})} &1\end{pmatrix} \qquad &&\textrm{for $z \in \widehat\Gamma_2$}.&
\end{alignat*}
As $s\to +\infty$, this means that
\begin{gather*} \widehat \Psi _+(z)=\widehat \Psi_-(z)(I+\mathcal O \left(\exp \left(-cs\right)\right) \end{gather*}
on $\widehat \Gamma_1$ and $\widehat \Gamma_2$, for some constant $c>0$.
\item[(c)] $\widehat \Psi(z)=I+\mathcal O (z^{-1})$ as $z \to \infty$.
\end{itemize}
By standard small norm analysis, we have
\begin{gather}\label{small norm hatPsi}\widehat \Psi (z)=I +\mathcal O\left(\frac{1}{|z|+1}e^{-cs}\right),\qquad s\to+\infty,\end{gather}
uniformly for $z\in\mathbb C{\setminus}\widehat\Gamma$.
Inverting the transformations $\Psi^-\mapsto\widetilde\Psi\mapsto\widehat\Psi$ and using \eqref{def Phi -}, we obtain
\begin{gather*}
\begin{pmatrix}
\Phi_1\big(x;-s^2\big)\\ \Phi_2\big(x;-s^2\big)
\end{pmatrix}=\widehat\Psi\left(\frac{-2ix}{s}+\frac{1}{2};s\right)\left(\frac{-4ix-s}{-4ix+s}\right)^{-\frac{\alpha}{2}\sigma_3}e^{\mp \frac{\pi i\alpha}{2}\sigma_3}e^{ix\sigma_3}\begin{pmatrix}
1\\ -1
\end{pmatrix},\qquad \pm x>0.
\end{gather*}
By \eqref{small norm hatPsi}, as $s\to +\infty$, $x\in\mathbb R$,
\begin{gather*}
\begin{pmatrix}
\Phi_1(\pi u;-s^2)\\ \Phi_2(\pi u;-s^2)
\end{pmatrix}=\begin{pmatrix} e^{ \pi i u}\\ -e^{- \pi i u}\end{pmatrix}+\mathcal{O}\big(s^{-1}\big),
\end{gather*}
which, by \eqref{limker}, yields the sine kernel limit as $s\to +\infty$,
\begin{gather*}
\mathbb K_{\alpha}^{{\rm PV}}\big(u,v;-s^2\big) = \frac{ \sin \pi(u-v)}{\pi(u-v)}+\mathcal O \big(s^{-1}\big) ,
\end{gather*}
and (\ref{lim sine}) is proved. In fact, it is clear from this derivation that \eqref{lim sine} holds also as~$u$,~$v$ tend to inf\/inity together with $s$, as long as $u-v$ converges.

\subsection[The kernel $\mathbb K_{\alpha}^{{\rm PV}}$ as $\tau \to 0$]{The kernel $\boldsymbol{\mathbb K_{\alpha}^{{\rm PV}}}$ as $\boldsymbol{\tau \to 0}$} \label{section tau 0}

We will now study the RH problem for $\Psi^-$ in the limit where $s\to 0$ and prove \eqref{lim Bessel}.

Def\/ine \begin{gather}\label{def Psi1}\Psi^{(1)}(z;s)=e^{\frac{s}{4}\sigma_3}\Psi^-\left(-\frac{2iz}{s}+\frac{1}{2};s\right)\chi(z),
\end{gather} where
\begin{gather}\chi(z)=\begin{cases} e^{-\frac{\pi i \alpha}{2}\sigma_3}&\textrm{for $\Re z>0$,}\\
e^{\frac{\pi i \alpha}{2}\sigma_3}&\textrm{for $\Re z<0$.}\end{cases}\label{def chi}
\end{gather}
Then $\Psi^{(1)}$ satisf\/ies the following RH problem.
\subsubsection*{RH problem for $\boldsymbol{\Psi^{(1)}}$}
\begin{figure}[t]\centering
\begin{picture}(120,138)(-5,-70)
 \put(48,-7){$0$}
\put(36,-34){$-\frac{is}{4}$}
\put(44,20){$\frac{is}{4}$}
\put(75,45){$\widetilde\Gamma^-_{1}$}
\put(-5,45){$\widetilde\Gamma^-_2$}
\put(-5,-62){$\widetilde\Gamma^-_3$}
\put(75,-62){$\widetilde\Gamma^-_4$}
\put(46,-70){$\widetilde\Gamma^-_5$}

\put(86,27){$\begin{pmatrix}1&-1\\0&1\end{pmatrix}$}
\put(-47,27){$\begin{pmatrix}1&-1\\0&1\end{pmatrix} $}
\put(-42,-46){$\begin{pmatrix}1&0\\1&1\end{pmatrix} $}
\put(89,-46){$\begin{pmatrix}1&0\\1&1\end{pmatrix} $}
\put(46,60){$e^{\pi i \alpha \sigma_3}$}

\put(42,25){\line(0,1){45}}
\put(42,-35){\line(0,-1){45}}
\put(-18,-65){\line(1,1){120}}
\put(-18,55){\line(1,-1){120}}

\put(42,25){\thicklines\circle*{2.5}}
\put(42,-5){\thicklines\circle*{2.5}}
\put(42,-35){\thicklines\circle*{2.5}}

\put(42,60){\thicklines\vector(0,1){.0001}}
\put(42,-60){\thicklines\vector(0,1){.0001}}
\put(8,-39){\thicklines\vector(1,1){.0001}}
\put(76,29){\thicklines\vector(1,1){.0001}}
\put(8,29){\thicklines\vector(1,-1){.0001}}
\put(76,-39){\thicklines\vector(1,-1){.0001}}

\end{picture}
\caption{The jump contour $\widetilde\Gamma^-$ and the jump matrices for $\Psi^{(1)}$.}\label{figure: contour Psi(1)}
\end{figure}

\begin{itemize}\itemsep=0pt
\item[(a)] $\Psi^{(1)}\colon \mathbb{C}{\setminus} \widetilde\Gamma^- \to\mathbb{C}^{2\times 2}$ is analytic, where
\begin{alignat*}{3}
&\widetilde\Gamma= \bigcup_{i=1}^5 \widetilde\Gamma_i^-, \qquad && \widetilde \Gamma_1^-=\{z\colon \arg z=\pi/4\},&\\
& \widetilde\Gamma_2^-=\{z\colon \arg z=3\pi/4\},\qquad &&
\widetilde\Gamma_3^-=\{z\colon \arg z=5\pi/4\},& \\
& \widetilde\Gamma_4^-=\{z\colon \arg z=7\pi/4\}, \qquad && \widetilde\Gamma_5^-=i\mathbb R{\setminus} [-is/4,is/4].&
\end{alignat*}
The orientation is as in Fig.~\ref{figure: contour Psi(1)}. Note that the orientation is dif\/ferent compared to the one in the jump contour for $\Psi^-$.
\item[(b)] $\Psi^{(1)}$ has continuous boundary values on $\widetilde\Gamma^-{\setminus} \{0\}$:
\begin{alignat*}{3}
&{\Psi}^{(1)}_+(z)={\Psi}^{(1)}_-(z)\begin{pmatrix}1&-1\\0&1\end{pmatrix} \qquad &&\textrm{for $z\in \widetilde\Gamma_1^-\cup \widetilde\Gamma_2^-$ ,}& \\
&{\Psi}^{(1)}_+(z)={\Psi}^{(1)}_-(z)\begin{pmatrix}1&0\\1&1\end{pmatrix} \qquad &&\textrm{for $z\in \widetilde\Gamma_3^-\cup \widetilde\Gamma_4^-$, }& \\
&\Psi^{(1)}_+(z)=\Psi^{(1)}_-(z)e^{\pi i \alpha \sigma_3} \qquad &&\textrm{for $z\in \widetilde\Gamma_5^-$.}&
\end{alignat*}
\item[(c)] $\Psi^{(1)}$ has the following behavior as $z\to \infty$:
\begin{gather} \label{Psi1 as}
\Psi^{(1)}(z)=(I+\mathcal{O}(z^{-1}))e^{iz\sigma_3}\chi(z).
\end{gather}
\item[(d0)] $\Psi^{(1)}$ has the following behavior as $z\to -is/4$:
\begin{gather} \nonumber \Psi^{(1)}(z)= \begin{pmatrix} \mathcal{O}(|z+\frac{is}{4}|^{\alpha/2})&\mathcal{O}(|z+\frac{is}{4}|^{-\alpha/2})\\ \mathcal{O}(|z+\frac{is}{4}|^{\alpha/2})&\mathcal{O}(|z+\frac{is}{4}|^{-\alpha/2})\end{pmatrix} .\end{gather}
\item[(d1)] $\Psi^{(1)}$ has the following behavior as $z\to is/4$:
\begin{gather} \nonumber \Psi^{(1)}(z)= \begin{pmatrix} \mathcal{O}(|z-\frac{is}{4}|^{-\alpha/2})&\mathcal{O}(|z-\frac{is}{4}|^{\alpha/2})\\ \mathcal{O}(|z-\frac{is}{4}|^{-\alpha/2})&\mathcal{O}(|z-\frac{is}{4}|^{\alpha/2})\end{pmatrix}. \end{gather}
\item[(e)] $\Psi^{(1)}$ is bounded at $0$.
\end{itemize}
Using \eqref{def Phi -} and \eqref{def Psi1}, we obtain
\begin{gather}\label{PhiPsi1}
\begin{pmatrix}
\Phi_1(x;-s^2)\\ \Phi_2(x;-s^2)
\end{pmatrix}=e^{-\frac{s}{4}\sigma_3}\Psi^{(1)}(x;s)\begin{pmatrix}
1\\ -1
\end{pmatrix}.
\end{gather}

To study $\Psi^{(1)}(z;s)$ as $s\to 0$, we need to recall results from~\cite[Sections~4.2.1 and~4.2.2]{CIK}. The asymptotics as $s\to 0$ can be described
as follows: we have for $|z|>\delta$ with $\delta>0$ arbitrary that
\begin{gather} \label{intrepM}
\Psi^{(1)}(z;s)=\left(I+\mathcal{O}(|s\log|s||)+\mathcal{O}\big(s^{1+2\alpha}\big)\right)M(z)e^{\mp \frac{ \pi i \alpha}{2}\sigma_3} \qquad\text{for $\pm \Re{z}>0$}.
\end{gather}
Here $M$ satisf\/ies the following RH conditions.

\subsubsection*{RH problem for $\boldsymbol{M}$}
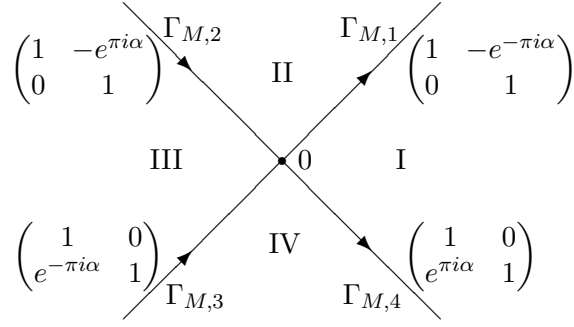
\begin{figure}[t]\centering
\begin{picture}(120,120)(-5,-65)
 \put(48,-8){$0$}
\put(85,-8){I}
\put(38,25){II}
\put(-8,-8){III}
\put(37,-40){IV}

\put(64,42){$\Gamma_{M,1}$}
\put(-3,42){$\Gamma_{M,2}$}
\put(-2,-59){$\Gamma_{M,3}$}
\put(65,-59){$\Gamma_{M,4}$}

\put(88,27){$\begin{pmatrix}1&-e^{-\pi i \alpha}\\0&1\end{pmatrix}$}
\put(-61,27){$\begin{pmatrix}1&-e^{\pi i \alpha}\\0&1\end{pmatrix}$}
\put(-60,-44){$\begin{pmatrix}1&0\\e^{-\pi i \alpha}&1\end{pmatrix}$}
\put(88,-44){$\begin{pmatrix}1&0\\e^{\pi i \alpha}&1\end{pmatrix} $}

\put(-18,-65){\line(1,1){120}}
\put(-18,55){\line(1,-1){120}}

\put(42,-5){\thicklines\circle*{2.5}}

\put(8,-39){\thicklines\vector(1,1){.0001}}
\put(76,29){\thicklines\vector(1,1){.0001}}
\put(8,29){\thicklines\vector(1,-1){.0001}}
\put(76,-39){\thicklines\vector(1,-1){.0001}}

\end{picture}
\caption{The contour $\Gamma_M$ and the jump matrices for $M$.}
\label{ContourGammaM}
\end{figure}
\begin{itemize}\itemsep=0pt
\item[(a)] $M\colon \mathbb{C}{\setminus} \Gamma_M \to\mathbb{C}^{2\times 2}$ is analytic, where
\begin{alignat*}{3} &\Gamma_M= \cup_{i=1}^5 \Gamma_{M,i},&&& \\
& \Gamma_{M,1}=\{z\colon \arg(z)=\pi/4\}, \qquad && \Gamma_{M,2}=\{z\colon \arg(z)=3\pi/4\},& \\
&\Gamma_{M,3}=\{z\colon \arg(z)=5\pi/4\},\qquad && \Gamma_{M,4}=\{z\colon \arg(z)=7\pi/4\}.&
\end{alignat*}
The orientation is as in Fig.~\ref{ContourGammaM}.
\item[(b)] $M$ has continuous boundary values on $\Gamma_M{\setminus} \{0\}$:
\begin{alignat*}{3}
&M_+(z)=M_-(z)\begin{pmatrix}1&-e^{-\pi i \alpha}\\0&1\end{pmatrix} \qquad &&\textrm{for $z\in \Gamma_{M,1}$,}& \\
&M_+(z)=M_-(z)\begin{pmatrix}1&-e^{\pi i \alpha}\\0&1\end{pmatrix} \qquad &&\textrm{for $ z\in\Gamma_{M,2}$,}& \\
&M_+(z)={M}_-(z)\begin{pmatrix}1&0\\e^{-\pi i \alpha}&1\end{pmatrix} \qquad &&\textrm{for $z\in \Gamma_{M,3}$, }& \\
&M_+(z)={M}_-(z)\begin{pmatrix}1&0\\e^{\pi i \alpha}&1\end{pmatrix} \qquad &&\textrm{for $z\in \Gamma_{M,4}$. }&
\end{alignat*}
\item[(c)] $M$ has the following behavior as $z\to \infty$:
\begin{gather*}
M(z)=\big(I+\mathcal{O}\big(z^{-1}\big)\big)e^{i{z}\sigma_3}.
\end{gather*}
\end{itemize}
The RH problem for $M$ has an explicit solution (which is not unique, unless one adds a condition to control the behaviour of $M$ near $0$) given in terms of conf\/luent hypergeometric functions which is used in \cite{CIK,CK}. But in the special case we are dealing with, the conf\/luent hypergeometric functions degenerate to Hankel and modif\/ied Bessel functions $H^{(1)}_{\alpha}$, $H^{(2)}_{\alpha}$, $I_{\alpha}$ and $K_{\alpha}$ (see \cite{NIST}), and we use an explicit formula for $M$ similar to the one given in \cite{2, Vanlessen}. Writing
\begin{gather} \nonumber
K=\frac{1}{\sqrt{2}} \begin{pmatrix}-1&1\\-1&-1\end{pmatrix}e^{-\frac{\pi i}{4}\sigma_3},
\end{gather}
we have
\begin{gather}\label{DefnM}
M(z)=\begin{cases}
\frac{z^{\frac{1}{2}}\sqrt{\pi}}{2}K \begin{pmatrix}
H_{\alpha+\frac{1}{2}}^{(2)}(z)&-iH_{\alpha+\frac{1}{2}}^{(1)}(z)\\
H_{\alpha-\frac{1}{2}}^{(2)}(z)&-iH_{\alpha-\frac{1}{2}}^{(1)}(z)\end{pmatrix}e^{-\frac{\pi i }{2}(\alpha+\frac{1}{2})\sigma_3}\sigma_1\sigma_3 & \textrm{for $z\in {\rm I}$,}\\
K\begin{pmatrix}z^{\frac{1}{2}}\sqrt{\pi }I_{\alpha+\frac{1}{2}}(ze^{-\frac{\pi i }{2}})&-z^{\frac{1}{2}}\frac{1}{\sqrt{\pi}}K_{\alpha+\frac{1}{2}}(ze^{-\frac{\pi i }{2}})\\
-iz^{\frac{1}{2}}\sqrt{\pi }I_{\alpha-\frac{1}{2}}(ze^{-\frac{\pi i }{2}})&-iz^{\frac{1}{2}}\frac{1}{\sqrt{\pi}} K_{\alpha-\frac{1}{2}}(ze^{-\frac{\pi i }{2}})\end{pmatrix}\sigma_1\sigma_3 &\textrm{for $z\in {\rm II}$,}\\
(e^{\pi i}z)^{\frac{1}{2}}\frac{\sqrt{\pi}}{2}K \begin{pmatrix}
-H_{\alpha+\frac{1}{2}}^{(2)}(e^{\pi i}z)&-iH_{\alpha+\frac{1}{2}}^{(1)}(e^{\pi i}z)\\
H_{\alpha-\frac{1}{2}}^{(2)}(e^{\pi i}z)&iH_{\alpha-\frac{1}{2}}^{(1)}(e^{\pi i}z)\end{pmatrix}e^{-\frac{\pi i }{2}(\alpha+\frac{1}{2})\sigma_3}&\textrm{for $z\in {\rm III}$,}\\
K\begin{pmatrix}-iz^{\frac{1}{2}}\sqrt{\pi }I_{\alpha+\frac{1}{2}}(ze^{\frac{\pi i }{2}})&-iz^{\frac{1}{2}}\frac{1}{\sqrt{\pi}}K_{\alpha+\frac{1}{2}}(ze^{\frac{\pi i }{2}})\\
z^{\frac{1}{2}}\sqrt{\pi}I_{\alpha-\frac{1}{2}}(ze^{\frac{\pi i }{2}})&-z^{\frac{1}{2}}\frac{1}{\sqrt{\pi}} K_{\alpha-\frac{1}{2}}(ze^{\frac{\pi i }{2}})\end{pmatrix} &\textrm{for $z\in {\rm IV}$,}
\end{cases}\hspace{-10mm}
\end{gather}
where the square roots and the Hankel and Bessel functions have branch cuts on $(-\infty,0]$.

By (\ref{intrepM}) and (\ref{PhiPsi1}), we have, for $\pm x>0$,
\begin{gather*}
\begin{pmatrix}
\Phi_1\big(x;-s^2\big)\\ \Phi_2\big(x;-s^2\big)
\end{pmatrix}=M(x;s)e^{\mp\frac{\pi i\alpha}{2}\sigma_3}\begin{pmatrix}
1\\ -1
\end{pmatrix}+\mathcal{O}(|s|\log|s|)+\mathcal{O}\big(|s|^{1+2\alpha}\big),\qquad s\to 0.
\end{gather*}
Using \eqref{limker} and substituting in (\ref{DefnM}), one obtains that as $\tau\to 0$:
\begin{gather*}
\mathbb K_{\alpha}^{{\rm PV}}(u,v;\tau)=\frac{\pi\sqrt{uv}}{8(u-v)}\Bigg( \left(H^{(1)}_{\alpha-\frac{1}{2}}(\pi v)+H^{(2)}_{\alpha-\frac{1}{2}}(\pi v)\right)\left(H^{(1)}_{\alpha+\frac{1}{2}}(\pi u)+H^{(2)}_{\alpha+\frac{1}{2}}(\pi u)\right)\\
- \left(H^{(1)}_{\alpha+\frac{1}{2}}(\pi v)+H^{(2)}_{\alpha+\frac{1}{2}}(\pi v)\right)\!\left(H^{(1)}_{\alpha-\frac{1}{2}}(\pi u)+H^{(2)}_{\alpha-\frac{1}{2}}(\pi u)\right)\Bigg)\!+\mathcal{O}\big(|\tau|^{1/2}\log|\tau|\big)+\mathcal{O}\big(|\tau|^{\frac{1}{2}+\alpha}\big).
\end{gather*}
The identity $H_{\alpha}^{(1)}(z)+H_{\alpha}^{(2)}(z)=2J_{\alpha}(z)$ yields our result: as $\tau\to 0$ we have
\begin{gather}
\mathbb K_{\alpha}^{{\rm PV}}(u,v;\tau)=\frac{\pi\sqrt{uv}}{2(u-v)} \left(J_{\alpha-\frac{1}{2}}(\pi v)J_{\alpha+\frac{1}{2}}(\pi u)
- J_{\alpha+\frac{1}{2}}(\pi v)J_{\alpha-\frac{1}{2}}(\pi u)\right)\nonumber\\
\hphantom{\mathbb K_{\alpha}^{{\rm PV}}(u,v;\tau)=}{}
+\mathcal{O}\big(|\tau|^{1/2}\log|\tau|\big)+\mathcal{O}\big(|\tau|^{\frac{1}{2}+\alpha}\big).\label{limBessel<}
\end{gather}
This proves \eqref{lim Bessel}.

\section[Model RH problem for $\tau>0$]{Model RH problem for $\boldsymbol{\tau>0}$}\label{sec: PV2}
 The model RH problem relevant for the case $\tau>0$ is a special case of the one found in~\cite{CK} (up to a simple transformation), and dif\/fers from the one for $\tau<0$, although there are similarities. For instance, the following RH problem, is also connected to the Painlev\'e V equation, and an explicit solution can be constructed for even~$\alpha$ (but not for odd $\alpha$, which was possible when~$\tau<0$). In the RH problem below, the relevant values for the parameters are $\alpha>-1/2$ and $s\in (0,-i\infty)$.

\subsubsection*{RH problem for $\boldsymbol{\Psi^+}$}
\begin{figure}[t]\centering
\begin{picture}(120,112)(-5,-55)

\put(-18,0){$0$}
\put(68,0){$1$}
\put(102,42){${\Gamma}^+_1$}
\put(-61,42){${\Gamma}^+_2$}
\put(-61,-60){${\Gamma}^+_3$}
\put(102,-60){${\Gamma}^+_4$}
\put(27,0){${\Gamma}^+_5$}

\put(100,-8){I}
\put(27,20){II}
\put(-50,-8){III}
\put(27,-50){IV}

\put(115,25){$\begin{pmatrix} 1 & -e^{-\pi i\alpha}\\ 0 &1\end{pmatrix}$}
\put(-119,25){$\begin{pmatrix} 1 &-e^{\pi i\alpha}\\0 &1\end{pmatrix}$}
\put(-119,-43){$\begin{pmatrix} 1 &0\\ e^{-\pi i\alpha} &1\end{pmatrix}$}
\put(117,-43){$\begin{pmatrix} 1 &0\\e^{\pi i\alpha} &1\end{pmatrix} $}
\put(12,-25){$\begin{pmatrix}1&-1\\ 1&0\end{pmatrix}$}

\put(-18,-5){\line(1,0){90}}
\put(-78,-65){\line(1,1){60}}
\put(72,-5){\line(1,1){60}}
\put(-78,55){\line(1,-1){60}}
\put(72,-5){\line(1,-1){60}}

\put(-18,-5){\thicklines\circle*{2.5}}
\put(72,-5){\thicklines\circle*{2.5}}

\put(32,-5){\thicklines\vector(1,0){.0001}}

\put(-52,-39){\thicklines\vector(1,1){.0001}}
\put(106,29){\thicklines\vector(1,1){.0001}}
\put(-52,29){\thicklines\vector(1,-1){.0001}}
\put(106,-39){\thicklines\vector(1,-1){.0001}}

\end{picture}
\caption{The jump contour $\Gamma^+$ and the jump matrices for $\Psi^+$.}\label{figure: contour Psi+}
\end{figure}
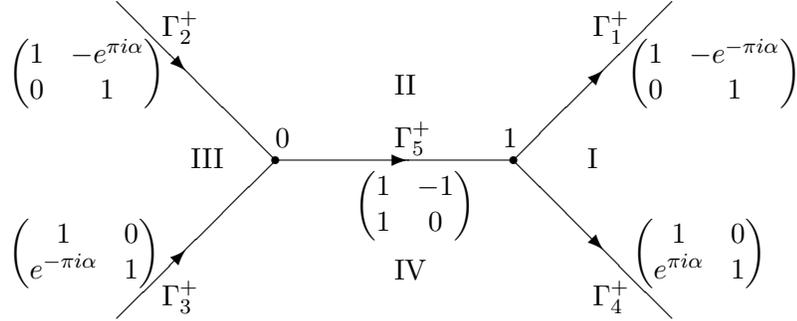

\begin{itemize}\itemsep=0pt
\item[(a)] $\Psi^+\colon \mathbb{C}{\setminus} \Gamma^+ \rightarrow \mathbb{C}^{2\times 2}$ is analytic, with $\Gamma^+=\Gamma_1^+\cup\Gamma_2^+\cup\Gamma_3^+\cup\Gamma_4^+\cup[0,1]$, and
\begin{gather*}\Gamma_1^+=1+e^{\frac{\pi i}{4}}\mathbb R^+,\qquad \Gamma_2^+=e^{\frac{3\pi i}{4}}\mathbb R^+,\qquad \Gamma_3^+=e^{-\frac{3\pi i}{4}}\mathbb R^+,\qquad \Gamma_4^+=1+e^{-\frac{\pi i}{4}}\mathbb R^+,
\end{gather*}
oriented as in Fig.~\ref{figure: contour Psi+}.
\item[(b)]
$\Psi^+$ has continuous boundary values $\Psi_\pm^+$ on $\Gamma^+ {\setminus} \{0,1\}$. We have the jump relations
\begin{alignat*}{3}
&\Psi_+^+(z)=\Psi_-^+(z)\begin{pmatrix} 1 & -e^{-\pi i\alpha}\\ 0 &1\end{pmatrix} \qquad &&\textrm{for $z \in \Gamma_1^+$,}&\\
&\Psi_+^+(z)=\Psi_-^+(z)\begin{pmatrix} 1 &-e^{\pi i\alpha}\\0 &1\end{pmatrix} \qquad &&\textrm{for $z \in \Gamma_2^+$,}&\\
&\Psi_+^+(z)=\Psi_-^+(z)\begin{pmatrix} 1 &0\\ e^{-\pi i\alpha} &1\end{pmatrix} \qquad &&\textrm{for $z \in \Gamma_3^+$,}&\\
&\Psi_+^+(z)=\Psi_-^+(z)\begin{pmatrix} 1 &0\\e^{\pi i\alpha} &1\end{pmatrix} \qquad &&\textrm{for $z \in \Gamma_4^+$,}&\\
&\Psi_+^+(z)=\Psi_-^+(z)\begin{pmatrix}1&-1\\ 1&0\end{pmatrix}\qquad &&\textrm{for $z \in (0,1)$.}&
\end{alignat*}
\item[(c)]There exist $q$, $r$, $p$ (depending on $s$, $\alpha$ but not on $z$) such that~$\Psi^+(z)$ has the following limiting behavior as $z\to \infty$:
\begin{gather*}
\Psi^+(z)=\left(I+\frac{1}{z}\begin{pmatrix}q&r\\ p&- q\end{pmatrix}+\mathcal{O}\big(z^{-2}\big)\right)\exp\left(-\frac{s}{2}z\sigma_3\right).
\end{gather*}
\item[(d)]If $\alpha \notin \mathbb{N}$ then the functions
\begin{gather}G(z):= \Psi^+(z)\begin{pmatrix}1&g\\0&1\end{pmatrix}z^{-\frac{\alpha}{2}\sigma_3}, \label{Psi+d1} \\
H(z):= \Psi^+(z)\begin{pmatrix}1&e^{-\pi i \alpha}g\\0&1\end{pmatrix}(z-1)^{-\frac{\alpha}{2}\sigma_3}\label{Psi+d2}
\end{gather} are bounded for $z$ near $0$ and $1$ respectively when $z$ is in region~IV, with $g=\frac{e^{\pi i \alpha}-1}{2i \sin(\pi \alpha)}$. We let $z^{\frac{\alpha}{2}},(z-1)^{\frac{\alpha}{2}}>0$ on the~$+$ side of $(1,\infty)$. If $\alpha\in \mathbb{N}$ then the functions
\begin{gather}
G(z):= \Psi^+(z)\begin{pmatrix}1&\frac{1-e^{\pi i \alpha}}{2\pi i}\log z\\0&1\end{pmatrix}z^{-\frac{\alpha}{2}\sigma_3},\\
H(z):= \Psi^+(z)\begin{pmatrix}1&\frac{e^{-\pi i \alpha}-1}{2\pi i} \log(z-1)\\0&1\end{pmatrix}(z-1)^{-\frac{\alpha}{2}\sigma_3}
\label{Hint}
\end{gather}
are bounded for $z$ near $0$ and $1$ respectively when $z$ is in region~IV. We let $\log z, \log (z-1)$ $>0$ on the~$+$ side of $(1,\infty)$.
\end{itemize}
The above RH problem is equivalent to the one from \cite[Section~3]{CK}. Indeed, if we def\/ine
\begin{gather}\label{LinkCK}
\Psi_{\rm CK}(\zeta;s)= e^{\frac{s}{4}\sigma_3}\Psi^+\left(\frac{i}{2}(\zeta-i);s\right),
\end{gather}
then $\Psi_{\rm CK}$ solves the RH problem in \cite[Section 3]{CK} in the case $\beta_1=\beta_2=0$, $\alpha_1=\alpha_2=\alpha/2$.

In \cite{CK}, it was proved that the RH problem for $\Psi_{\rm CK}$ is solvable if $\alpha>-1/2$ for every $s\in (0,-i\infty)$, and it follows that the same is true for the RH problem for $\Psi^+$.

For $\tau>0$ and $u\in\mathbb R$, we def\/ine
\begin{gather}\label{def Phi +}
\begin{pmatrix}\Phi_1(u;\tau)\\ \Phi_2(u;\tau)\end{pmatrix}= \Psi^+\left(z=\frac{2 u}{|s|}+\frac{1}{2};s\right)\Delta\left(\frac{2 u}{|s|}+\frac{1}{2}\right), \qquad s=-i\sqrt{\tau},
\end{gather}
where
\begin{gather}\label{Delta}
\Delta(z)=\begin{cases}
\begin{pmatrix}e^{-\frac{\pi i \alpha}{2}}\\ -e^{\frac{\pi i \alpha}{2}}\end{pmatrix}&\textrm{for $\Re z>1$,}\vspace{1mm}\\
\begin{pmatrix}0\\-1\end{pmatrix}&\textrm{for $0<\Re z<1$,}\vspace{1mm}\\
\begin{pmatrix}e^{\frac{\pi i \alpha}{2}}\\ -e^{-\frac{\pi i \alpha}{2}}\end{pmatrix} &\textrm{for $\Re z<0$,}
\end{cases}
\end{gather}
and where $\Psi^+(z;s)$ has to be understood as $\Psi_+^+(z;s)$ for $z\in (0,1)$.
Then, $\Phi_1(u;\tau), \Phi_2(u;\tau)$ are smooth on the real line, except at the singularities $\pm\frac{\sqrt{\tau}}{4}$.
As $u\to \pm\infty$, $\Phi_1$ and $\Phi_2$ have the asymptotic behavior
\begin{gather}\label{as Phi +}
\begin{pmatrix}
\Phi_1(u;\tau)\\
\Phi_2(u;\tau)
\end{pmatrix}=\left( I +\mathcal{O}\big(u^{-1}\big)\right) \exp((iu+i\sqrt{\tau}/4)\sigma_3)\Delta\left(\frac{2 u}{\sqrt{\tau}}+\frac{1}{2}\right),
\end{gather}
similar to \eqref{as Phi -}.

\subsection{Lax pair and Painlev\'e~V}
The results about the Lax pair and the relation with the Painlev\'e V equation hold for $\Psi^+$ as well as for $\Psi^-$. Indeed, $\Psi=\Psi^+$ satisf\/ies the Lax pair (\ref{Lax})--(\ref{Lax2}), where $A$ and $B$ can be parameterized as in (\ref{Eqn:A})--(\ref{B2}). The functions $v$, $u$ def\/ined in (\ref{v}), (\ref{u}) in terms of $q,r,p$, solve equations (\ref{ODE u}), (\ref{ODE v}), and $\sigma$ def\/ined as in (\ref{sigma}) solves the $\sigma$-form (\ref{PainleveV}) of Painlev\'e V. It has to be noted that the functions $q(s)$, $r(s)$, $p(s)$, $v(s)$, $u(s)$, $\sigma(s)$ def\/ined here for $\tau>0$ are in general dif\/ferent from their counterparts in Section~\ref{section: PV} for $\tau<0$.

\subsection[Recursive construction for $\alpha\in 2\mathbb N$]{Recursive construction for $\boldsymbol{\alpha\in 2\mathbb N}$}\label{sec: rec+}

Similarly to the case $\tau<0$, we def\/ine
\begin{gather}\label{DefX+} X_{\alpha}(z;s)= \begin{cases} \Psi_{\alpha}^+(z;s) & \textrm{for $z \in {\rm I}, {\rm III}$,} \\
\Psi_{\alpha}^+(z;s) \begin{pmatrix} 1&1\\0&1 \end{pmatrix} & \textrm{for $z \in {\rm II}$,} \vspace{1mm}\\
\Psi_{\alpha}^+(z;s) \begin{pmatrix} 1&0\\ 1&1 \end{pmatrix} & \textrm{for $z \in {\rm IV}$,} \end{cases} \end{gather}
with regions I, II, III, IV as in Fig.~\ref{figure: contour Psi+}.
Then, it is easily verif\/ied that, for even $\alpha$, $X_\alpha$ satisf\/ies the RH problem for $X_\alpha$ stated in Section~\ref{sec: rec-}. It is important to note that this is not the case for~$\alpha$ odd. It follows that the formulas for $X_2$ and $q_2$ given in \eqref{formula X2} and \eqref{formulas q} still hold for $\tau>0$. By~\eqref{sigma q}, we obtain the formula \eqref{sigma2-intro} for $\sigma_2^+$.

\subsection[The kernel $\mathbb K_\alpha$ as $\tau\to +\infty$]{The kernel $\boldsymbol{\mathbb K_\alpha}$ as $\boldsymbol{\tau\to +\infty}$}
An asymptotic analysis as $s=-i\sqrt{\tau}\to -i\infty$ of the RH problem for $\Psi_{\rm CK}$, equivalent to the RH problem for $\Psi^+$, has been performed in \cite[Section 5]{CK}.
The following result can be extracted from that analysis:
\begin{gather} \label{s infty small norm +} {\Psi}_+^{+}(x;s) \begin{pmatrix} 1&1\\0&1\end{pmatrix} e^{\frac{sx}{2}\sigma_3} =I+\mathcal{O}\left(\frac{1}{s(|x|+1)}\right) \qquad \textrm{for $x\in (0,1)$,}
\end{gather}
as $s\to -i\infty$.
Substituting \eqref{s infty small norm +} and \eqref{def Phi +} in \eqref{limker}, we obtain that the Painlev\'e kernel $\mathbb K_\alpha^{{\rm PV}}$ converges to the sine kernel, proving~(\ref{lim sine}) whenever $|u-v|$ remains bounded in the limit where $\tau\to +\infty$:
\begin{gather*}
\mathbb K_{\alpha}^{{\rm PV}}(u,v;\tau)=\mathbb K^{\sin}(u,v)+\bigO \big(\tau^{-1/2}\big).
\end{gather*}

\subsection[The kernel $\mathbb K_\alpha^{\rm PV}$ as $\tau\to 0$]{The kernel $\boldsymbol{\mathbb K_\alpha^{\rm PV}}$ as $\boldsymbol{\tau\to 0}$}

Recall that the parameter $s$ in the model problem for $\Psi^+$ is negative imaginary.
Def\/ine
\begin{gather} \label{DefnPsi2}
{\Psi}^{(2)}(z;s) =e^{\frac{s}{4}\sigma_3} \Psi^+\left(\frac{2z}{|s|}+\frac{1}{2};s\right) \chi\left(z\right),\\
\chi(z) =\label{Defnchi}\begin{cases}
I,&\mbox{for $-\frac{|s|}{4}<\Re z<\frac{|s|}{4}$,}\\
e^{\frac{\pi i\alpha}{2}\sigma_3},&\mbox{for $\Re z<-\frac{|s|}{4}$,}\\
e^{\frac{-\pi i\alpha}{2}\sigma_3},&\mbox{for $\Re z>\frac{|s|}{4}$.}\\
\end{cases}
\end{gather}

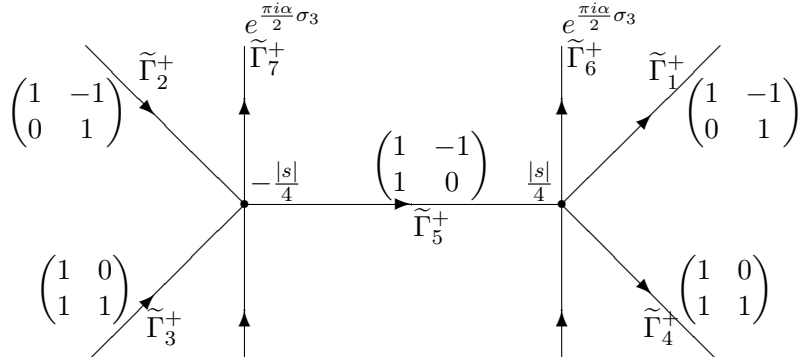
\begin{figure}[t]\centering
\begin{picture}(120,120)(-5,-50)

\put(-16,0){$-\frac{|s|}{4}$}
\put(88,0){$\frac{|s|}{4}$}
\put(135,42){$\widetilde{\Gamma}_1^+$}
\put(-58,42){$\widetilde{\Gamma}_2^+$}
\put(-55,-55){$\widetilde{\Gamma}_3^+$}
\put(133,-55){$\widetilde{\Gamma}_4^+$}
\put(46,-17){$\widetilde{\Gamma}_5^+$}
\put(104,47){$\widetilde{\Gamma}_6^+$}
\put(-16,47){$\widetilde{\Gamma}_7^+$}

\put(148,27){$\begin{pmatrix}1&-1\\0&1\end{pmatrix}$}
\put(-108,27){$\begin{pmatrix}1&-1\\0&1\end{pmatrix}$}
\put(-97,-40){$\begin{pmatrix}1&0\\1&1\end{pmatrix}$}
\put(145,-40){$\begin{pmatrix}1&0\\1&1\end{pmatrix}$}
\put(30,7){$\begin{pmatrix} 1&-1\\1&0\end{pmatrix}$}
\put(102,60){$e^{\frac{\pi i \alpha}{2}\sigma_3} $}
\put(-18,60){$e^{\frac{\pi i \alpha}{2}\sigma_3} $}

\put(-18,-5){\line(1,0){120}}
\put(-78,-65){\line(1,1){60}}
\put(102,-5){\line(1,1){60}}
\put(-78,55){\line(1,-1){60}}
\put(102,-5){\line(1,-1){60}}
\put(102,-65){\line(0,1){120}}
\put(-18,-65){\line(0,1){120}}

\put(-18,-5){\thicklines\circle*{2.5}}
\put(102,-5){\thicklines\circle*{2.5}}

\put(45,-5){\thicklines\vector(1,0){.0001}}

\put(-52,-39){\thicklines\vector(1,1){.0001}}
\put(136,29){\thicklines\vector(1,1){.0001}}
\put(-52,29){\thicklines\vector(1,-1){.0001}}
\put(136,-39){\thicklines\vector(1,-1){.0001}}

\put(102,-45){\thicklines\vector(0,1){.0001}}
\put(102,35){\thicklines\vector(0,1){.0001}}
\put(-18,-45){\thicklines\vector(0,1){.0001}}
\put(-18,35){\thicklines\vector(0,1){.0001}}

\end{picture}
\caption{The contour $\widetilde{\Gamma}^+$ and the jump matrices for $\Psi^{(2)}$.}\label{ContourGammaHat}
\end{figure}

 Then ${\Psi}^{(2)}$ satisf\/ies the following RH problem:
\subsubsection*{RH problem for $\boldsymbol{\Psi^{(2)}}$}
\begin{itemize}\itemsep=0pt
\item[(a)] ${\Psi}^{(2)}\colon \mathbb{C}{\setminus} {\widetilde\Gamma^+} \to \mathbb{C}$ is analytic, where ${\widetilde\Gamma^+}$ is as in Fig.~\ref{ContourGammaHat},
\begin{alignat*}{3} &{\widetilde\Gamma^+}= \cup_{i=1}^8 {\widetilde\Gamma^+}_i, \qquad && {\widetilde\Gamma^+}_1=\{z\colon \arg(z-|s|/4)=\pi/4\},& \\
&
{\widetilde\Gamma}_2^+=\{z\colon \arg(z+|s|/4)=3\pi/4\} ,\qquad &&
{\widetilde\Gamma}_3^+=\{z\colon \arg(z+|s|/4)=5\pi/4\} ,& \\
&{\widetilde\Gamma^+}_4=\{z\colon \arg(z-|s|/4)=7\pi/4\},\qquad &&
{\widetilde\Gamma^+}_5=[-|s|/4,|s|/4],& \\
&{\widetilde\Gamma^+}_6=\{z\colon \Re z=|s|/4\} ,\qquad &&
{\widetilde\Gamma^+}_7=\{z\colon \Re z=-|s|/4\}.&
\end{alignat*}
\item[(b)] ${\Psi}^{(2)}$ has continuous boundary values on $\widetilde\Gamma^+{\setminus} \big\{{-}\frac{|s|}{4},\frac{|s|}{4}\big\}$ given by
\begin{alignat*}{3}
& {\Psi}^{(2)}_+(z)={\Psi}^{(2)}_-(z)\begin{pmatrix}1&-1\\0&1\end{pmatrix}\qquad &&\textrm{for $z\in {\widetilde\Gamma}_1^+\cup {\widetilde\Gamma}_2^+$,}&\\
&{\Psi}^{(2)}_+(z)={\Psi}^{(2)}_-(z)\begin{pmatrix}1&0\\1&1\end{pmatrix}\qquad &&\textrm{for $z\in {\widetilde\Gamma^+}_3\cup {\widetilde\Gamma}_4^+$,}&\\
&{\Psi}^{(2)}_+(z)={\Psi}^{(2)}_-(z)\begin{pmatrix} 1&-1\\1&0\end{pmatrix}\qquad &&\textrm{for $z\in {\widetilde\Gamma^+}_5$,}&\\
&{\Psi}^{(2)}_+(z)= {\Psi}^{(2)}_-(z) e^{\frac{\pi i \alpha}{2}\sigma_3} \qquad &&\textrm{for $z\in {\widetilde\Gamma^+}_6\cup {\widetilde\Gamma^+}_7$.}&
\end{alignat*}
\item[(c)]
${\Psi}^{(2)}(z)$ has the following asymptotic behavior as $z\to \infty$:
\begin{gather} \label{asymptoticsPsi2}
{\Psi}^{(2)}(z)=
\big(I+\mathcal{O}\big(z^{-1}\big)\big)e^{iz\sigma_3}\chi (z ).
\end{gather}
\item[(d)] $\Psi^{(2)}(z)$ inherits its local behavior at $-\frac{|s|}{4}$ and $\frac{|s|}{4}$ from $\Psi^+(z)$ through \eqref{Psi+d1}--\eqref{Hint} and~(\ref{DefnPsi2}).
\end{itemize}

When $s\to -i0$ it was shown in \cite[Section~6.2]{CK} that, for $z$ bounded away from~$0$,
\begin{gather}\label{intrepM2}
{\Psi}^{(2)}(z;s)
\chi^{-1}(z)M^{-1}(z)
=I+\epsilon(z,s),
\end{gather}
where $M$ is as in (\ref{DefnM}), $\epsilon(z,s)= \mathcal{O}(|s|^2)+\mathcal{O}(|s|^{2+4\alpha})$ for $2\alpha \notin \mathbb{Z}$, and $\epsilon(z,s)= \mathcal{O}(|s|\log |s|)$ for $2\alpha \in \mathbb{Z}$.

We substitute (\ref{intrepM2}) and \eqref{DefnPsi2} into \eqref{def Phi +} to f\/ind that as $\tau \to 0 $, $\pm x>0$,
\begin{gather*}
\begin{pmatrix}
\Phi_1\big(x;-s^2\big)\\ \Phi_2\big(x;-s^2\big)
\end{pmatrix}
=M(x)e^{\mp \frac{\pi i \alpha}{2}}\begin{pmatrix}1\\ -1 \end{pmatrix} +\epsilon(x,s).
\end{gather*}

By~\eqref{limker}, we obtain the Bessel kernel after similar calculations as for the case $\tau<0$, proving~(\ref{lim Bessel}).

\section{RH problem for orthogonal polynomials}\label{section: 4}

\subsection{RH problem for orthogonal polynomials and dif\/ferential identity}

The correlation kernel \eqref{correlationkernelCD} for the eigenvalues in the random matrix ensemble~\eqref{matrixmeasure} can be expressed in terms of the solution of a RH problem which characterizes the orthogonal polyno\-mials~$p_j^{(n)}$ with respect to the weight $w_n$ def\/ined in~\eqref{Def:eigs}.

Def\/ine $Y$ by
\begin{gather} \label{MatrixY}
Y(z;n)=\begin{pmatrix}
\displaystyle \frac{1}{\kappa^{(n)}_n}p^{(n)}_n(z)	&	\displaystyle \frac{1}{2\pi i \kappa_n^{(n)}} \int_{\mathbb{R}} \frac{p^{(n)}_n(x)}{x-z} w_n(x) dx \\
-2\pi i \kappa^{(n)}_{n-1}p^{(n)}_{n-1}(z)	&\displaystyle -\kappa^{(n)}_{n-1} \int_{\mathbb{R}} \frac{p^{(n)}_{n-1}(x)}{x-z}w_n(x)dx
\end{pmatrix},
\end{gather}
where $\kappa_j^{(n)}>0$ is the leading coef\/f\/icient of the normalized orthogonal polynomial~$p_j^{(n)}$. It is well-known~\cite{FIK} that~$Y$ can be characterized as the unique solution of the following RH problem.

\subsubsection*{RH problem for $\boldsymbol{Y}$}
\begin{itemize}\itemsep=0pt
\item[(a)] $Y$ is analytic in $\mathbb{C}{\setminus} \mathbb{R}$.
\item[(b)] $Y$ has the following jump relations on $\mathbb{R}$, except at the singularities $\pm \sqrt{t}$ in the case where $t>0$:
\begin{gather*}Y_+(x)=Y_-(x)\begin{pmatrix}1&w_n(x)\\0&1 \end{pmatrix}.
\end{gather*}
\item[(c)] As $z\to \infty$,
\begin{gather*}
Y(z)=\left(I+\mathcal{O} \left( z^{-1} \right)\right) \begin{pmatrix} z^n &0\\ 0& z^{-n} \end{pmatrix}.
\end{gather*}
\item[(d)] If $t>0$ and $\alpha<0$, $Y(z)$ has the behavior
\begin{gather*}Y(z)=\begin{pmatrix} \mathcal{O} (1)& \mathcal{O}\big(|z\mp \sqrt{t}|^{\alpha}\big) \\ \mathcal{O}(1) &\mathcal{O}\big(|z\mp \sqrt{t}|^{\alpha}\big) \end{pmatrix}, \end{gather*}
as $z\rightarrow \pm \sqrt{t}$. If $t>0$ and $\alpha\geq 0$, $Y$ is bounded near $\pm\sqrt{t}$.
\end{itemize}
We can express the eigenvalue correlation kernel $K_n$ and the logarithmic $t$-derivative of the partition function $\widehat Z_n$ exactly in terms of $Y$.

\begin{Prop}\quad
\begin{itemize}\itemsep=0pt
\item[(1)] Let $K_n$ be the correlation kernel defined in~\eqref{correlationkernelCD}. For $x,y \in \mathbb R$, we have
\begin{gather}\label{CorrelationkernelY} K_n(x,y)=\frac{\sqrt{w_n(x)w_n(y)}}{2\pi i}\frac{ \begin{pmatrix}0&1\end{pmatrix} Y_+^{-1}(y)Y_+(x) \begin{pmatrix}1\\0\end{pmatrix}}{x-y}.
\end{gather}
\item[(2)] Let $\widehat Z_n(t)=\widehat Z_n(t,\alpha,V)$ be the partition function defined in~\eqref{Defn PartitionFunction}. Write $z_0=\sqrt t $ when $t>0$ and $z_0=i\sqrt{-t}$ when $t<0$.
For $t<0$, $\alpha>-\frac{1}{2}$ and for $t>0$, $\alpha\geq 0$, we have the differential identity
\begin{gather}\label{diff id}
\frac{d}{dt} \log \widehat Z_n(t)=-\frac{\alpha }{2z_0}\left(\left( Y^{-1}\frac{dY}{dz}\right)_{22}(z_0)-\left(Y^{-1}\frac{dY}{dz}\right)_{22}(-z_0)\right). \end{gather}
\end{itemize}
\end{Prop}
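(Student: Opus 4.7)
The plan is to handle part (1) by direct substitution using $\det Y \equiv 1$, and to prove part (2) by differentiating Heine's determinant formula and identifying the result with $(Y^{-1}Y')_{22}$ evaluated at $\pm z_0$ via an orthogonality trick. For part (1), first observe that $\det Y(z)\equiv 1$: $\det Y$ is entire (the jump matrix for $Y$ has determinant one, and the possible singularities at $\pm\sqrt t$ are at most $|z\mp\sqrt t|^{\alpha}$, hence removable for $\alpha>-1/2$) and tends to $1$ at infinity, so it is constant by Liouville. Consequently $Y^{-1}$ has the adjugate form, and a direct calculation gives
\begin{gather*}
\begin{pmatrix}0&1\end{pmatrix}Y_+^{-1}(y)Y_+(x)\begin{pmatrix}1\\0\end{pmatrix}=-Y_{21}(y)Y_{11}(x)+Y_{11}(y)Y_{21}(x).
\end{gather*}
Inserting the explicit formulas $Y_{11}(z)=p_n^{(n)}(z)/\kappa_n^{(n)}$ and $Y_{21}(z)=-2\pi i\kappa_{n-1}^{(n)}p_{n-1}^{(n)}(z)$ from~\eqref{MatrixY} and dividing by $2\pi i(x-y)$ reproduces the Christoffel--Darboux expression~\eqref{correlationkernelCD} for $K_n(x,y)$.

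For part (2), I would start from Heine's identity $\widehat Z_n(t)=n!\prod_{j=0}^{n-1}h_j^{(n)}$, with $h_j^{(n)}=\int (P_j^{(n)})^2 w_n\,dx$ and $P_j^{(n)}$ the degree-$j$ monic orthogonal polynomial. Because $\partial_t P_j^{(n)}$ has degree $<j$ and is therefore orthogonal to $P_j^{(n)}$, only the explicit $t$-dependence of $w_n$ contributes to $(d/dt)h_j^{(n)}$; summing in $j$ and using $K_n(x,x)/w_n(x)=\sum_j(p_j^{(n)})^2$ together with $\partial_t\log w_n=-\alpha/(x^2-t)$ gives
\begin{gather*}
\frac{d}{dt}\log\widehat Z_n(t)=-\alpha\int_{\mathbb R}\frac{K_n(x,x)}{x^2-t}\,dx.
\end{gather*}
Splitting $1/(x^2-t)=(2z_0)^{-1}[1/(x-z_0)-1/(x+z_0)]$ then expresses the right-hand side as $-\alpha/(2z_0)$ times a difference of Cauchy-type integrals of $K_n(x,x)$ at $\pm z_0$.

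On the Riemann--Hilbert side, $\det Y=1$ yields $(Y^{-1}Y')_{22}(z)=-Y_{21}(z)Y_{12}'(z)+Y_{11}(z)Y_{22}'(z)$, and substituting~\eqref{MatrixY} together with the Christoffel--Darboux identity recasts this as
\begin{gather*}
(Y^{-1}Y')_{22}(z)=\int_{\mathbb R}\frac{\sum_{j=0}^{n-1}p_j^{(n)}(x)p_j^{(n)}(z)\,w_n(x)}{x-z}\,dx.
\end{gather*}
Taking the difference at $z=z_0$ and $z=-z_0$ over the common denominator $x^2-t$, each $j$-summand becomes $w_n(x)p_j^{(n)}(x)S_j(x)/(x^2-t)$ with $S_j(x)=[p_j^{(n)}(z_0)-p_j^{(n)}(-z_0)]x+z_0[p_j^{(n)}(z_0)+p_j^{(n)}(-z_0)]$. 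The key observation is that the polynomial $S_j(x)-2z_0\,p_j^{(n)}(x)$, of degree $\le j$, vanishes at both $x=\pm z_0$, so it is divisible by $x^2-t$ with a quotient of degree at most $j-2$; orthogonality of $p_j^{(n)}$ against polynomials of lower degree then forces $\int p_j^{(n)}[S_j-2z_0\,p_j^{(n)}]w_n/(x^2-t)\,dx=0$, so that
\begin{gather*}
(Y^{-1}Y')_{22}(z_0)-(Y^{-1}Y')_{22}(-z_0)=2z_0\int_{\mathbb R}\frac{K_n(x,x)}{x^2-t}\,dx,
\end{gather*}
which combined with the previous display gives the differential identity~\eqref{diff id}.

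The main delicate point, which I will not grind through, is the $t>0$ case: $z_0=\sqrt t$ is then real, $K_n(x,x)$ behaves like $|x^2-t|^\alpha$ near $\pm z_0$, and the assumption $\alpha\ge 0$ is exactly what is needed to keep the Cauchy integrals absolutely convergent and to allow $Y$ to be evaluated at $\pm z_0$ via the boundary behavior dictated by condition~(d) of the RH problem for $Y$. Once these analytic issues are handled, the polynomial divisibility and orthogonality argument above applies verbatim.
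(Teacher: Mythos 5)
Your part (1) is the same routine calculation the paper gestures at; no issues there.

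For part (2) you take a genuinely different, and arguably cleaner, route to the same identity. Both proofs begin from $\widehat Z_n=\prod_j\kappa_j^{-2}$, but they diverge immediately: the paper differentiates the Christoffel--Darboux sum $\sum_j p_j^2=\frac{\kappa_{n-1}}{\kappa_n}(p_n'p_{n-1}-p_np_{n-1}')$ in $t$, integrates by parts, and tracks the resulting boundary-like integrals $J_1$, $J_2$ until they can be recognized as entries of $Y$. You instead push the $t$-derivative onto the weight (the only explicit $t$-dependence), getting $\frac{d}{dt}\log\widehat Z_n=-\alpha\int\frac{K_n(x,x)}{x^2-t}\,dx$ in one step, and then separately identify $(Y^{-1}Y')_{22}(z)=\int\frac{\sum_jp_j(x)p_j(z)w_n(x)}{x-z}\,dx$ directly from the integral representation of the second column of $Y$ and CD. The reconciliation then reduces to the neat observation that $S_j(x)-2z_0p_j(x)$ vanishes at $\pm z_0$ and is annihilated by orthogonality once the factor $x^2-t$ is cancelled. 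Your argument is more structural and avoids the bookkeeping in the paper's manipulation of $J_1$ and $J_2$; the paper's version has the modest advantage of producing the $Y$-entries without ever invoking $\det Y=1$ or the explicit $Y^{-1}$. The analytic caveats you flag for $t>0$ (absolute convergence of $\int w_n p_j^2/(x^2-t)$ requires $\alpha>0$, and $\alpha=0$ is trivial since both sides vanish) are exactly the reason the hypothesis $\alpha\geq 0$ appears, and match the Remark the paper makes after the Proposition about regularizing the integrals when $\alpha<0$. The proposal is correct.
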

\begin{proof}
The formula for the correlation kernel (\ref{CorrelationkernelY}) follows after a straightforward calculation from (\ref{correlationkernelCD}) and \eqref{MatrixY}.

To obtain (\ref{diff id}), we follow ideas from \cite{Krasovsky} and \cite{CK}. In the remaining part of this proof, we suppress the dependence on $n$ in our notations and we write $p_j$, $\kappa_j$, and $w$ instead of $p_j^{(n)}$, $\kappa_j^{(n)}$, and~$w_n$. We start with the well known formula
\begin{gather*}
\widehat Z_n(t)=\prod_{j=0}^{n-1}\kappa_j^{-2}.
\end{gather*}
Using the orthogonality of the polynomials and the Christof\/fel--Darboux formula, it follows that
\begin{align*}
\frac{d}{dt}\log \widehat Z_n(t) & =-2 \sum_{j=0}^{n-1}\frac{\partial_t \kappa_j}{\kappa_j}=-2\sum_{j=0}^{n-1} \int_{-\infty}^{\infty}p_j(x)\partial_t(p_j(x))w(x)dx\\
& = -\int_{-\infty}^{\infty}\partial_t \left( \sum_{j=0}^{n-1}p_j^2(x)\right)w(x) dx\\
& =-\int_{-\infty}^{\infty}\partial_t \left( \frac{\kappa_{n-1}}{\kappa_n}(\partial_x(p_n(x))p_{n-1}(x)-p_n(x)\partial_x(p_{n-1}(x)))\right)w(x)dx.
\end{align*}
By orthogonality, it is straightforward to check that \begin{gather}\label{dt1}
 \frac{d}{dt} \log \widehat Z_n(t)=-n\frac{\partial_t\kappa_{n-1}}{\kappa_{n-1}}+\frac{\kappa_{n-1}}{\kappa_{n}}(J_1-J_2),\\
 J_1=\int^{\infty}_{-\infty}\partial_{t}p_n(x)\partial_xp_{n-1}(x)w(x)dx,\qquad J_2=\int^{\infty}_{-\infty}\partial_{x}p_n(x)\partial_{t}p_{n-1}(x)w(x)dx.\label{J1J2}
\end{gather}
 We start by evaluating $J_1$:
\begin{gather} \label{formulachangedifferential}
J_1=-\int_{-\infty}^{\infty}p_n(x)\partial_t(\partial_x p_{n-1}(x)w(x))dx+\int_{-\infty}^{\infty}\partial_t(p_n(x)\partial_xp_{n-1}(x)w(x))dx.
\end{gather}
The rightmost term in this expression vanishes, as one can see by taking the derivative outside the integral and using orthogonality.
By the form of the weight function~$w$ given in~\eqref{Def:eigs}, we obtain
\begin{gather}\label{formulaJ1}
 J_1=\frac{\alpha }{2z_0}\big(I^{(1)}_+-I^{(1)}_-\big),\\
 I^{(1)}_{\pm}=\int_{-\infty}^{\infty}\frac{p_n(x)\partial_x p_{n-1}(x)w(x)}{x\mp z_0} dx.\label{integral I}
\end{gather}
Writing $\partial_x(p_{m}(x))|_{x=y}=p_{m,x}(y)$, we f\/ind that
\begin{align*}
I^{(1)}_{\pm}& =\int^{\infty}_{-\infty}\frac{p_n(x)(p_{n-1,x}(x)-p_{n-1,x}(\pm z_0))w(x)}{x\mp z_0} dx +p_{n-1,x}(\pm z_0)\int^{\infty}_{-\infty}\frac{p_n(x)w(x)}{x\mp z_0} dx \\
& =p_{n-1,x}(\pm z_0)\int^{\infty}_{-\infty}\frac{p_n(x)w(x)}{x\mp z_0} dx.
\end{align*}
We proceed in a similar fashion for $J_2$ and f\/ind that
\begin{gather}
J_2=-n\frac{\kappa_n\partial_t\kappa_{n-1}}{\kappa_{n-1}^2}+\frac{\alpha}{2z_0}p_{n,x} (z_0)\int^{\infty}_{-\infty}\frac{p_{n-1}(x)w(x)}{x-z_0} dx\nonumber\\
\hphantom{J_2=}{}
 -\frac{\alpha }{2z_0}p_{n,x}(-z_0)\int^{\infty}_{-\infty}\frac{p_{n-1}(x)w(x)}{x+z_0} dx.\label{formulaJ2}
\end{gather}
Substituting \eqref{formulaJ1}--\eqref{formulaJ2} into \eqref{dt1}, and using \eqref{MatrixY}, it is straightforward to obtain~\eqref{diff id}.
\end{proof}

\begin{Remark}
When $t>0$ and $\alpha<0$, the integrals in~\eqref{integral I} are not def\/ined. Using techniques similar to those in~\cite{Krasovsky}, one can regularize those integrals and obtain a dif\/ferential identity similar to~\eqref{diff id}. We refer the reader to~\cite{CK, Krasovsky} for more details.
\end{Remark}

\subsection[The $g$-function and normalization of the RH problem]{The $\boldsymbol{g}$-function and normalization of the RH problem}

We proceed with a transformation of the RH problem for $Y$ to one which is normalized to the identity at inf\/inity, and which has suitable jump conditions. This is a standard part of the Deift/Zhou steepest descent method~\cite{DeiftZhou} which we apply in order to get asymptotics for~$Y$ as $n\to\infty$ with $t$ small. Def\/ine
\begin{gather}
T(z)=e^{\frac{n\ell}{2}\sigma_3}Y(z)e^{-ng(z)\sigma_3}e^{\frac{-n\ell}{2}\sigma_3},\label{def T}
\end{gather}
where $g$ is def\/ined by
\begin{gather}\label{def g} g(z)=\int \log(z-x) d\mu_V(x) ,\end{gather}
with $\mu_V$ the equilibrium measure given by \eqref{eq density}, minimizing (\ref{energy}) and satisfying the condi\-tions~(\ref{EulLag1}),~(\ref{EulLag2}).
 Clearly
 $g(z)= \log z + \mathcal{O}\left(z^{-1}\right)$ as $z\to \infty$, and it follows that $T$ satisf\/ies the following RH problem.

 \subsubsection*{RH problem for $\boldsymbol{T}$}
\begin{itemize}\itemsep=0pt
\item[(a)]
$T$ is analytic in $\mathbb{C}{\setminus} \mathbb{R}$.
\item[(b)]
$T$ has the following jump relation for $x \in \mathbb{R}$ (except for $x=\pm\sqrt{t}$ if $t>0$),
\begin{gather}\label{jump T}
T_+(x)=T_-(x)
\begin{pmatrix}
e^{-n(g_+(x)-g_-(x))} &|x^2-t|^{\alpha} e^{n(g_+(x)+g_-(x)-V(x)+\ell)} \\
0 & e^{n(g_+(x)-g_-(x))}
\end{pmatrix}.
\end{gather}
\item[(c)]
$T(z)=I+\mathcal{O}(z^{-1}) \quad \text{as $z\rightarrow \infty$}.$
\item[(d)] If $t>0$, $\alpha<0$, as $z\to \pm\sqrt{t}$, we have
\begin{gather}\nonumber
T(z)=
\begin{pmatrix} \mathcal{O} (1)& \mathcal{O}\big(|z\mp \sqrt{t}|^{\alpha}\big) \\ \mathcal{O}(1) &\mathcal{O}\big(|z\mp \sqrt{t}|^{\alpha}\big) \end{pmatrix}.
\end{gather}
If $t>0$ and $\alpha\geq 0$, $T$ is bounded near $\pm\sqrt{t}$.
\end{itemize}
By the fact that $V$ is one-cut regular, we have~(\ref{eq density}) with $h$ real analytic and positive on $[a,b]$. By~(\ref{def g}) and the Euler--Lagrange conditions (\ref{EulLag1}),~(\ref{EulLag2}), it can be shown that $g$ satisf\/ies the following properties which we will need later on:
\begin{itemize}\itemsep=0pt
\item[(a)] $g_+(x)-g_-(x)=\begin{cases} 0& \textrm{for $x>b$,} \\
2\pi i& \textrm{for $x<a$,} \\
2\pi i\int_x^b h(\xi)((\xi-a)(b-\xi))^{\frac{1}{2}}d\xi& \textrm{for $x\in [a,b]$,}
 \end{cases}$
\item[(b)] $g_+(x)+g_-(x)-V(x)+\ell<0$ for $x>b$ and $x<a$,
\item[(c)] $g_+(x)+g_-(x)-V(x)+\ell= \begin{cases}
2\pi \int_x^a h(\xi)\left((\xi-a)(\xi-b)\right)^{1/2}d\xi & \textrm{for $x<a$,}\\
0 &\textrm{for $x\in [a,b]$,}\\
-2\pi \int_b^x h(\xi)\left((\xi-a)(\xi-b)\right)^{1/2}d\xi & \textrm{for $x>b$.}
\end{cases}$
\end{itemize}
In part (a), the square root is positive for $s\in (a,b)$, and in part (c), the square roots are positive for $s>b$ and negative for $s<a$, with branch cut on $[a,b]$. The function $h$ is analytic on a~neighborhood $\mathcal U$ of $\mathbb{R}$, and we def\/ine the function $\phi\colon \mathcal U {\setminus} (-\infty,b]\to \mathbb{C}$ as
\begin{gather}
\phi(z)=\pi\int_b^z h(\xi)\left((\xi-a)(\xi-b)\right)^{1/2}d\xi,\label{phi}
\end{gather}
with the square root analytic on $\mathbb C{\setminus}[a,b]$ and positive for $\xi>b$, and with the path of integration not crossing $(-\infty,b]$.
By property~(a) of the function~$g$, it follows that
\begin{gather*}
\mp(g_+(x)-g_-(x))=2\phi_{\pm}(x) \qquad \textrm{for $x\in(a,b)$.}
\end{gather*}
Combining this with property (c) for the function $g$ we obtain
\begin{gather} \nonumber
T_+(x)=T_-(x)
\begin{pmatrix}
e^{2n\phi _+(x)} &|x^2-t|^{\alpha} \\
0 & e^{2n\phi _-(x)}
\end{pmatrix}\qquad \textrm{for $x\in[a,b]$.}
\end{gather}
By property (c) of the function $g$ it follows that
\begin{gather}\label{Consequencepropc}
g_+(x)+g_-(x)-V(x)+\ell=\begin{cases}
-2\phi(x)& \textrm{for $x>b$,}\\
-2\phi_+(x) -2\pi i& \textrm{for $x<a$.}
 \end{cases}
\end{gather}
Combined with property (a) of $g$ it follows that
\begin{gather*}
T_+(x)= T_-(x)
\begin{pmatrix}
1 &|x^2-t|^{\alpha} e^{-2n{\phi}(x)} \\
0 &1
\end{pmatrix}
\qquad \textrm{for $ x<a$ and for $b<x$},
\end{gather*}
where we note that $e^{-2n{\phi_+}(x)}=e^{-2n{\phi_-}(x)}$ by~(\ref{Consequencepropc}).

\section[Asymptotic analysis for $T$ as $n\to \infty$ for $t<0$]{Asymptotic analysis for $\boldsymbol{T}$ as $\boldsymbol{n\to \infty}$ for $\boldsymbol{t<0}$}\label{section: RH-}
In what follows, we will perform further transformations on $T$ in order to obtain a RH problem for which we can compute large~$n$ asymptotics. We f\/irst complete the analysis in the case~$t<0$.

\subsection{Opening of the lens}

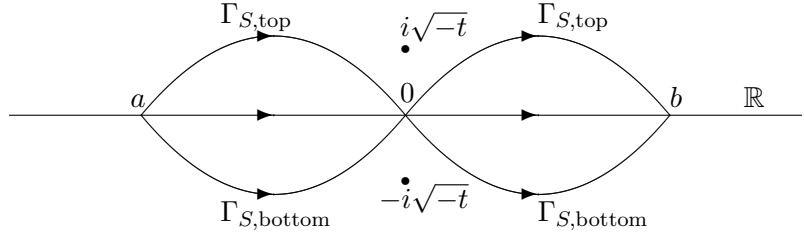
\begin{figure}[t]\centering
\begin{picture}(100,80)(-5,-40)
 \put(40,0){$0$}
\put(-62,-2){$a$}
\put(142,-2){$b$}
\put(40,25){$i\sqrt{-t}$}
\put(32,-40){$-i\sqrt{-t}$}
\put(-108,-5){\line(1,0){300}}

 \put(92,25){\thicklines\vector(1,0){.0001}}
 \put(92,-5){\thicklines\vector(1,0){.0001}}
 \put(92,-35){\thicklines\vector(1,0){.0001}}
\put(-8,25){\thicklines\vector(1,0){.0001}}
\put(-8,-5){\thicklines\vector(1,0){.0001}}
\put(-8,-35){\thicklines\vector(1,0){.0001}}

\put(170,-2){$\mathbb{R}$}
\put(-28,30){$\Gamma_{S,\textrm{top}}$}
\put(92,30){$\Gamma_{S,\textrm{top}}$}
\put(-28,-45){$\Gamma_{S,\textrm{bottom}}$}
\put(92,-45){$\Gamma_{S,\textrm{bottom}}$}

 \qbezier(-58,-5)(-8,55)(42,-5)
 \qbezier(-58,-5)(-8,-65)(42,-5)
 \qbezier(42,-5)(92,55)(142,-5)
 \qbezier(42,-5)(92,-65)(142,-5)

\put(42,20){\thicklines\circle*{2.5}}
\put(42,-30){\thicklines\circle*{2.5}}
\end{picture}
\caption{The contour $\Gamma_S$.}\label{ContourS}
\end{figure}

We deform the jump contour for $T$, which is the real line, to a lens-shaped contour around $[a,b]$ as in Fig.~\ref{ContourS}.
It is important that the singularities $\pm i\sqrt{-t}$ lie in the region outside of the lens, and that the lenses lie within the region~$\mathcal U$ of analyticity of~$h$. Since we want to obtain asymptotics which are valid for~$t$ arbitrary small, we choose the lenses to pass through~$0$, and with shape independent of~$t$ and~$n$.

Def\/ine
\begin{gather}
S(z)=\begin{cases}T(z)& \textrm{for $z$ outside the lens,}\\
T(z)\begin{pmatrix}1&0\\
-\frac{e^{2n\phi(z)}}{(z^2-t)^{\alpha}} &1
\end{pmatrix}&
\textrm{for $z$ in the upper parts of the lens,}\\
T(z)\begin{pmatrix}
1&0\\
\frac{e^{2n\phi(z)}}{(z^2-t)^{\alpha}} &1
\end{pmatrix} &\textrm{for $z$ in the lower parts of the lens.}
\end{cases}\label{def S}
\end{gather}
The function $(z^2-t)^\alpha$ is chosen with branch cuts on $(-i\infty,-i\sqrt{-t}]\cup [i\sqrt{-t},+i\infty)$, outside the lens-shaped region, and positive for real $z$. It is such that $(z^2-t)^\alpha=|z^2-t|^\alpha$ for~$z\in(a,b)$.

It follows that $S$ solves the following RH problem.

\subsubsection*{RH problem for $\boldsymbol{S}$}
\begin{itemize}\itemsep=0pt
\item[(a)]
$S$ is analytic for $z\in \mathbb{C} {\setminus} \Gamma_S$ where $\Gamma_S=\Gamma_{S,\textrm{top}}\cup \Gamma_{S,\textrm{bottom}} \cup \mathbb{R}$ is the contour shown in Fig.~\ref{ContourS}.
\item[(b)] $S$ has the following jump relations on $\Gamma_S$:
\begin{alignat*}{3}
& S_+(x)=
S_-(x)\begin{pmatrix}
1 &|x^2-t|^{\alpha} e^{-2n{\phi}(x)} \\
0 &1
\end{pmatrix},
\qquad &&\mbox{ for } x\in\mathbb R{\setminus}[a,b],& \\
&S_+(x)= S_-(x) \begin{pmatrix} 0& |x^2-t|^{\alpha}\\-\frac{1}{|x^2-t|^{\alpha}} &0\end{pmatrix}, \qquad &&\mbox{for $x\in (a,b)$,}& \\
&S_+(z)= S_-(z) \begin{pmatrix} 1 & 0 \\ \frac{e^{2n\phi(z)}}{(z^2-t)^{\alpha}} &1 \end{pmatrix}, \qquad &&\mbox{for $z\in \Gamma_{S,\textrm{top}}\cup \Gamma_{S,\textrm{bottom}}$.} &
\end{alignat*}
\item[(c)]
$S(z)=I+\mathcal{O} \left( z^{-1} \right)$ as $z\rightarrow \infty$.
\item[(d)]
$S(z)$ is bounded as $z\to 0$, as $z\to a$, and as $z\to b$.
\end{itemize}

\subsection[Global parametrix for $t<0$]{Global parametrix for $\boldsymbol{t<0}$}
The jump matrices for $S$ tend to the identity matrix exponentially fast everywhere on $\Gamma_S$, with the exception of the interval $(a,b)$ and small neighborhoods of $a$, $b$, and $0$. We construct a~global parametrix~$N$ which solves the RH problem for~$S$, ignoring exponentially small jumps and small neighborhoods of $a$, $b$, and $0$. We will show later on that the global parametrix is a~good approximation of~$S$ away from~$a$,~$b$,~$0$.

\subsubsection*{RH problem for $\boldsymbol{N}$}
\begin{itemize}\itemsep=0pt
\item[(a)] $N\colon \mathbb{C}{\setminus} [a,b] \to \mathbb{C}^{2\times 2}$ is analytic.
\item[(b)] $N$ has the following jump relation on $(a,b)$, where the orientation is taken from left to right:
\begin{gather*}N_+(x) =N_-(x)\begin{pmatrix} 0& |x^2-t|^{\alpha}\\-\dfrac{1}{|x^2-t|^{\alpha}} &0\end{pmatrix} \qquad\textrm{for $x\in (a,b)$.}\end{gather*}
\item[(c)]$N(z)=I+\mathcal{O} \left( \frac{1}{z} \right)\textrm{ as $z\rightarrow \infty$}$.
\end{itemize}

Def\/ine \begin{gather}
N(z)=D(\infty)^{-1}N_0(z)D(z),\label{def N}
\end{gather} where
\begin{gather}\label{def N2}
N_0(z)=\begin{pmatrix} 1&i\\i &1\end{pmatrix} \begin{pmatrix} \gamma(z) &0 \\ 0 &\gamma(z)^{-1} \end{pmatrix} \begin{pmatrix} 1&i\\i&1\end{pmatrix} ^{-1}, \qquad \gamma(z)=\left( \frac{z-b}{z-a}\right)^{1/4},
\end{gather}
 with\looseness=-1 $\left( \frac{z-b}{z-a}\right)^{1/4}$ analytic on $\mathbb C{\setminus}[a,b]$ and positive for $z>b$. $D(z)$ depends on $\alpha$ and $t$ and is given by
\begin{gather}
\label{N3}D(z) = d(z)^{-\sigma_3},\\
\label{def N4}d(z) = \exp \left(-\alpha \frac{\left((z-a)(z-b)\right)^{1/2}}{2\pi}\left( \int_a^b \frac{\log |\xi^2 -t| d\xi}{(\xi-z)\sqrt{(\xi-a)(b-\xi)}} \right)\right),
\end{gather}
with $\sqrt{(z-a)(z-b)}$ analytic in $\mathbb C{\setminus}[a,b]$ and positive for $z>b$, and with $\sqrt{(\xi-a)(b-\xi)}$ positive for $\xi\in(a,b)$.
Then, it is straightforward to verify that $N$ solves the RH problem for~$N$.
Moreover, as $z\to a$ and $z\to b$, $N(z)$ has the following behavior:
\begin{alignat}{3} \label{Na} & N(z)=\mathcal{O}\big(|z-a|^{-1/4}\big) \qquad && \textrm{as $ z \rightarrow a$,}& \\
 \label{Nb} & N(z)=\mathcal{O}\big(|z-b|^{-1/4}\big) \qquad && \textrm{as $ z \rightarrow b$.}&
 \end{alignat}
To verify this, one f\/irst shows that $N_0(z)$ satisf\/ies the conditions \eqref{Na},~\eqref{Nb}. It then remains to show that~$d$ is bounded near~$a$ and~$b$. This can be seen by deforming the integral between~$a$ and~$b$ to a contour surrounding the interval~$[a,b]$ and applying a residue argument.

\subsection[Local parametrix at $a$ and $b$]{Local parametrix at $\boldsymbol{a}$ and $\boldsymbol{b}$} \label{locparamaandb}
We need to construct local parametrices near the endpoints~$a$ and~$b$. This construction, using the Airy function, is standard~\cite{Deift, Deiftetal, Deiftetal2}.
The explicit formula for the local parametrices near~$a$ and~$b$ is not relevant to later calculations, we only need that these local parametrices exist.

Let $U_b$ ($U_a$) be an open set which is independent of $n$ and which contains $b$ ($a$). The following RH problems can be solved in terms of the Airy function. We do not give details about this construction here.
\subsubsection*{RH problem for $\boldsymbol{P^{(b)}}$}
\begin{itemize}\itemsep=0pt
\item[(a)] $P^{(b)}\colon U_b{\setminus} \Gamma_S \to \mathbb{C}^{2\times 2}$ is analytic on $U_b{\setminus}\Gamma_S$.
\item[(b)] $P^{(b)}$ has exactly the same jump relations as $S$ has for $z \in \Gamma_S \cap U_b$.
\item[(c)] $P^{(b)}(z)=(I+\mathcal{O}(n^{-1}))N(z)$ as $n\to \infty$, uniformly for $z\in \partial U_b$ and $|t|$ suf\/f\/iciently small.
\end{itemize}

\subsubsection*{RH problem for $\boldsymbol{P^{(a)}}$}
\begin{itemize}\itemsep=0pt
\item[(a)] $P^{(a)}\colon U_a{\setminus} \Gamma_S \to \mathbb{C}^{2\times 2}$ is analytic on $U_a{\setminus} \Gamma_S$.
\item[(b)] $P^{(a)}$ has exactly the same jump relations as $S$ has for $z \in \Gamma_S \cap U_a$.
\item[(c)] $P^{(a)}(z)=(I+\mathcal{O}(n^{-1}))N(z)$ as $n\to \infty$, uniformly for $z\in \partial U_a$ and $|t|$ suf\/f\/iciently small.
\end{itemize}

\subsection{Local parametrix at the origin}
Let $U_0$ be a suf\/f\/iciently small disk around the origin, which is independent of $n$. We want to construct a local parametrix which has the same jumps as $S$ in the disk $U_0$ and which matches with the global parametrix up to order $1/n$ at the boundary of $U_0$. This matching condition has to be valid in the double scaling limit where $n\to\infty$ and simultaneously $t\to 0$ in such a way that $n^2t$ tends to a non-zero constant, or equivalently such that $\tau_{n,t}\to \tau<0$, with $\tau_{n,t}$ def\/ined in~\eqref{taunt}.

\subsubsection*{RH problem for $\boldsymbol{P}$}
\begin{itemize}\itemsep=0pt
\item[(a)] $P\colon U_0{\setminus} \Gamma_S\to \mathbb{C}^{2\times 2}$ is analytic.
\item[(b)] $P$ has the same jump relations as $S$:
\begin{gather*}
 P_-^{-1}(z)P_+(z)=S_-^{-1}(z)S_+(z) \qquad \text{for $z \in U_0 \cap \Gamma_S$.}
 \end{gather*}
\item[(c)]$P(z)=\left(I+\mathcal{O}\left(n^{-1}\right)\right)N(z)$ uniformly for~$z$ on the boundary~$\partial U_0$, as $n\to\infty$ and simultaneously~$t\to 0$ in such a way that~$n^2t$ tends to a non-zero constant.
\end{itemize}

We will construct a solution to the RH problem for $P$ in terms of the model RH solution $\Psi^-$ studied in Section~\ref{section: PV}.

\subsubsection{Construction of the parametrix}

Def\/ine $\lambda\colon U_0\to \mathbb{C}$ by
\begin{gather}\label{def lambda}
 \lambda(z)=-in\times \begin{cases}
 \phi(z) - \dfrac{\phi(i\sqrt{-t})-\phi(-i\sqrt{-t})}{2} &\textrm{for $\Im z>0$},\vspace{1mm}\\
 -\phi(z)-\dfrac{\phi(i\sqrt{-t})-\phi(-i\sqrt{-t})}{2} & \textrm{for $\Im z<0$}.
 \end{cases}
\end{gather}
By (\ref{phi}), $\lambda$ is a conformal map in $U_0$ (if this one is chosen suf\/f\/iciently small, but containing $\pm i \sqrt{-t}$), and we have \begin{gather*}\lambda\big(i\sqrt{-t}\big)=-\lambda\big({-}i\sqrt{-t}\big)=-\frac{in}{2}\big(\phi\big(i\sqrt{-t}\big)+\phi\big({-}i\sqrt{-t}\big)\big),
\end{gather*}
and
\begin{gather}\nonumber
\lambda'(0)=n\pi\psi_V(0).
\end{gather}
We def\/ine $s_{n,t}$ such that $\lambda$ sends $i\sqrt{-t}$ to $is_{n,t}/4$ and $-i\sqrt{-t}$ to $-is_{n,t}/4$:
\begin{gather}\label{def s}
s_{n,t}=-4i \lambda\big(i\sqrt{-t}\big)=-2n\big(\phi\big(i\sqrt{-t}\big)+\phi\big({-}i\sqrt{-t}\big)\big)=-4n\Re\phi\big(i\sqrt{-t}\big)>0.
\end{gather}
By (\ref{phi}), we have
\begin{gather}
s_{n,t}=-4\pi n\Re\left(\int_0^{i\sqrt{-t}}h(s)\left((s-a)(s-b)\right)^{1/2}ds\right)\nonumber\\
\hphantom{s_{n,t}}{}
=4n\pi\sqrt{-t}\psi_V(0)+\bigO\big(n|t|^{3/2}\big)=\sqrt{-\tau_{n,t}}+\bigO\big(n|t|^{3/2}\big),\label{s2}
\end{gather}
as $t\to 0$, $n\to\infty$, with $\tau_{n,t}$ as in \eqref{taunt}.

There remains some freedom to choose the contour~$\Gamma_S$, which has to be of the shape shown in Fig.~\ref{ContourS}, but is otherwise arbitrary.
We choose $\Gamma_S$ such that the upper and lower lenses are mapped to the straight lines $\widetilde\Gamma_1^-,\ldots, \widetilde\Gamma_4^-$ (see Fig.~\ref{figure: contour Psi(1)}) by~$\lambda$.

Recall the def\/inition of $\Psi^{(1)}$ from~\eqref{def Psi1}.
We take $P$ of the form
\begin{gather} \label{defnP}
P(z)=E(z)\Psi^{(1)}(\lambda(z);s_{n,t}) W(z).
\end{gather}
Here, $E$ has to be analytic in $U_0$, and $W$ takes the form
\begin{gather}\label{def W}
W(z)=\begin{cases}
\left((z^2-t)^{\alpha/2}e^{-n\phi(z)}\right)^{\sigma_3} \sigma_3\sigma_1 &\textrm{for $\Im z>0$,}\\
\left( (z^2-t)^{-\alpha/2}e^{n\phi(z)}\right)^{\sigma_3}&\textrm{for $\Im z<0$.}
\end{cases}
\end{gather}
It is constructed in such a way that $P$ has the required jump conditions on $\Gamma_S\cap U_0$.
The branch cuts of $(z^2-t)^{\alpha/2}$ are taken on the lines $(-i\infty, -i\sqrt{-t})$ and $(i\sqrt{-t}, i \infty)$. $E$ is an analytic matrix-valued function on $U_0$ which is constructed such that $P(z)$ approximates $N(z)$ on the boundary~$\partial U_0$:
\begin{gather}\label{DefnE}
E(z)=N(z)W^{-1}(z)\left(\frac{\lambda(z)-\frac{is_{n,t}}{4}}{\lambda(z)+\frac{is_{n,t}}{4}}\right)^{\frac{\alpha}{2}\sigma_3}e^{-i\lambda(z) \sigma_3}e^{\frac{\pi i \alpha}{2}\sigma_3},
\end{gather}
where we have chosen the branch cuts for $z$ in $(i\sqrt{-t}, i \infty)$ and $(-i\sqrt{-t},-i\infty)$ such that
\begin{gather}
\left(\frac{\lambda(z)-\frac{is_{n,t}}{4}}{\lambda(z)+\frac{is_{n,t}}{4}}\right)^{\frac{\alpha}{2}\sigma_3}=1 +\bigO \big(n^{-1}\big) \qquad \textrm{for $\Re(z)>0$ and $z \in \partial U$,}\label{branchfrac}
\end{gather}
as $n \to \infty$ and $s_{n,t}$ remains bounded.
Note that $N(z)W^{-1}(z)$ is continuous across the real line and that $ W^{-1}\left(\frac{\lambda(z)-\frac{is_{n,t}}{4}}{\lambda(z)+\frac{is_{n,t}}{4}}\right)^{\frac{\alpha}{2}\sigma_3}$ has no singularities at $\pm i\sqrt{-t}$. No other component of $E$ has discontinuities. It follows that $E$ is analytic on $U_0$.

\begin{Prop} \label{P0theorem-}
Let $P(z)$ be defined as in \eqref{defnP}. Then, $P$ satisfies the RH problem for~$P$.
\end{Prop}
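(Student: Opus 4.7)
The plan is to verify conditions (a)--(c) of the RH problem for $P$ in turn, exploiting the fact that \eqref{defnP} is engineered so that the jumps of $\Psi^{(1)}(\lambda(\cdot);s_{n,t})$, pulled back by the conformal map $\lambda$, combine with the jumps of $W$ to reproduce the jumps of $S$ on $\Gamma_S\cap U_0$, while $E$ is an analytic prefactor chosen to ensure the matching condition on $\partial U_0$.

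First I would verify that $E(z)$, defined in \eqref{DefnE}, is analytic throughout $U_0$. The potential loci of non-analyticity are: the interval $(a,b)\cap U_0$, where $N$ jumps; the segments $(-i\infty,-i\sqrt{-t}]$ and $[i\sqrt{-t},i\infty)$ intersected with $U_0$, where both $W^{-1}$ (through the branch of $(z^2-t)^{\alpha/2}$) and the factor $\bigl(\tfrac{\lambda-is_{n,t}/4}{\lambda+is_{n,t}/4}\bigr)^{\alpha/2\sigma_3}$ are discontinuous; and the points $\pm i\sqrt{-t}$ themselves. The jump of $N$ on $(a,b)\cap U_0$ is $N_-\begin{pmatrix}0 & |x^2-t|^\alpha\\ -|x^2-t|^{-\alpha} & 0\end{pmatrix}$ and is cancelled by the jump of $W^{-1}$ there (using $\phi_++\phi_-=0$ on $(a,b)$). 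Since the same choice of branch for $(z^2-t)^{\alpha/2}$ is used in $W$ and in the branch factor (with the convention \eqref{branchfrac}), the jumps of $W^{-1}$ and of the branch factor along the imaginary axis outside $[-i\sqrt{-t},i\sqrt{-t}]$ cancel. Finally, the algebraic singularities of $W^{-1}$ and of the branch factor at $\pm i\sqrt{-t}$ are of matching order and opposite sign, so $\pm i\sqrt{-t}$ are removable singularities of $E$.

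Next I would check the jump relations in (b). The contour $\Gamma_S$ near $0$ is chosen so that $\lambda$ maps the four lens arcs in $U_0$ onto the rays $\widetilde\Gamma_1^-,\dots,\widetilde\Gamma_4^-$; hence the jumps of $\Psi^{(1)}(\lambda(z))$ on the lens coincide, after conjugation by $W$, with the lens jumps of $S$ recorded in \eqref{def S}. The jump of $\Psi^{(1)}$ along $\widetilde\Gamma_5^-$ (the imaginary axis outside $[-is_{n,t}/4,is_{n,t}/4]$) equals $e^{\pi i\alpha\sigma_3}$, which is precisely the conjugate of the jump of $W^{-1}$ along the branch cuts of $(z^2-t)^{\alpha/2}$, so the contributions cancel and $P$ has no jump across the imaginary axis inside $U_0$, matching $S$. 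On $(a,b)\cap U_0$, $\Psi^{(1)}$ has no jump and the jump of $P$ reduces to $W_-^{-1}W_+=(x^2-t)^{\alpha\sigma_3}\sigma_3\sigma_1$, which equals the jump of $S$.

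Finally, and this is the quantitative heart of the statement, I would verify the matching (c). On $\partial U_0$ the conformal map $\lambda(z)$ is of order $n$ while $s_{n,t}=\sqrt{-\tau_{n,t}}+\mathcal{O}(n|t|^{3/2})$ remains bounded in the regime $\tau_{n,t}\to\tau<0$, so the asymptotics \eqref{Psi1 as} give
\begin{equation*}
\Psi^{(1)}(\lambda(z);s_{n,t})=\bigl(I+\mathcal{O}(n^{-1})\bigr)e^{i\lambda(z)\sigma_3}\chi(\lambda(z))
\end{equation*}
uniformly for $z\in\partial U_0$. Substituting into \eqref{defnP} and using \eqref{DefnE}, together with the fact that $\bigl(\tfrac{\lambda-is_{n,t}/4}{\lambda+is_{n,t}/4}\bigr)^{\alpha/2\sigma_3}=I+\mathcal{O}(n^{-1})$ on $\partial U_0$, yields $P(z)N(z)^{-1}=I+\mathcal{O}(n^{-1})$. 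The main technical obstacle is the bookkeeping of branches at step one: one must check that the chosen branch convention in \eqref{branchfrac} is compatible, across the imaginary axis and across the real line, with the jumps of $W$ and with the piecewise definition of $\chi$, so that $E$ is genuinely single-valued in $U_0$ and so that the matching (c) is uniform in the double scaling limit rather than merely pointwise.
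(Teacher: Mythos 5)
Your overall strategy is the same as the paper's: (a) and (b) are matters of checking that the pulled-back jumps of $\Psi^{(1)}$ and $W$ reproduce those of $S$ and that $E$ has no jump or singularity in $U_0$, and (c) follows from the large-$z$ asymptotics of $\Psi^{(1)}$. Your analyticity bookkeeping for $E$ (cancellation of $N$'s jump on $(a,b)$ against $W^{-1}$'s, cancellation of the branch-cut jumps of $W^{-1}$ against the factor $\bigl(\tfrac{\lambda-is_{n,t}/4}{\lambda+is_{n,t}/4}\bigr)^{\frac{\alpha}{2}\sigma_3}$, removability of $\pm i\sqrt{-t}$) is essentially what the paper asserts in the paragraph preceding the proposition.

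There is, however, a genuine gap in your verification of (c). What the matching asymptotics \eqref{Psi1 as} actually give you, after substituting \eqref{defnP} and \eqref{DefnE}, is
\begin{gather*}
P(z)N^{-1}(z)=E(z)\bigl(I+\mathcal O\big(n^{-1}\big)\bigr)E^{-1}(z),
\end{gather*}
not directly $I+\mathcal O(n^{-1})$. To conclude you must know that $E$ and $E^{-1}$ are \emph{uniformly bounded} on $\partial U_0$, and this is not automatic: $E$ contains the factor $W^{-1}(z)e^{-i\lambda(z)\sigma_3}$, whose diagonal entries involve $e^{n\phi(z)-i\lambda(z)}$, and $e^{n\phi(z)}$ alone is exponentially large in $n$ on part of $\partial U_0$. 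The crucial observation you omit is that, by the definition \eqref{def lambda} of $\lambda$, the $n\phi(z)$ term cancels identically and $e^{n\phi(z)-i\lambda(z)}=e^{\frac{n}{2}(\phi(i\sqrt{-t})-\phi(-i\sqrt{-t}))}$, a $z$-independent constant of modulus $1$ by the symmetry $\overline{\phi(\bar z)}=\phi(z)$. Without this, the step from $E(I+\mathcal O(n^{-1}))E^{-1}$ to $I+\mathcal O(n^{-1})$ is unjustified; your closing remark about uniformity gestures in the right direction but does not actually carry out this estimate, and branch single-valuedness of $E$ (which you do address) is a different issue from boundedness.
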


\begin{proof} Conditions (a) and (b) are satisf\/ied by construction. We verify~(c). For $z \in \partial U_0$ we have
\begin{gather} \label{boundaryeqn}
P(z)N^{-1}(z)= E(z)\Psi^{(1)}({\lambda(z)};s_{n,t})W(z) N^{-1}(z).
\end{gather}
It follows from (\ref{Psi1 as}) that
\begin{gather}
\Psi^{(1)} (\lambda(z);s_{n,t}) \left(\frac{\lambda(z)-\frac{is_{n,t}}{4}}{\lambda(z)+\frac{is_{n,t}}{4}}\right)^{\frac{\alpha}{2}\sigma_3} e^{-i\lambda(z)\sigma_3} e^{\frac{\pi i\alpha}{2} \sigma_3} = I+\mathcal{O}\big(n^{-1}\big),\label{matching2}
\end{gather}
as $n\to\infty$,
where $\mathcal{O} (n^{-1})$ is uniform for $s_{n,t}$ in compact subsets of $(0,\infty)$ and $z\in \partial U_0$ and branches are chosen as in~(\ref{branchfrac}). By~\eqref{def W}, we f\/ind that
\begin{gather} \label{boundaryeqnerror}
P(z)N^{-1}(z)= E(z) \big(I+\mathcal{O}\big(n^{-1}\big)\big)E^{-1}(z).
\end{gather}
The proposition follows if $E(z)$ is bounded uniformly on $\partial U_0$. By (\ref{DefnE}) it is clear that the only part of $E$ which could diverge is $W^{-1}(z)e^{-i\lambda(z)\sigma_3}$, which contains $e^{n\phi(z)-i\lambda(z)}$. To deal with this we note that
\begin{gather}
\label{boundedE}
e^{n\phi(z)-i\lambda(z)}=e^{-\frac{n}{2}(\phi(i\sqrt{-t})-\phi(-i\sqrt{-t}))},
\end{gather}
which has modulus $1$ because of the symmetry $\overline{\phi(\bar z)}=\phi(z)$. Substituting this into (\ref{DefnE}), one indeed f\/inds that $E(z)=\mathcal{O}(1)$ uniformly on the boundary $\partial U_0$.
\end{proof}

\subsection{Small norm RH problem} \label{smallnorm}
Def\/ine $R(z)$ by
\begin{gather}\label{def R}
R(z)=\begin{cases}
S(z)P(z)^{-1} &\textrm{for $z\in U_0$,}\\
S(z)P^{(a)}(z)^{-1} &\textrm{for $z\in U_a$,}\\
S(z)P^{(b)}(z)^{-1} &\textrm{for $z\in U_b$,}\\
S(z)N(z)^{-1} &\textrm{for $z\in \mathbb{C} {\setminus} (U_0\cup U_a\cup U_b)$.}
\end{cases}
\end{gather}
It follows from the conditions of the RH problems for $P$, $P^{(a)}$, $P^{(b)}$ and $N$ that $R(z)$ satisf\/ies
\subsubsection*{RH problem for $\boldsymbol{R}$}
\begin{itemize}\itemsep=0pt
\item[(a)] $R$ is analytic on $\mathbb{C}{\setminus} \Gamma_R$ where \begin{gather*} \Gamma_R= \partial U_a\cup \partial U_b \cup \partial U_0 \cup (\Gamma_S{\setminus} ((a,b)\cup U_0\cup U_a \cup U_b))).\end{gather*}
\item[(b)] $R$ has the following jump relations on $\Gamma_R$:
\begin{gather*} R_+(z)=R_-(z)\big(I+\mathcal{O}\big(n^{-1}\big)\big)\end{gather*}
as $n\to\infty$, uniformly for $z\in\Gamma_R$, in the double scaling limit where $\tau_{n,t}\to\tau<0$.
\item[(c)] $R(z)$ has the following asymptotics as $z\to \infty$:
\begin{gather*}
R(z)= I+\mathcal{O}\big(z^{-1}\big).
\end{gather*}
\end{itemize}

We have that $R$ solves a RH problem, normalized at inf\/inity and with jump matrices that are uniformly close to the identity matrix. This type of RH problem is often called a small norm RH problem, and it implies that $R$ itself is uniformly close to $I$:
 in the double scaling limit where $n\to\infty$ and $t\to 0$ in such a way that $\tau_{n,t}\to \tau<0$, we have
\begin{gather}\label{asymp R}
R(z)=I+\mathcal{O}\big(n^{-1}\big),
\end{gather}
where the error term $\mathcal{O}(n^{-1})$ is uniform for $z\in\mathbb C{\setminus}\Gamma_R$.

\begin{Remark}
\label{remark: uniform R}
In fact, it can be shown that the estimate~\eqref{asymp R} holds not only if $n\to\infty$, $t\to 0$ in such a way that $\tau_{n,t}\to \tau<0$, but also if $\tau_{n,t}\to 0$ and if $\tau_{n,t}\to -\infty$. To see this, we f\/irst recall from
Proposition~3.1 in~\cite{CIK} that~(\ref{matching2}) holds uniformly for $s_{n,t}\in(0,\infty)$ (or $\tau_{n,t} <0$) as $n\to \infty$, and it follows that the same holds for~(\ref{boundaryeqnerror}). Since $E$ is uniformly bounded, this implies that condition (b) of the RH problem for $R$ is uniform for $\tau_{n,t}<0$ as $n\to \infty$, and thus~\eqref{asymp R} is too.
\end{Remark}

\subsection[Proof of Theorem \ref{thm lim kernel} for $t<0$]{Proof of Theorem \ref{thm lim kernel} for $\boldsymbol{t<0}$}\label{section: proof double}

If $\tau_{n,t}\to \tau<0$, we have by \eqref{s2} that $s_{n,t}\to s\in(0,+\infty)$. We have that $s_{n,t}=4n\pi \psi_V(0)\sqrt{-t}+\mathcal{O}(n^{-1})$ as $n\to \infty$.

First, recall formula~(\ref{CorrelationkernelY}) which expresses the correlation kernel in terms of~$Y$. Expres\-sing~$Y$ in $T$ by~(\ref{def T}), we obtain
\begin{gather*}
K_n(x,y)=|x^2-t|^{\alpha/2}e^{n\left(g_+(x)+\frac{\ell-V(x)}{2}\right)}|y^2-t|^{\alpha/2}e^{n\left(g_+(y)+\frac{\ell-V(y)}{2}\right)}\frac{ \begin{pmatrix}0&1\end{pmatrix} T_+^{-1}(y)T_+(x) \begin{pmatrix}1\\0\end{pmatrix}}{2\pi i(x-y)}.
\end{gather*}
Using the identity $g_+-\frac{1}{2}V+\ell/2=-\phi_+$ on $(a,b)$ and substituting $S$ into the equation, by (\ref{def S}), we obtain
\begin{gather}\label{PropCorrKern}
K_n(x,y)=\frac{1}{2\pi i(x-y)} \begin{pmatrix} -\frac{e^{n\phi_+(y)}}{|y^2-t|^{\alpha/2}}& \frac{|y^2-t|^{\alpha/2}}{e^{n\phi_+(y)}} \end{pmatrix}S_+^{-1}(y)S_+(x) \begin{pmatrix}e^{-n\phi_+(x)}|x^2 -t|^{\alpha/2}\\ e^{n\phi_+(x)} |x^2-t|^{-\alpha/2} \end{pmatrix}.
\end{gather}

Now, we use the fact that $S=RP$ for $z$ close to the origin. Substituting this and the def\/inition (\ref{defnP}) of $P$, we obtain for $x,y \in U_0$:
\begin{gather*}
K_n(x,y)=\frac{1}{2\pi i (x-y)} \begin{pmatrix} 1&1 \end{pmatrix} \left(\Psi^{(1)}_+\right)^{-1}(\lambda(y);s_{n,t})E_+^{-1}(y)\\
\hphantom{K_n(x,y)=}{} \times R_+^{-1}(y)R_+(x) E_+(x)\Psi_+^{(1)}(\lambda(x);s_{n,t})\begin{pmatrix}1\\-1\end{pmatrix} .
\end{gather*}
Up to this point, no approximations have been made. Now, we use the fact that $R=I+\bigO(n^{-1})$ in the double scaling limit and the analyticity of~$E$ and~$R$, which imply that
\begin{gather*}
E_+^{-1}(y)R_+^{-1}(y)R_+(x) E_+(x)=I+\mathcal{O}(x-y),
\end{gather*}
as $x\to y$ for $n$ suf\/f\/iciently large.
We f\/ind, for f\/ixed $u,v$, that
\begin{gather*} \frac{1}{\psi_V(0) n}K_{n,t}\left(x_n=\frac{u}{ n\psi_V(0)},y_n =\frac{v}{n\psi_V(0)}\right)\\
\qquad{} =\frac{1}{2\pi i(u-v)}\begin{pmatrix}1&1\end{pmatrix} \left(\Psi^{(1)}_+\right)^{-1}(\lambda(y_n);s_{n,t})\Psi_+^{(1)}( \lambda (x_n);s_{n,t})\begin{pmatrix}1\\-1\end{pmatrix}+ \mathcal{O}\big(n^{-1}\big) ,
 \end{gather*}
 as $n\to\infty$, $t\to 0$.
By (\ref{def Phi -}) and (\ref{def Psi1}) we have
\begin{gather} \label{defnPhi}
\begin{pmatrix}\Phi_1(\lambda;\tau)\\ \Phi_2(\lambda;\tau)
\end{pmatrix}=e^{-\frac{s}{4}\sigma_3}\Psi^{(1)}\big(\lambda;s=\sqrt{-\tau}\big)\begin{pmatrix} -1\\1\end{pmatrix},
\end{gather}
and it follows from \eqref{s2} that
\begin{gather}\label{final lim kernel}
\frac{1}{n\psi_V(0) }K_{n,t}\left(x_n=\frac{u}{ n\psi_V(0)},y_n= \frac{v}{n\psi_V(0)}\right)= \mathbb K_{\alpha}^{{\rm PV}}\big(\lambda(x_n),\lambda(y_n);-s_{n,t}^2\big)+ \mathcal{O}\big(n^{-1}\big),
\end{gather}
as $n\to\infty$, $t\to 0$; with $\mathbb K_\alpha^{\rm PV}$ given in~\eqref{limker}.
In the case where $\tau_{n,t}\to \tau<0$,
\begin{gather} \lambda (x_n)=\pi u+\mathcal O \big(n^{-1}\big), \qquad -s_{n,t}^2=\tau_{n,t}+\mathcal O\big(n^{-1}\big) \label{errors} \end{gather}
 which proves \eqref{limit kernel} by continuity of the kernel.
If $\tau_{n,t}\to -\infty$, the error terms in~(\ref{errors}) become of order $\mathcal O(n|t|^{3/2})$. Nevertheless we can apply~\eqref{lim sine} (which holds whenever $u-v$ has a limit as $\tau\to -\infty$) in~\eqref{final lim kernel} to obtain
\begin{gather*}
\lim_{s\to\infty}\mathbb K_{\alpha}^{{\rm PV}}\big(\lambda(x_n),\lambda(y_n);-s_{n,t}^2\big) =\mathbb K^{\textrm{sin}}(\lambda(x_n),\lambda(y_n))+\mathcal{O} \big(s_{n,t}^{-1}\big),
\end{gather*}
as $n\to \infty,s_{n,t}\to \infty$. This yields~\eqref{limit kernelinf}, since $\lambda(x_n)-\lambda(y_n)=\pi(u-v)+\mathcal O(n^{-1})$ as $n\to \infty$, uniformly in $t<t_0$ for some suf\/f\/iciently small $t_0>0$.
Finally, if $\tau_{n,t}\to 0$, we substitute~\eqref{lim Bessel} into~\eqref{final lim kernel} and obtain~\eqref{limit kernel0}.
This completes the proof of Theorem~\ref{thm lim kernel}.

\subsection[Proof of Theorem \ref{Thm PartitionFunction} for $t<0$]{Proof of Theorem \ref{Thm PartitionFunction} for $\boldsymbol{t<0}$}
We start from the dif\/ferential identity (\ref{diff id}). Following the transformations $Y\mapsto T\mapsto S \mapsto R$ given in~(\ref{def T}), (\ref{def S}) and~(\ref{def R}), and recalling the def\/inition~(\ref{defnP}) of the local parametrix, we f\/ind the following form for~$Y^{-1}Y'$. Write $j=1$ for $\Im z>0$ and $j=2$ for $\Im z<0$. Then
\begin{gather}
\big(Y^{-1}Y'\big)_{2,2}(z)=
\Big( B(z) +
W^{-1}(z)W'(z)+e^{-ng(z)\sigma_3}\big(e^{ng(z)\sigma_3}\big)'\Big)_{2,2}\nonumber\\
\hphantom{\big(Y^{-1}Y'\big)_{2,2}(z)=}{}
+\big(\big(\big(\Psi^{(1)}\big)^{-1}\big(\Psi^{(1)}\big)'_z\big) (\lambda(z))\big)_{j,j},\label{diff idv2}
\end{gather}
where $()'$ in general means the derivative with respect to the main argument, $()'_z$ means the derivative with respect to~$z$ , and where
\begin{gather*}
B(z) =\big(\Psi^{(1)}(\lambda)\big)^{-1}(RE)^{-1}(z)(RE)'(z)\Psi^{(1)}(\lambda).
\end{gather*}
We will show that $B(z)$ is bounded uniformly for $s\in(0,\infty)$ and $n$ large, but we will f\/irst calculate the other terms in the above expression. We have
\begin{gather*}
\Big(\big(W^{-1}W'\big)(z)+e^{-ng(z)\sigma_3}\big(e^{ng(z)\sigma_3}\big)'\Big)_{2,2} =-\frac{n}{2}V'(z)+\frac{\alpha /2}{z-z_0}+\frac{\alpha /2}{z+z_0} ,
\end{gather*}
where we used
\begin{gather}
\phi'_{\pm}(x)+g'_{\pm}(x) =\frac{V'}{2}. \label{formula V'}
\end{gather}
By condition (d) of the RH problem for $\Psi^-$ and (\ref{def Psi1}), it follows that
\begin{gather}
\big(\big(\Psi^{(1)}\big)^{-1}\big(\Psi^{(1)}\big)_{z}'\big)_{1,1} (\lambda(z) )\nonumber\\
\qquad{} =-\frac{2i\lambda'(z_0)}{s_{n,t}}\big(H^{-1}H'\big)_{1,1}(1) -\frac{\alpha/2}{z-z_0}+\mathcal O(1) \qquad \textrm{as $z\to z_0$,}
\end{gather}
and
\begin{gather}
\big(\big(\Psi^{(1)}\big)^{-1}\big(\Psi^{(1)}\big)_{z}'\big)_{2,2} (\lambda(z) )\nonumber\\
\qquad{} =-\frac{2i\lambda'(-z_0)}{s_{n,t}}\big(G^{-1}G'\big)_{2,2}(0)
 -\frac{\alpha/2}{z+z_0}+\mathcal O(1) \qquad \textrm{as $z\to -z_0$,}\label{zto-z0}
\end{gather}
where both $\mathcal O(1)$ terms are uniform for large $n$ and suf\/f\/iciently small $|t|$. Now, we use the following lemma, which we will prove below.
\begin{Lemma} \label{lemmapart}\quad
\begin{itemize}\itemsep=0pt
\item[$(1)$] The following identity holds:
\begin{gather*}
\big(G^{-1}G'\big)_{2,2}(0)=-\big(H^{-1}H'\big)_{1,1}(1)=\frac{s}{2}+\frac{\sigma^{-}_\alpha (s)}{\alpha}-\frac{\alpha}{2}.
\end{gather*}
\item[$(2)$] $B(z)$ is bounded uniformly as $n\to \infty$ for $t<t_0$ for some sufficiently small~$t_0$.
\end{itemize}
\end{Lemma}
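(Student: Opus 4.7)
The plan for part (1) is to use the Lax pair (\ref{Lax})--(\ref{B2}) to reduce the computation of $(G^{-1}G')_{22}(0)$ to an algebraic identity in the Lax data, and then invoke the relation $\sigma_\alpha^-(s) = sq(s) - \alpha s/2$ from (\ref{sigma q}). Substituting the factorization $\Psi^-(z) = G(z)\, z^{\alpha/2\sigma_3}$ into $\Psi_z\Psi^{-1} = A$ gives
\[
G'(z)G(z)^{-1} + \frac{\alpha}{2z} G(z)\sigma_3 G(z)^{-1} = A_\infty + \frac{A_0}{z} + \frac{A_1}{z-1}.
\]
Matching the simple pole at $z=0$ recovers $A_0 = \tfrac{\alpha}{2}G(0)\sigma_3 G(0)^{-1}$, and matching the constant term, after taking $(2,2)$ entries and using that $[X,\sigma_3]$ has vanishing diagonal, yields
\[
(G^{-1}G')_{22}(0) = \big(G(0)^{-1}(A_\infty - A_1)G(0)\big)_{22} = -\tfrac{1}{\alpha}\Tr\!\big(A_0(A_\infty-A_1)\big).
\]

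To evaluate the trace, plug in the parametrization (\ref{B2}) to obtain $\Tr(A_0 A_\infty) = sv - s\alpha/2$ and $\Tr(A_0 A_1) = -2(v-\alpha/2)^2 + v(v-\alpha)(u+u^{-1})$. The large-$z$ expansion \eqref{Psi as} supplies $A_0+A_1 = -\tfrac{s}{2}[\Psi_1,\sigma_3]$, whose off-diagonal entries read $y(v-\alpha)(u-1) = sr$ and $v(u-1)/(uy) = -sp$; their product is the key identity
\[
\frac{v(v-\alpha)(u-1)^2}{u} = -s^2 rp.
\]
Combined with $v = \alpha/2 - q - srp$ from (\ref{v}), a short algebraic manipulation collapses everything to
\[
(G^{-1}G')_{22}(0) = \frac{sq}{\alpha} - \frac{\alpha}{2} = \frac{s}{2} + \frac{\sigma_\alpha^-(s)}{\alpha} - \frac{\alpha}{2}.
\]
The formula for $H$ at $z=1$ follows by repeating the argument near $z=1$ (with $(z-1)^{-\alpha/2\sigma_3}$ in place of $z^{\alpha/2\sigma_3}$, which reverses the sign); alternatively, the symmetry (\ref{SymmetryPsi-}) (conjugation by $\sigma_1$) interchanges the $(1,1)$ and $(2,2)$ diagonal entries and directly produces $(H^{-1}H')_{11}(1) = -(G^{-1}G')_{22}(0)$.

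For part (2), split $(RE)^{-1}(RE)' = E^{-1}R^{-1}R'E + E^{-1}E'$. Remark~\ref{remark: uniform R} gives $R = I + \bigO(n^{-1})$ uniformly for $\tau_{n,t}<0$, hence $R' = \bigO(n^{-1})$ on $U_0$ by Cauchy, while $E$ and $E^{-1}$ are analytic and uniformly bounded on $U_0$ (the analyticity at $\pm z_0$ comes from the cancellation noted below (\ref{DefnE}), of $(z^2-t)^{\pm\alpha/2}$ in $W^{-1}$ against $\big((\lambda-is_{n,t}/4)/(\lambda+is_{n,t}/4)\big)^{\alpha/2\sigma_3}$). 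Thus $(RE)^{-1}(RE)'$ is uniformly bounded. For the conjugation by $\Psi^{(1)}$, use the local factorization $\Psi^{(1)}(\lambda(z)) = \widetilde G_\pm(z)\,(\lambda(z)\mp is_{n,t}/4)^{-\alpha/2\sigma_3}$ near $z=\pm z_0$, with $\widetilde G_\pm$ bounded by conditions (d0)--(d1) of the RH problem for $\Psi^{(1)}$. Since diagonal entries are invariant under conjugation by a diagonal matrix, $(B(z))_{22} = \big(\widetilde G_\pm^{-1}[(RE)^{-1}(RE)']\widetilde G_\pm\big)_{22}$ near $\pm z_0$, which is bounded; away from $\pm z_0$ the matrix $\Psi^{(1)}$ itself is bounded, so the conclusion is immediate.

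The main obstacle is the algebraic reduction in part (1): tying $(G^{-1}G')_{22}(0)$ to $\sigma_\alpha^-$ requires the identity $v(v-\alpha)(u-1)^2/u = -s^2 rp$ that binds together the local data at $z=0,1$ with the large-$z$ residue data. The delicate point in part (2) is uniformity as $s_{n,t}\to 0$, when the two singularities of $\Psi^{(1)}$ merge at the origin; for that regime one appeals to the integral representation (\ref{intrepM}) to replace $\Psi^{(1)}$ by $M$ up to a small error and runs the same diagonal-conjugation argument.
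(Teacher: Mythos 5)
Part (1) of your proof is correct, and it takes a genuinely different route from the paper. The paper computes $\partial_s\big((G^{-1}G')_{2,2}(0)\big) = -v/\alpha + 1/2$ from the $s$-equation of the Lax pair, then integrates this ODE using $\sigma(s) = \int_s^\infty v(\xi)\,d\xi$ together with a boundary condition $(G^{-1}G')_{2,2}(0) = s/2 - \alpha/2 + o(1)$ at $s=+\infty$, which requires the large-$s$ asymptotic analysis of Section~\ref{section tau infty}. You instead use only the $z$-equation: matching powers of $z$ at $0$ gives $(G^{-1}G')_{2,2}(0) = \big(G_0^{-1}(A_\infty - A_1)G_0\big)_{22}$, which (via $G_0^{-1}A_0 G_0 = \tfrac{\alpha}{2}\sigma_3$ and tracelessness) equals $-\tfrac{1}{\alpha}\Tr\big(A_0(A_\infty - A_1)\big)$. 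The trace is evaluated directly from the parametrization of $A_0$, $A_1$ together with the residue identity $A_0+A_1 = -\tfrac{s}{2}[\Psi_1,\sigma_3]$, which packages $v(v-\alpha)(u-1)^2/u = -s^2 rp$; substituting $v = \alpha/2 - q - srp$ then collapses everything to $\alpha^2/2 - sq$. I have checked this algebra and it is correct. Your argument is arguably preferable: it is purely algebraic, needs no asymptotic boundary condition and no integration in $s$, and goes straight to $\sigma_\alpha^-(s) = sq(s) - \alpha s/2$. The derivation of $(H^{-1}H')_{1,1}(1) = -(G^{-1}G')_{2,2}(0)$ from the symmetry~\eqref{SymmetryPsi-} also matches the paper.

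Part (2) is structurally the same as the paper's argument, but it has a genuine gap concerning uniformity in $s_{n,t}$. You correctly reduce the problem to showing that $\widetilde G_\pm$ (the bounded factor of $\Psi^{(1)}$ near $\pm z_0$) together with $E$, $E^{-1}E'$, and $R^{-1}R'$ are uniformly controlled, and you correctly note that the diagonal conjugation preserves the $(2,2)$ entry. However, the lemma asserts boundedness \emph{uniformly as $n\to\infty$} for $|t|<t_0$, which means $s_{n,t}$ ranges over all of $(0,\infty)$. Conditions (d0)--(d1) of the RH problem give boundedness of $\widetilde G_\pm$ only for each fixed $s$, not uniformly. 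You flag the $s\to 0$ regime and invoke~\eqref{intrepM} there, which is fine, but you do not address the $s\to\infty$ regime at all (and this regime is unavoidable: at fixed small $t<0$, $s_{n,t}\asymp n\sqrt{-t}\to\infty$ as $n\to\infty$). The paper handles this by substituting the small-norm solution $\widehat\Psi$ of~\eqref{hatPsi}, which converges to the identity exponentially, to obtain a uniform bound. Similarly, your claim that ``$E$ and $E^{-1}$ are analytic and uniformly bounded on $U_0$'' is asserted rather than shown; the paper establishes it by factoring $E = f_1 f_2 f_3 f_4$ and bounding each factor uniformly in $n$ and $t$. These are fillable gaps, but as written your part (2) does not establish the claimed uniformity.
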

Equations \eqref{diff idv2}--\eqref{zto-z0} and Lemma~\ref{lemmapart} can be substituted into~\eqref{diff id} to obtain
\begin{gather}
\frac{d}{dt} \log \widehat Z_n(t,\alpha,V)=-\frac{\alpha }{2z_0}\Bigg(\frac{4i\lambda'(z_0)}{s_{n,t}}\left(\frac{s_{n,t}}{2}+\frac{\sigma^{-}_\alpha (s_{n,t})}{\alpha}-\frac{\alpha}{2}\right)+\frac{\alpha }{2z_0}\nonumber\\
\hphantom{\frac{d}{dt} \log \widehat Z_n(t,\alpha,V)=}{}
+\frac{n}{2}\left(V'(-z_0)-V'(z_0)\right)
+B(z_0)-B(-z_0)+\mathcal{O} (1)\Bigg)\label{diff id 3}
\end{gather}
as $n \to \infty$ and for suf\/f\/iciently small $|t|$. After a straightforward calculation
in which one uses~\eqref{asymsigma-} and~\eqref{s2},
Theorem~\ref{Thm PartitionFunction} follows upon integration.

\begin{proof}[Proof of Lemma \ref{lemmapart}]
Equation (\ref{SymmetryPsi-}) implies that
\begin{gather*}
\big(H^{-1}H'\big)_{1,1}(1)=-\big(G^{-1}G'\big)_{2,2}(0).
\end{gather*}
We proceed to evaluate $(G^{-1}G')_{2,2}(0)$. Substituting
\begin{gather*}
\Psi(\zeta)=G(\zeta)\zeta^{\frac{\alpha}{2}\sigma_3} \qquad \textrm{for $\zeta$ in a neighborhood of $0$,}
\end{gather*}
into \eqref{Lax}, (\ref{Eqn:A}) and evaluating the terms of order $z^{-1}$, we f\/ind that
\begin{gather}
\big(G\sigma_3G^{-1}\big)_{2,2}(0)=\frac{2v}{\alpha}-1. \label{Gsigma3G-1}\end{gather}
Writing
\begin{gather*}
G(z)=G_0\big(1+G_1z+\mathcal O\big(z^{2}\big)\big),\qquad z\to0,
\end{gather*}
we also have \begin{gather}\label{equation G1}\big(G^{-1}G'\big)(0)=G_1,
\end{gather}
and, by (\ref{Eqn:B}),
\begin{gather}
\frac{\partial}{\partial s}G_{0} =B_0G_0,\\
\frac{\partial}{\partial s}G_{1} =G_0^{-1}B_1G_0=-\frac{1}{2}G_0^{-1}\sigma_3G_0.\label{equation G1b}
\end{gather}
By (\ref{Gsigma3G-1}), \eqref{equation G1} and \eqref{equation G1b}, we obtain
\begin{gather} \frac{\partial}{\partial s}\big(G^{-1}G'\big)_{2,2}(0)=-\frac{v}{\alpha}+\frac{1}{2}. \label{G-1G'}
\end{gather}
By analyzing at the behavior of $\Psi$ as $s\to +\infty$ in Section~\ref{section tau infty}, one f\/inds that as $s\to +\infty$,
\begin{gather}\label{G-1G'asym}
\big(G^{-1}G'\big)_{2,2}(0)=-\frac{\alpha}{2} +\frac{s}{2}+o(1).
\end{gather}
Integrating \eqref{G-1G'} using \eqref{sigma} and comparing \eqref{G-1G'asym} to \eqref{asymsigma-}, one f\/inds that
\begin{gather*}
\big(G^{-1}G'\big)_{2,2}(0)=\frac{s}{2}+\frac{\sigma(s)}{\alpha}-\frac{\alpha}{2}.
\end{gather*}
It remains to show that $B(z)$ is bounded.
It follows from~\eqref{asymp R} and Remark~\ref{remark: uniform R} that
\begin{gather*}
 R^{-1}(z)R'(z)=\mathcal O \big(n^{-1}\big) \qquad \textrm{as $n \to \infty, t \to 0$.}
 \end{gather*}
 We will show that $E$ and $E^{-1}E'$ are bounded uniformly as $n \to \infty$ and $t\to 0$.
For $z$ in a neighborhood of $z_0$, we write
\begin{gather*}
E(z) =\prod_{j=1}^4 f_j(z),\\
f_1(z) =D^{-1}(\infty)N_0(z),\\
f_2(z) =D(z) \sigma_1\sigma_3 (z+z_0)^{-\frac{\alpha}{2}\sigma_3}\left(\lambda(z)+\frac{is_{n,t}}{4}\right)^{-\frac{\alpha}{2}\sigma_3}n^{\frac{\alpha}{2}\sigma_3},\\
f_3(z) =(z-z_0)^{-\frac{\alpha}{2}\sigma_3}\left(\lambda-\frac{is_{n,t}}{4}\right)^{\frac{\alpha}{2}\sigma_3}n^{-\frac{\alpha}{2}\sigma_3},\\
f_4(z) =e^{n\phi(z)\sigma_3}e^{-i\lambda(z)\sigma_3}e^{\frac{\pi i \alpha}{2}\sigma_3}.
\end{gather*}
Each $f_j$, $j=1,\ldots, 4$ is uniformly bounded at $\pm z_0$ as $n\to \infty $ and $t \to 0$. This is clear for~$f_1$ and~$f_3$. To see this for~$f_2$, one can perform a contour integral to f\/ind that
\begin{gather*}
 D(z_0)=\begin{pmatrix}\mathcal{O}\left(z_0^{-\alpha } \right)&0\\0&\mathcal{O}\left(z_0^{\alpha } \right)\end{pmatrix} .\end{gather*}
For~$f_4$, it follows from the def\/inition of $\lambda$ and the fact that $\phi(z)=\overline{\phi (\overline z)}$. One can do the same for $z$ in a neighborhood of~$-z_0$.
Thus
\begin{gather*}
B(\pm z_0)=\lim_{z\to \pm z_0}\big(\Psi^{(1)}\big)^{-1}(\lambda(z);s_{n,t})\mathcal O (1) \Psi^{(1)}(\lambda(z);s_{n,t}),
\end{gather*}
which is bounded as long as $s$ lies in compact subsets of $(0,+\infty)$.
To see that $B(\pm z_0)$ is bounded as well as $s\to \infty$, we substitute in $\widehat \Psi$ from (\ref{hatPsi}) and use the fact that it converges to the identity. To see that $B(\pm z_0)$ is bounded as $s\to 0$ we refer to the small~$x$ behavior of~$\Psi$ in~\cite{CIK} (formu\-las~(4.23), (4.53), (4.61), (4.64), (4.67)).
\end{proof}

\section[Asymptotic analysis of $T$ as $n\to \infty$ for $t>0$]{Asymptotic analysis of $\boldsymbol{T}$ as $\boldsymbol{n\to \infty}$ for $\boldsymbol{t>0}$}\label{section: RH+}

In this section, we will analyse asymptotically the RH problem for~$T$ in the case where~$t>0$, which means that the singularities $\pm \sqrt{t}$ are real and approach the origin as~$t\to 0$. Many of the notations here will be the same as the ones in Section~\ref{section: RH-} to emphasize the parallels in the analysis. We will refer to many arguments given in the previous section, so it is recommended to read Section~\ref{section: RH-} before this one.
\subsection[Opening of the lens for $t>0$]{Opening of the lens for $\boldsymbol{t>0}$}

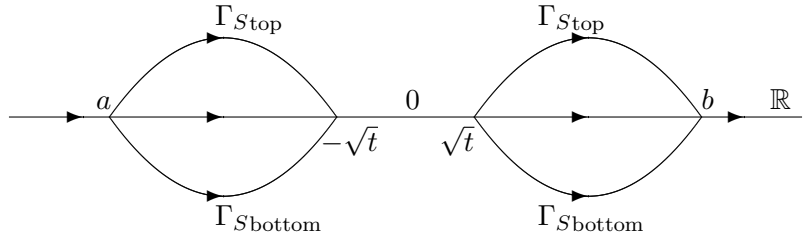
\begin{figure}[t]\centering
\begin{picture}(100,85)(-5,-50)
 \put(42,-2){$0$}
\put(55,-18){$\sqrt{t}$}
\put(10,-18){$-\sqrt{t}$}
\put(-75,-2){$a$}
\put(154,-2){$b$}
\put(-108,-5){\line(1,0){300}}

 \put(111,25){\thicklines\vector(1,0){.0001}}
 \put(111,-5){\thicklines\vector(1,0){.0001}}
 \put(111,-35){\thicklines\vector(1,0){.0001}}
\put(-27,25){\thicklines\vector(1,0){.0001}}
\put(-27,-5){\thicklines\vector(1,0){.0001}}
\put(-27,-35){\thicklines\vector(1,0){.0001}}
\put(170,-5){\thicklines\vector(1,0){.0001}}
\put(-80,-5){\thicklines\vector(1,0){.0001}}

\put(180,-2){$\mathbb{R}$}
\put(-30,30){${\Gamma_S}_{\textrm{top}}$}
\put(92,30){${\Gamma_S}_{\textrm{top}}$}
\put(-30,-46){${\Gamma_S}_{\textrm{bottom}}$}
\put(92,-46){${\Gamma_S}_{\textrm{bottom}}$}

 \qbezier(-70,-5)(-27,55)(16,-5)
 \qbezier(-70,-5)(-27,-65)(16,-5)
 \qbezier(68,-5)(111,55)(154,-5)
 \qbezier(68,-5)(111,-65)(154,-5)
\end{picture}
\caption{The contour ${\Gamma_S}$.}\label{lensesing}
\end{figure}

We open the lens as in Fig.~\ref{lensesing}. We def\/ine ${S}$ in the same way as for $t<0$ in \eqref{def S}, except that we use the contour pictured in Fig.~\ref{lensesing} instead of the one in Fig.~\ref{ContourS}, and except for the fact that
we def\/ine $(z^2-t)^{\alpha}$ such that it is positive on $(a,-\sqrt{t})$ and $(\sqrt{t},b)$:
\begin{gather*}
\big(z^2-t\big)^\alpha=\big|z^2-t\big|^\alpha e^{i\alpha\arg(z-\sqrt{t})}e^{i\alpha\arg(z+\sqrt{t})},
\end{gather*}
with the conventions $-3\pi/2<\arg(z-\sqrt{t})<\pi/2$ and $-\pi/2<\arg(z+\sqrt{t})<3\pi/2$.

The RH problem for $S$ is the following.
\subsubsection*{RH problem for $\boldsymbol{S}$}
\begin{itemize}\itemsep=0pt
\item[(a)]
${S}$ is analytic for $z\in \mathbb{C} {\setminus}{\Gamma_S}$ where ${\Gamma_S}={\Gamma_S}_{\textrm{top}}\cup {\Gamma_S}_{\textrm{bottom}} \cup \mathbb{R}$ is the union of the lens and the real line, see Fig.~\ref{lensesing}.
\item[(b)] ${S}$ has the following jump relations on ${\Gamma_S}$,
\begin{alignat*}{3}
&{S}_+(x)=
{S}_-(x)\begin{pmatrix}
1 &|x^2-t|^{\alpha} e^{-2n\phi(x)} \\
0 &1
\end{pmatrix}\qquad
&&\text{for $x<a$ and $b<x$}, \\
 \nonumber
&{S}_+(x)= {S}_-(x)\begin{pmatrix} 0& |x^2-t|^{\alpha}\\-\frac{1}{|x^2-t|^{\alpha}} &0\end{pmatrix} \qquad&&\text{for $x\in (a,-\sqrt{t})\cup (\sqrt{t},b)$,}& \\
&{S}_+(z)= {S}_-(x) \begin{pmatrix} 1 & 0 \\ \frac{e^{2n\phi(z)}}{(z^2-t)^{\alpha}} &1 \end{pmatrix} \qquad &&\text{for $z\in {\Gamma_S}_{\textrm{top}}\cup {\Gamma_S}_{\textrm{bottom}}$,} & \\
&{S}_+(x)= {S}_-(x)
\begin{pmatrix}
e^{2n\phi _+(x)} &|x^2-t|^{\alpha} \\
0 & e^{2n\phi _-(x)}
\end{pmatrix}\qquad &&\text{for $x\in[-\sqrt{t},\sqrt{t}]$}.&
\end{alignat*}
\item[(c)]
${S}(z)=I+\mathcal{O} \left( z^{-1} \right)$ as $z\rightarrow \infty$.
\item[(d)] For $\alpha<0$, ${S}$ has the following asymptotics as $z\rightarrow \pm \sqrt{t}$:
\begin{gather*}
 {S}(z)=\begin{pmatrix} \mathcal{O}(1)&\mathcal{O}\big(|z\mp \sqrt{t}|^{\alpha}\big)\\
\mathcal{O}(1) &\mathcal{O}\big(|z\mp \sqrt{t}|^{\alpha}\big)\end{pmatrix}.\end{gather*}
For $\alpha\geq 0$ the following asymptotics hold:
\begin{alignat*}{3}
& {S}(z)= \begin{pmatrix} \mathcal{O}(1)&\mathcal{O}(1)\\
\mathcal{O}(1) &\mathcal{O}(1)\end{pmatrix} \qquad &&\text{as $z\to \pm \sqrt{t}$ from outside the lense,}& \\
& {S}(z)=\begin{pmatrix} \mathcal{O}\big(|z\mp \sqrt{t}|^{-\alpha}\big)&\mathcal{O}(1)\\
\mathcal{O}\big(|z\mp \sqrt{t}|^{-\alpha}\big) &\mathcal{O}(1)\end{pmatrix}\qquad &&\text{as $z\to \pm \sqrt{t}$ from inside the lense.}&
\end{alignat*}
\item[(e)] $S(z)$ is bounded as $z\to a$ and as $z\to b$.
\end{itemize}

\subsection{Global parametrix}
As for $t<0$, we now ignore small neighborhoods of $a$, $b$, $0$ and exponentially small jumps.

The solution to the RH problem obtained in this way is constructed as for $t<0$: $N$ is given by (\ref{def N})--(\ref{def N4}), where the parameter $t$ is positive in~(\ref{def N4}).

The function $N$ now has singularities at $ \pm \sqrt{t}$.
Applying a contour deformation argument on the integral in (\ref{def N4}), we obtain the following behavior,
\begin{gather}{N}(z)=\bigO\big(|z\pm\sqrt{t}|^{-\frac{\alpha}{2}\sigma_3}\big),\qquad z\to \mp \sqrt{t}.\label{N sing}
\end{gather}

\subsection[Local parametrices at endpoints $a$ and $b$]{Local parametrices at endpoints $\boldsymbol{a}$ and $\boldsymbol{b}$}
The local parametrices at the endpoints $a$ and $b$ remain unchanged with respect to Section~\ref{locparamaandb}. They satisfy the same jumps as~$S$ near~$a$ and~$b$, and as $n\to\infty$, they match with the global parametrix $N$ on the boundary of f\/ixed disks around~$a$ and~$b$.

\subsection{Local parametrix at the origin}
Let ${U}_0$ be a suf\/f\/iciently small disk around the origin which is independent of~$n$. We will construct a~function~$P$ which satisf\/ies the following conditions.

\subsubsection*{RH problem for $\boldsymbol{P}$}
\begin{itemize}\itemsep=0pt
\item[(a)] ${P}\colon { U}_0{\setminus} \Gamma_S\to \mathbb{C}^{2\times 2}$ is analytic.
\item[(b)] ${P}$ has same jump relations as $S$:
\begin{gather*} {P}_-^{-1}(z){P}_+(z)={S}_-^{-1}(z){S}_+(z) \qquad \text{for $z \in {U}_0 \cap {\Gamma_S}$.} \end{gather*}
\item[(c)]As $n\to\infty$ and simultaneously $t\to 0$ in such a way that $n^2t$ tends to a non-zero constant, we have
\begin{gather*} {P}(z)=\big(I+\mathcal{O}\big(n^{-1}\big)\big){N}(z),\end{gather*} uniformly for~$z$ on the boundary~$\partial {U}_0$.
\end{itemize}
These RH conditions are exactly the same as in the case $t<0$, but it has to be noted that the function~$S$ and its jump contour are dif\/ferent here than in the case $t<0$. Therefore the construction of the local parametrix dif\/fers from the construction done before.

In analogy with the case $t<0$, def\/ine
\begin{gather}\label{def lambda+}
 \lambda(z)=-in\times \begin{cases}
 \phi(z) - \dfrac{\phi_+(\sqrt{t})+\phi_+(-\sqrt{t})}{2} &\textrm{for $\Im z>0$},\\
 -\phi(z)-\dfrac{\phi_+(\sqrt{t})+\phi_+(-\sqrt{t})}{2} & \textrm{for $\Im z<0$}.
 \end{cases}
\end{gather}
By (\ref{phi}), $\lambda$ is a conformal map in $U_0$, \begin{gather*}\lambda\big(\sqrt{t}\big)=-\lambda\big({-}\sqrt{t}\big)=-\frac{in}{2}\big(\phi_+\big(\sqrt{t}\big)-\phi_+\big({-}\sqrt{t}\big)\big),\end{gather*}
and
\begin{gather*}
\lambda'(0)=n\pi\psi_V(0).
\end{gather*}
We have that $\lambda(z)$ is a conformal map which sends $\sqrt{t}$ to $|s_{n,t}|/4$ and $-\sqrt{t}$ to $-|s_{n,t}|/4$, with
\begin{gather}\label{def s+}
s_{n,t}=-4i \lambda\big(\sqrt{t}\big)=-2n\big(\phi_+\big(\sqrt{t}\big)-\phi_+\big({-}\sqrt{t}\big)\big).
\end{gather}
By (\ref{phi}), we have
\begin{gather}\label{s2+}
s_{n,t}=-2\pi in \int_{-\sqrt{t}}^{\sqrt{t}}\psi_V(s)ds=-4in\pi\sqrt{t}\psi_V(0)+\bigO\big(nt^{3/2}\big)=-i\sqrt{\tau_{n,t}}+\bigO\big(nt^{3/2}\big),
\end{gather}
as $t\to 0$, $n\to\infty$, similar to the case $t<0$.

Recall the def\/inition of $\Psi^{(2)}$ in \eqref{DefnPsi2}.
We now f\/ix the lens by requiring that it is mapped to the jump contours $\widetilde\Gamma_1$, $\widetilde\Gamma_2$, $\widetilde\Gamma_3$, and $\widetilde\Gamma_4$ of
${\Psi}^{(2)}$ by $\lambda$. We search for $P$ in the form
\begin{gather}\label{def P2}
{P}(z)={E}(z){\Psi}^{(2)}(\lambda(z);s_{n,t}) {W}(z),
\end{gather}
where
\begin{gather*}
{W}(z)=\begin{cases}
\big(\big(z^2-t\big)^{\alpha/2}e^{-n\phi(z)}\big)^{\sigma_3} \sigma_3\sigma_1& \textrm{for $\Im z>0$, $\Re \lambda(z)\notin \big[{-}\frac{|s_{n,t}|}{4},\frac{|s_{n,t}|}{4}\big]$,}\vspace{1mm}\\
\big( \big(z^2-t\big)^{-\alpha/2}e^{n\phi(z)}\big)^{\sigma_3} &\textrm{for $\Im z<0$, $\Re \lambda(z)\notin \big[{-}\frac{|s_{n,t}|}{4},\frac{|s_{n,t}|}{4}\big]$,} \vspace{1mm}\\
\big(\big(z^2-t\big)^{\alpha/2}e^{\frac{\pi i\alpha}{2}}e^{-n\phi(z)}\big)^{\sigma_3} \sigma_3\sigma_1& \textrm{for $\Im z>0$, $\Re \lambda(z)\in \big[{-}\frac{|s_{n,t}|}{4},\frac{|s_{n,t}|}{4}\big]$,}\vspace{1mm}\\
\big( \big(z^2-t\big)^{-\alpha/2}e^{-\frac{\pi i\alpha}{2}}e^{n\phi(z)}\big)^{\sigma_3} &\textrm{for $\Im z<0$, $\Re \lambda(z)\in \big[{-}\frac{|s_{n,t}|}{4},\frac{|s_{n,t}|}{4}\big]$.}
 \end{cases}
\end{gather*}
The above construction was done in such a way that ${W}$ induces the correct jump relations for~$P$. ${E}$ is an analytic function which has to be such that the matching condition~(c) of the RH problem for~$P$ is valid for $z$ on the boundary of ${U}_0$.
Therefore, we recall the def\/inition of $\chi$ in~\eqref{Defnchi} and let
\begin{gather*}
{E}(z)={N}(z){W}^{-1}(z)e^{-i\lambda(z)\sigma_3} \chi(\lambda(z))^{-1}.
\end{gather*}

\begin{Prop}
${P}(z)$, defined as in \eqref{def P2}, satisfies the RH problem for ${P}$.
\end{Prop}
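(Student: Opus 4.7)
The strategy parallels the proof of Proposition \ref{P0theorem-}: verify that $E$ is analytic in $U_0$, that the jumps of $P=E\Psi^{(2)}(\lambda)W$ on $\Gamma_S\cap U_0$ reproduce those of $S$, and that the matching condition (c) holds on $\partial U_0$. Condition (b) then follows from the choice of the lens (mapped by $\lambda$ onto $\widetilde\Gamma_1^+\cup\cdots\cup\widetilde\Gamma_4^+$) together with a direct calculation of the jumps of $\Psi^{(2)}(\lambda)\,W$ on $\mathbb R\cap U_0$, using the piecewise definition of $W$ and the jump of $\Psi^{(2)}$ on $\widetilde\Gamma_5^+$.

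The matching (c) is deduced from the asymptotics (\ref{asymptoticsPsi2}) of $\Psi^{(2)}$. Since $\chi(z)$ and $e^{iz\sigma_3}$ are both diagonal, (\ref{asymptoticsPsi2}) can be rewritten as $\Psi^{(2)}(z)\,e^{-iz\sigma_3}\chi(z)^{-1}=I+\mathcal O(z^{-1})$ as $z\to\infty$. Evaluating at $z=\lambda(w)$ for $w\in\partial U_0$, where $\lambda$ is of order $n$, yields $\Psi^{(2)}(\lambda)=(I+\mathcal O(n^{-1}))\,\chi(\lambda)\,e^{i\lambda\sigma_3}$ uniformly in $w$. Substituting into $P N^{-1}=E\,\Psi^{(2)}(\lambda)\,W\,N^{-1}$ and recognizing that $E^{-1}=\chi(\lambda)\,e^{i\lambda\sigma_3}\,W\,N^{-1}$ gives $P N^{-1}=E\,(I+\mathcal O(n^{-1}))\,E^{-1}$, reducing (c) to the uniform boundedness of $E$ and $E^{-1}$ on $\partial U_0$. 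This boundedness is a consequence of the identity $-i\lambda(z)=\mp n\phi(z)+$(purely imaginary constant) contained in (\ref{def lambda+}): the combinations $e^{\pm i\lambda}e^{\mp n\phi}$ that appear inside $W^{\pm 1}\,e^{\mp i\lambda\sigma_3}$ all have unit modulus, so only the bounded factors $(z^2-t)^{\pm\alpha/2}$, $\chi(\lambda)^{\pm 1}$ and $N^{\pm 1}$ remain on the fixed contour $\partial U_0$ (note that $\pm\sqrt t$ lie in the interior of $U_0$ for $t$ small).

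The main obstacle is the analyticity of $E$, which requires tracking the jumps and singularities of its three non-trivial factors $N$, $W^{-1}$ and $\chi(\lambda)^{-1}$ (the factor $e^{-i\lambda\sigma_3}$ is analytic in $U_0$). The jumps of $N$ on $(a,-\sqrt t)\cup(\sqrt t,b)$ cancel those of $W^{-1}$ on the same segments, precisely as for $t<0$, thanks to the anti-diagonal factor $\sigma_3\sigma_1$ in the upper-half-plane definition of $W$ and the choice of branch of $(z^2-t)^{\alpha/2}$. Across the two curves through $\pm\sqrt t$ that are the preimages under $\lambda$ of $\widetilde\Gamma_6^+\cup\widetilde\Gamma_7^+$, the jumps of $\chi(\lambda)^{-1}$ are cancelled by those of $W^{-1}$, the latter being generated precisely by the extra factors $e^{\pm\pi i\alpha/2}$ in the inner-region ($\Re\lambda\in[-|s|/4,|s|/4]$) formulas for $W$. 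Finally, the singularities of $N$ at $\pm\sqrt t$ of size $\mathcal O(|z\mp\sqrt t|^{-\alpha/2\sigma_3})$, coming from (\ref{N sing}), are compensated by the factors $(z^2-t)^{\pm\alpha/2}$ in $W^{-1}$ together with the appropriate piece of $\chi(\lambda)^{-1}$, so that $E$ extends analytically through $\pm\sqrt t$ and through the whole of $U_0$.
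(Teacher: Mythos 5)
Your overall strategy is the same as the paper's: conditions (a) and (b) hold by construction, and the matching condition (c) is established by inserting the asymptotics \eqref{asymptoticsPsi2} into $PN^{-1}=E\,\Psi^{(2)}(\lambda)\,W\,N^{-1}=E(I+\mathcal O(n^{-1}))E^{-1}$ and invoking the uniform boundedness of $E$, exactly as in Proposition~\ref{P0theorem-}. The paper dispatches (a) and (b) with the sentence ``by construction,'' whereas you spell out the jump and singularity bookkeeping; this extra detail, however, contains an error.

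In the third paragraph you assert that the jump of $W^{-1}$ across $\lambda^{-1}\big(\widetilde\Gamma_6^+\cup\widetilde\Gamma_7^+\big)$ is ``generated precisely by the extra factors $e^{\pm\pi i\alpha/2}$ in the inner-region formulas for $W$,'' and that this cancels the jump of $\chi(\lambda)^{-1}$. If the branch of $(z^2-t)^{\alpha/2}$ were continuous across these curves, one would have $W_{\textrm{inner}}=e^{\frac{\pi i\alpha}{2}\sigma_3}W_{\textrm{outer}}$ (for $\Im z>0$), hence $W_{\textrm{inner}}^{-1}=W_{\textrm{outer}}^{-1}e^{-\frac{\pi i\alpha}{2}\sigma_3}$, while $\chi(\lambda)^{-1}$ changes from $e^{\frac{\pi i\alpha}{2}\sigma_3}$ (outer) to $I$ (inner). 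Since $e^{-i\lambda\sigma_3}$ commutes with these diagonal factors, the net mismatch in $E$ is $e^{\pi i\alpha\sigma_3}$, which is not the identity unless $\alpha$ is an even integer. The same sign problem appears if you instead try to make $\Psi^{(2)}(\lambda)\,W$ analytic across these curves: $\Psi^{(2)}_{\textrm{inner}}W_{\textrm{inner}}=\Psi^{(2)}_{\textrm{outer}}e^{\pi i\alpha\sigma_3}W_{\textrm{outer}}$. The cancellation actually requires the branch cuts of $(z^2-t)^{\alpha/2}$ appearing in $W$ to lie on $\lambda^{-1}\big(\widetilde\Gamma_6^+\cup\widetilde\Gamma_7^+\big)$; crossing such a cut contributes an additional $e^{\mp\pi i\alpha}$, turning $W_{\textrm{inner}}=e^{-\frac{\pi i\alpha}{2}\sigma_3}W_{\textrm{outer}}$ and restoring the needed identity. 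So the jump of $W^{-1}$ is not ``precisely'' due to the extra $e^{\pm\pi i\alpha/2}$ factors; the branch-cut contribution of $(z^2-t)^{\pm\alpha/2}$ is essential. The remainder of the proof (analyticity across $(a,-\sqrt t)\cup(\sqrt t,b)$, cancellation of the $\mathcal O(|z\mp\sqrt t|^{\mp\alpha/2})$ singularities of $N$, the unit-modulus argument for $e^{\pm i\lambda}e^{\mp n\phi}$, and the matching estimate (c)) is correct and aligned with the paper.
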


\begin{proof} Condition (a) and condition (b) hold by construction. We proceed to prove (c). By the def\/inition of $E(z)$, we have
\begin{gather*}
{P}(z) {N}^{-1}(z)={E}(z){\Psi}^{(2)}(\lambda(z);s_{n,t})e^{-i\lambda(z)\sigma_3} \chi(\lambda(z))^{-1} {E}^{-1}(z). \end{gather*}
For $z\in\partial U_0$, it follows from (\ref{asymptoticsPsi2}) that
\begin{gather*}
{\Psi}^{(2)}\big(\lambda(z);s_{n,t}\big)=\big(I+\mathcal{O}\big(n^{-1}\big)\big)\chi(\lambda(z))e^{i\lambda(z)\sigma_3}\end{gather*}
as \looseness=-1 $n\to \infty$, where the $\mathcal{O} (n^{-1})$ is uniform for $s_{n,t}$ in compact subsets of $(0,-i\infty)$ and $\lambda(z)$ suf\/f\/i\-cient\-ly large. Using the same calculations as in the proof of Proposition~\ref{P0theorem-} we f\/ind that~$E$ is bounded on~$\partial U_0$ and so the result follows in the same manner as in the proof of Proposition~\ref{P0theorem-}.
\end{proof}

\subsection{Small norm RH problem}
A small norm RH problem can be constructed in a similar way as in Section~\ref{smallnorm}: def\/ine
\begin{gather*}
{R}(z)=\begin{cases}
{S}(z){P}(z)^{-1}& \mbox{for $z\in {U}_0$,}\\
S(z){P}^{(a)}(z)^{-1}& \mbox{for $z\in {U}_a$,}\\
S(z){P}^{(b)}(z)^{-1}& \mbox{for $z\in {U}_b$,}\\
S(z){N}(z)^{-1}&\mbox{for $z\in \mathbb{C} {\setminus} ({U}_0\cup {U}_a\cup {U}_b)$}.
 \end{cases}
\end{gather*}

Then $R$ satisf\/ies the following RH problem, similar to the case where $\tau<0$.
\subsubsection*{RH problem for $\boldsymbol{R}$}
\begin{itemize}\itemsep=0pt
\item[(a)] $R$ is analytic on $\mathbb{C}{\setminus} \Gamma_R$ where \begin{gather*} \Gamma_R= \partial U_a\cup \partial U_b \cup \partial U_0 \cup (\Gamma_S{\setminus} ((a,b)\cup U_0\cup U_a \cup U_b)).\end{gather*}
\item[(b)] $R$ has the following jump relations on $\Gamma_R$:
\begin{gather*} R_+(z)=R_-(z)\big(I+\mathcal{O}\big(n^{-1}\big)\big)\end{gather*}
as $n\to\infty$, uniformly for $z\in\Gamma_R$, in the double scaling limit where $\tau_{n,t}\to\tau>0$.
\item[(c)] $R(z)$ has the following asymptotics as $z\to \infty$:
\begin{gather*}
R(z)= I+\mathcal{O}\big(z^{-1}\big).
\end{gather*}
\end{itemize}

In the double scaling limit where $n\to\infty$ and at the same time $t\to 0$ in such a way that $n^2t$ tends to a non-zero constant, it follows that
\begin{gather}\label{R as 2}
{R}(z)=I+\mathcal{O}\big(n^{-1}\big),
\end{gather}
uniformly for $z$ of\/f the jump contour for $R$.
As for $t<0$, one can extend this result to the cases where $n\to \infty$ and $t\to 0$ in such a way that $n^2t\to \infty$ and in such a way that $n^2t\to 0$. We do not give the details here, as the arguments are very similar to those in \cite{CK}: see equations (5.1), (5.17), (5.18), (5.25) in \cite{CK} for $n^2t\to\infty$ and equations (6.1), (6.28), (6.32), and (4.6) in~\cite{CK} for $n^2t\to 0$.

\subsection{Proof of Theorem \ref{thm lim kernel}} \label{pf main thm +}

In the same way as when $t<0$, we can f\/ind an expression for the correlation kernel in terms of~${S}$ when $t>0$:
\begin{gather}
{K}_n(x,y)=\frac{1}{2\pi i(x-y)}\begin{pmatrix} -\dfrac{e^{n\phi_+(y)}}{|y^2-t|^{\alpha/2}}\delta(y)& \dfrac{|y^2 -t|^{\alpha/2}}{e^{n\phi_+(y)}} \end{pmatrix}\nonumber\\
\hphantom{{K}_n(x,y)=}{} \times{S}_+^{-1}(y){S}_+(x) \begin{pmatrix}|x^2 -t|^{\alpha/2}e^{-n\phi_+(x)}\\ e^{n\phi_+(x)} |x^2-t|^{-\alpha/2} \delta(x) \end{pmatrix}\label{correlkernhat}
\end{gather}
where
\begin{gather} \delta(x)= \begin{cases} 0 & \textrm{for $x\in \big({-}\sqrt{t},\sqrt{t}\big)$,}\\
1& \textrm{for $x \in \big(a,-\sqrt{t}\big)\cup\big(\sqrt{t},b\big)$.}
\end{cases} \end{gather}

Substituting the small norm RH solution $R$ into the formula for the correlation kernel in~(\ref{correlkernhat}), one obtains
\begin{gather}
{K}_n(x,y)=\frac{1}{2\pi i(x-y)} \begin{pmatrix} -\dfrac{e^{n\phi_+(y)}}{|y^2-t|^{\alpha/2}}\delta(y)& \dfrac{|y^2 -t|^{\alpha/2}}{e^{n\phi_+(y)}} \end{pmatrix}{W}_+^{-1}(y)\big({\Psi}_+^{(2)}\big)^{-1}({\lambda(y)};s_{n,t})\!\!\label{EqnforkernelPsi2}\\
\hphantom{{K}_n(x,y)=}{}\times {E}_+^{-1}(y)R_+(y)R^{-1}_+(x){E}(x){\Psi}_+^{(2)}({\lambda(x)};s_{n,t}){W}(x) \begin{pmatrix}e^{-n\phi_+(x)}|x^2 -t|^{\alpha/2}\\ e^{n\phi_+(x)} |x^2-t|^{-\alpha/2} \delta(x) \end{pmatrix} .\nonumber
\end{gather}
From the asymptotic behavior \eqref{R as 2} and the analyticity of $R$ and $y$, we have
\begin{gather*}
{E}_+^{-1}(y)R_+(y)R_+(x){E}(x)
=I+\mathcal{O}(x-y),
\end{gather*}
as $x\to y$ for $n$ suf\/f\/iciently large.
Substituting this into (\ref{EqnforkernelPsi2}) along with the def\/inition for ${W}$, one f\/inds that
\begin{gather*}
{K}_n(x,y)=\frac{1}{2\pi i(x-y)} \begin{pmatrix} 1& \delta(y) \end{pmatrix}\big({\Psi}_+^{(2)}\big)^{-1}(\lambda(y);s_{n,t})\\
\hphantom{{K}_n(x,y)=}{} \times (I+\mathcal{O}(x-y)){\Psi}^{(2)}_+(\lambda(x);s_{n,t})\begin{pmatrix}\delta(x)\\ -1 \end{pmatrix} .
\end{gather*}
Now, we can use \eqref{def Phi +} and \eqref{DefnPsi2} to express $\Psi^{(2)}$ in terms of the functions~$\Phi_1$ and~$\Phi_2$.
We have the relation
\begin{gather}\label{Phi+ in Psi2}
\begin{pmatrix}
\Phi_1(u;\tau)\\
\Phi_2(u;\tau)
\end{pmatrix}=e^{-\frac{s}{4}\sigma_3}
\begin{cases}
\Psi^{(2)}_+\big(u;-i\sqrt{\tau}\big)
\begin{pmatrix}0\\-1\end{pmatrix},& -|s|/4<u<|s|/4,\vspace{1mm}\\
\Psi^{(2)}\big(u;-i\sqrt{\tau}\big)\begin{pmatrix}1\\-1\end{pmatrix},& u\in\mathbb R{\setminus} [-|s|/4,|s|/4],
\end{cases}
\end{gather}
and after a similar calculation to the one in Section~\ref{section: proof double}, using the fact that $\lambda'(0)=\pi n\psi_V(0)$,
one f\/inds that Theorem~\ref{thm lim kernel} holds for~$\tau_{n,t}\in (0,\infty)$.

\subsection[Proof of Theorem \ref{Thm PartitionFunction} for $t>0$]{Proof of Theorem \ref{Thm PartitionFunction} for $\boldsymbol{t>0}$}
We assume here that $\alpha> 0$. The formulas (\ref{diff idv2}), (\ref{formula V'}) which we obtained in the proof of Theorem~\ref{Thm PartitionFunction} for $t<0$, hold also in the case $t>0$, if we replace $\Psi^{(1)}$ by $\Psi^{(2)}$.
We obtain from~(\ref{DefnPsi2}) and condition d) in the RH problem for $\Psi^{+}$ that as $z \to \sqrt{t}$,
\begin{gather*}
\big(\big(\Psi^{(2)}\big)^{-1}\big(\Psi^{(2)}\big)'_z\big)_{2,2}(\lambda(z))
=-\frac{\alpha/2}{z-\sqrt{t}}+\frac{2\lambda'(z)}{|s|}\big(H^{-1}H'\big)_{2,2}(1)+\mathcal O (1 ),
\end{gather*}
where the error term is uniform for large $n$ and suf\/f\/iciently small $t$. Likewise as $z\to -\sqrt{t}$,
\begin{gather*}
\big(\big(\Psi^{(2)}\big)^{-1}\big(\Psi^{(2)}\big)'_z\big)_{2,2}(\lambda(z))
=-\frac{\alpha/2}{z+\sqrt{t}}+\frac{2\lambda'(z)}{|s|}\big(G^{-1}G'\big)_{2,2}(0)+\mathcal O (1 ),
\end{gather*}
where the error term is uniform for large $n$ and suf\/f\/iciently small $t$.
Thus, by \eqref{diff id}, we obtain
\begin{gather}
\frac{d}{dt} \log \widehat Z_n(t,\alpha,V) =-\frac{\alpha }{2z_0}\Bigg(\frac{2i\lambda'(z_0)}{s_{n,t}}\big(G^{-1}G'\big)_{2,2}(0)-\frac{2i\lambda'(-z_0)}{s_{n,t}}\big(H^{-1}H'\big)_{2,2}(1)\nonumber\\
 \hphantom{\frac{d}{dt} \log \widehat Z_n(t,\alpha,V) =}{} +\frac{\alpha }{2z_0}+\frac{n}{2} (V'(-z_0)-V'(z_0) )
+B(z_0)-B(-z_0)+\mathcal{O} (1)\Bigg)\label{diff id 4}
\end{gather}
as $n\to \infty$ and suf\/f\/iciently small~$t$.
As in the case $t<0$, we have
\begin{gather*}
\big(G^{-1}G'\big)'_{2,2}(0) =-\big(H^{-1}H'\big)'_{2,2}(1)=\frac{\sigma'(s)}{\alpha}+\frac{1}{2},
\end{gather*}
and we can use the large $s$ asymptotics given in~\cite[equation~(5.1)]{CK} to f\/ind that
\begin{gather*}
\big(G^{-1}G'\big)_{2,2}(0)=-\big(H^{-1}H'\big)_{2,2}(1)=\frac{\sigma(s)}{\alpha}+\frac{s}{2}-\frac{\alpha}{2}.
\end{gather*}
In the same manner as for $t<0$, one can show that
\begin{gather*}
B(\pm z_0)=\lim_{z\to \pm z_0}\big(\Psi^{(2)}\big)^{-1}(\lambda(z);s_{n,t})\mathcal O (1) \Psi^{(2)}(\lambda(z);s_{n,t}), \end{gather*}
which is uniformly bounded for~$n$ large and $s\in (0,-i\infty)$.
To show this for $s\to -i\infty$, one uses equations~(5.10) and (5.15)~from \cite{CK}; while
to show this for $s\to 0$, one uses equations~(6.10),~(6.19) and~(6.23)~from \cite{CK}. Substituting the formula for~$G$ and~$H$ into~(\ref{diff id 4}) with~$B$ bounded yields Theorem~\ref{Thm PartitionFunction} for $t,\alpha>0$ after integration.

\subsection*{Acknowledgements}
The authors are grateful to I.~Krasovsky and N.~Simm for useful discussions. They were supported by the European Research Council under the European Union's Seventh Framework Programme (FP/2007/2013)/ ERC Grant Agreement 307074 and by the Belgian Interuniversity Attraction Pole P07/18.

\pdfbookmark[1]{References}{ref}
\LastPageEnding

\end{document}